\documentclass[preprint,3p,12pt]{elsarticle}

\usepackage{amsmath,amssymb,amsfonts,bm,mathtools}
\usepackage{graphicx}
\usepackage{booktabs,multirow}
\usepackage{algorithm}
\usepackage[noend]{algpseudocode}
\algrenewcommand{\algorithmicrequire}{\textbf{Input:}}
\algrenewcommand{\algorithmicensure}{\textbf{Output:}}
\usepackage{float}
\usepackage{flafter}
\usepackage{placeins}
\usepackage{xcolor}
\usepackage{hyperref}
\usepackage{amsthm}

\AddToHook{env/table/begin}{\setlength{\belowcaptionskip}{6pt}}

\newtheorem{proposition}{Proposition}

\theoremstyle{remark}

\theoremstyle{plain}

\newcommand{\dd}{\mathrm{d}}
\newcommand{\panelcaption}[1]{\par\vspace{2pt}{\small #1\par}}

\newcommand{\fvec}{\bm{f}}
\newcommand{\Flux}{\bm{F}}
\newcommand{\mFlux}{\bm{\mathcal F}}
\newcommand{\W}{\bm{W}}
\newcommand{\M}{\bm{M}}
\newcommand{\rvec}{\bm{r}}
\newcommand{\avec}{\bm{a}}
\newcommand{\dvec}{\bm{d}}
\newcommand{\uvec}{\bm{u}}
\newcommand{\cvec}{\bm{c}}
\newcommand{\Cmat}{\mathbf C}

\newcommand{\Rop}{\bm{R}}
\newtheorem{corollary}{Corollary}

\begin{document}
\begin{frontmatter}
\title{A nonequilibrium-capturing unified gas-kinetic scheme I: discrete orthogonal basis construction and expansion on velocity space}
\author[HKUST1]{Hongyu Liu}
\ead{hliudv@connect.ust.hk}
\author[HKUST1,HKUST2,HKUST3]{Kun Xu\corref{cor1}}
\ead{makxu@ust.hk}

\cortext[cor1]{Corresponding author}

\address[HKUST1]{Department of Mathematics, Hong Kong University of Science and Technology, Clear Water Bay, Kowloon, Hong Kong}
\address[HKUST2]{Department of Mechanical and Aerospace Engineering, Hong Kong University of Science and Technology, Clear Water Bay, Kowloon, Hong Kong}
\address[HKUST3]{Shenzhen Research Institute, Hong Kong University of Science and Technology, Shenzhen, China}
\begin{abstract}
Velocity space discretization is a primary source of computational cost in deterministic methods for multiscale gas flows. A fixed velocity grid often retains redundant kinetic degrees of freedom (DOFs), particularly under near equilibrium conditions. To address this issue, this work develops a nonequilibrium-capturing formulation based on the unified gas kinetic scheme (UGKS), in which the number of kinetic DOFs adapts locally to the degree of nonequilibrium. A discrete orthogonal basis is constructed for numerical implementation, and the velocity distribution function is decomposed into an analytical local equilibrium and a basis expanded nonequilibrium component. The retained, locally adapted DOFs are evolved directly, substantially lowering computational overhead while preserving the discrete conservation of mass, momentum, and energy, as well as accurately capturing nonequilibrium states. The formulation retains the core multiscale coupling of molecular transport and collision inherent to UGKS, providing explicit and implicit schemes for unsteady and steady flows, respectively. Numerical tests across a wide range of flow speeds and rarefaction regimes (from continuum to highly rarefied) demonstrate close agreement with benchmark solutions. For unsteady flows, the explicit scheme achieves speedups of 12.64 to 58.50 over the standard UGKS. In steady state tests, only about $2\%$ of the original kinetic DOFs are required in continuum regions, accelerating each implicit iteration by a factor of 33.87. Even in the most rarefied cases, the scheme captures nonequilibrium behavior using only about $27\%$ of the original DOFs. 
Combined with steady acceleration techniques, the method achieves overall speedups ranging from 13.7 to approximately 2,344 relative to the standard implicit UGKS. By dramatically reducing the cost of resolving nonequilibrium effects, the proposed framework provides an efficient and practical path toward the deterministic simulation of multiscale gas dynamics.

\end{abstract}

\begin{keyword}
Unified gas kinetic scheme \sep Discrete basis for velocity expansion \sep Nonequilibrium flow \sep Deterministic kinetic method \sep Implicit iteration
\end{keyword}
\end{frontmatter}

\section{Introduction}\label{sec:intro}

In multiscale gas flows, near equilibrium regions frequently coexist with strong nonequilibrium structures such as shock waves, Knudsen layers, and sharp expansions. Resolving such nonequilibrium phenomena necessitates a kinetic description, whose computational overhead is heavily dictated by the discretization of velocity space. A central challenge lies in preserving the multiscale coupling of molecular transport and collision across all flow regimes while utilizing only the kinetic degrees of freedom (DOFs) required by the local degree of nonequilibrium.

Existing multiscale kinetic methods have made substantial progress in capturing this transport--collision coupling. The unified gas kinetic scheme (UGKS) incorporates coupled molecular transport and collision into the numerical fluxes at cell interfaces via an analytical integral solution of the Bhatnagar--Gross--Krook (BGK) model~\cite{xu2010ugks,huang2012ugks}. Consequently, it recovers the hydrodynamic Navier Stokes limit without resolving the particle mean free path, while preserving high physical fidelity in rarefied regimes. Following a similar principle, the discrete unified gas kinetic scheme (DUGKS) achieves this multiscale coupling through a discrete characteristic line integration~\cite{guo2013dugks,guo2015dugks}.
In particle based frameworks, this physical coupling has also been leveraged to curb computational cost in near equilibrium regions. The unified gas kinetic wave-particle (UGKWP) method decomposes the solution into an analytical hydrodynamic wave and stochastic particles~\cite{zhu2019ugkwp2}, where the analytical component predominates in the continuum limit. The unified stochastic particle (USP) method similarly couples molecular convection and collisional relaxation within a unified particle step~\cite{fei2020usp}. More recently, the multiscale stochastic particle method~\cite{cui2025msp} and multiscale simulation Monte Carlo method~\cite{yang2026msmc} mitigate numerical discretization errors in Fokker--Planck and Boltzmann particle frameworks, respectively, enabling the use of coarser spatial and temporal grids. These advances clearly demonstrate the power of transport--collision coupling in facilitating efficient multiscale simulations.
For steady state problems, computational efficiency is additionally governed by the convergence rate. The general synthetic iterative scheme (GSIS) accelerates steady convergence by coupling the microscopic kinetic equation with macroscopic synthetic governing equations that incorporate kinetic stress and heat flux corrections~\cite{zhu2021gsis}. Implicit UGKS and its synergy with GSIS have also substantially enhanced steady state convergence performance~\cite{zhu2016implicit,xu2021implicit,liu2026coupled}. Nevertheless, in tandem with accelerated convergence, reducing the computational effort per kinetic iteration remains an essential avenue for achieving scalable efficiency.

For the numerical methods described above, computational efficiency can be further augmented through spatial domain adaptation, which tailors the governing physical description to the local flow regime. Adaptive UGKS employs continuous Chapman--Enskog distribution functions in near equilibrium regions and discrete velocity distributions elsewhere~\cite{xiao2018adaptive}, with an implicit formulation that accelerates steady state calculations~\cite{long2024adaptive}. Similarly, adaptive DUGKS couples a Navier Stokes solver in continuum regions with a BGK solver in non-equilibrium zones~\cite{yang2023adaptive}, while adaptive GSIS identifies kinetic domains via a local Knudsen number criterion~\cite{zhang2025adaptive}. Domain adaptation has also proven effective in particle based methods: adaptive UGKWP dynamically tunes the wave--particle ratio based on local nonequilibrium~\cite{wei2023adaptive}, and adaptive USP adjusts spatial grids and time steps adaptively~\cite{feng2024adaptive}. Nonetheless, in these discrete velocity methods, the number of discrete velocity points remains fixed across all designated kinetic regions.

Additional computational gains can be achieved by adapting the velocity space resolution directly within kinetic domains. Existing strategies dynamically adjust the velocity grid according to the local flow conditions~\cite{chen2012adaptive,baranger2014refined} or employ spatially and temporally varying local meshes~\cite{brull2014local}. However, transport across non-conforming velocity grids necessitates intricate interpolation to guarantee the conservation of macroscopic moments. Another promising avenue exploits the intrinsic mathematical structure of the velocity distribution function: its local equilibrium state is determined by a few macroscopic conserved variables, leaving only the remaining component to represent nonequilibrium effects. In this spirit, the micro--macro decomposition of Bennoune et al.\cite{bennoune2008micromacro} isolates the local Maxwellian from a nonequilibrium distribution constrained to possess vanishing mass, momentum, and energy moments. Dynamical low-rank approaches have also been utilized to approximate distribution functions after factoring out a Maxwellian background\cite{einkemmer2021lowrank}. Furthermore, Liu et al.~\cite{liu2026hermite} coupled Hermite spectral expansions with UGKS for flows under external forcing. These developments motivate an analytical treatment of local equilibrium paired with an adaptive kinetic representation of the nonequilibrium state.

Building on this viewpoint, we present a nonequilibrium-capturing formulation of the unified gas kinetic scheme (NCUGKS). In this framework, the local Maxwellian is evaluated analytically, whereas the nonequilibrium component is expanded in a discrete orthogonal basis defined on each velocity quadrant. The DOF are selected in blocks based on their contribution to the nonequilibrium distribution function, with truncation governed by a prescribed tolerance. This mechanism allows the kinetic DOFs to dynamically adapt to the intensity and anisotropy of local nonequilibrium, allocating minimal DOFs near equilibrium and progressively more where steep kinetic gradients occur.

Crucially, the UGKS interface flux and macroscopic/microscopic updates are formulated directly in terms of these adaptive DOFs. Polynomial recurrence relations yield sparse transport operators, while the separable structure of the Maxwellian enables efficient analytical flux evaluation. By evolving the truncated DOFs directly, the reduction in velocity space representation translates directly into actual computational speedup while preserving the multiscale transport--collision coupling of UGKS. Notably, the number of active kinetic DOFs can vary from cell to cell without altering the underlying global velocity grid. A local constrained least-squares correction enforces that the nonequilibrium DOFs contribute strictly zero to the conserved macroscopic moments, while a shared interface flux maintains strict numerical conservation across adjacent cells regardless of differences in their active DOFs.

The framework incorporates both explicit and implicit formulations for unsteady and steady flows, respectively. For steady problems, the implicit scheme is further coupled with multi step macro prediction~\cite{liu2026waveparticleII} and conservative Anderson acceleration to expedite convergence. \mbox{Numerical} benchmarks across low to high Mach numbers and continuum to rarefied regimes demonstrate excellent agreement with reference solutions. In steady state tests, the proposed method requires only about $2\%$ of the original kinetic DOFs in continuum regimes and approximately $27\%$ even in the most rarefied cases. When integrated with steady acceleration techniques, NCUGKS achieves an overall speedup of approximately $2344$ in continuum flow relative to the standard implicit UGKS.

In this paper, Section~\ref{sec:model} introduces the reference UGKS. Sections~\ref{sec:modal}--\ref{sec:properties} develop the nonequilibrium formulation, adaptive evolution, and discrete properties. Section~\ref{sec:implicit-extension} presents the implicit extension, Section~\ref{sec:results} reports numerical results, and Section~\ref{sec:conclusions} concludes the paper.

\section{UGKS discretization}\label{sec:model}

This section introduces the kinetic model, the conservative update scheme, and the interface flux formulations that constitute the foundation of the nonequilibrium capturing method.
\subsection{Kinetic model and reduced distribution functions}\label{sec:distribution-notation}
Consider a gas flow in two spatial dimensions, defined by the position vector $\bm x=(x,y)^{\mathsf T}$ and the molecular velocity $\uvec=(u,v)^{\mathsf T}$. The evolution of the distribution function $f(\bm x,t,\uvec,\bm\xi)$ is governed by the BGK equation~\cite{bgk1954}:
\begin{equation}
 \partial_t f+\uvec\cdot\nabla_{\bm x}f=\frac{g-f}{\tau},
 \qquad \tau=\frac{\mu}{p}.
 \label{eq:bgk}
\end{equation}
Here, $\bm\xi\in\mathbb R^K$ represents the third velocity component and the internal molecular degrees of freedom. The variables $\tau$, $\mu$, and $p$ denote the relaxation time, dynamic viscosity, and pressure, respectively. The local equilibrium state $g$ is determined by the density $\rho$, flow velocity $\bm U=(U,V)^{\mathsf T}$, and temperature $T$. In terms of the peculiar velocity $\cvec=\uvec-\bm U$, the equilibrium distribution takes the Maxwellian form:
\begin{equation}
 g=\frac{\rho}{(2\pi\theta)^{(K+2)/2}}
   \exp\!\left[-\frac{|\cvec|^2+|\bm\xi|^2}{2\theta}\right],
 \qquad \theta=RT,
 \label{eq:scalar-maxwellian}
\end{equation}
where $R$ is the specific gas constant and $p=\rho\theta$. For a monatomic gas, $K=1$ and $\xi$ corresponds to the molecular velocity in the $z$-direction~\cite{zhu2016implicit}. More generally, the constant $K$ determines the specific heat ratio via $\gamma=(K+4)/(K+2)$. The viscosity is modeled using the power law $\mu=\mu_{\mathrm{ref}}(T/T_{\mathrm{ref}})^\omega$, where $\mu_{\mathrm{ref}}$ is the viscosity at the reference temperature $T_{\mathrm{ref}}$ and $\omega$ is the viscosity exponent.

Integrating over $\bm\xi$ gives two reduced distribution functions:
\begin{equation}
 h=\int_{\mathbb R^K}f\dd\bm\xi,
 \qquad
 b=\int_{\mathbb R^K}|\bm\xi|^2f\dd\bm\xi.
 \label{eq:reduction}
\end{equation}
The conservative variables are $\W=(\rho,\rho U,\rho V,\rho E)^{\mathsf T}$, with specific total energy $E=|\bm U|^2/2+(K+2)\theta/2$. Writing $\fvec=(h,b)^{\mathsf T}$, integration of Eq.~\eqref{eq:bgk} yields
\begin{equation}
 \partial_t\fvec+u\partial_x\fvec+v\partial_y\fvec
 =\frac{\M(\W)-\fvec}{\tau},
 \label{eq:reduced-bgk}
\end{equation}
where the reduced equilibrium $\M(\W,\uvec)=(M_h,M_b)^{\mathsf T}$ is given by
\begin{equation}
 \begin{aligned}
 M_h&=\frac{\rho}{2\pi\theta}
       \exp\!\left[-\frac{(u-U)^2+(v-V)^2}{2\theta}\right],\\
 M_b&=K\theta M_h.
 \end{aligned}
 \label{eq:maxwellian}
\end{equation}
The conservative variables follow from the velocity moments
\begin{equation}
 \W=\int_{\mathbb R^2}\bm\Psi(\uvec)\fvec(\uvec)\dd\uvec,
 \qquad
 \bm\Psi(\uvec)=
 \begin{pmatrix}
  1&0\\
  u&0\\
  v&0\\
  \tfrac12(u^2+v^2)&\tfrac12
 \end{pmatrix}.
 \label{eq:reduced-moments}
\end{equation}
The four rows of $\bm\Psi$ give the weights for mass, the two momentum components, and total energy. The equilibrium has the same moments,
\[
 \int_{\mathbb R^2}\bm\Psi(\uvec)\M(\W,\uvec)\dd\uvec=\W.
\]

\subsection{Finite volume discretization}\label{sec:finite-volume}
The physical domain is divided into cells $\Omega_i$ of area $|\Omega_i|$. The faces of cell $i$ form the set $\mathcal J_i$, with boundary faces in $\mathcal J_i^\partial$. A face $ij$ has length $A_{ij}$ and outward unit normal $\bm n_{ij}$. For an interior face, $j$ denotes the adjacent cell.

The velocity space discretization uses numerical quadrature with nodes $\uvec_\nu=(u_\nu,v_\nu)^{\mathsf T}$ and positive weights $w_\nu$, $\nu=0,\ldots,N_{\mathrm{vel}}-1$. With $\bm\Psi_\nu=\bm\Psi(\uvec_\nu)$, the nodal vector of the reduced distribution functions is
\[
 \fvec^{\mathrm v}
 =\bigl(h_0,\ldots,h_{N_{\mathrm{vel}}-1},
        b_0,\ldots,b_{N_{\mathrm{vel}}-1}\bigr)^{\mathsf T}
 \in\mathbb R^{2N_{\mathrm{vel}}},
\]
where $h_\nu=h(\uvec_\nu)$ and $b_\nu=b(\uvec_\nu)$. The superscript $\mathrm v$ denotes this nodal representation for the reduced distribution functions and microscopic fluxes.

Over the time step $\Delta t=t^{n+1}-t^n$, the interface distribution function $\fvec_{ij}(t,\uvec)$ determines the time averaged microscopic flux
\begin{equation}
 \Flux_{ij}^n(\uvec)
 =\frac{1}{\Delta t}\int_0^{\Delta t}
                    u_{ij}\fvec_{ij}(t,\uvec)\dd t,
 \qquad u_{ij}=\uvec\cdot\bm n_{ij}.
 \label{eq:microscopic-face-flux}
\end{equation}
Here $t$ is measured from $t^n$, and $u_{ij,\nu}=\uvec_\nu\cdot\bm n_{ij}$ is the normal velocity. The macroscopic flux follows by taking the conservative moments of the microscopic flux with the same quadrature
\begin{equation}
 \mFlux_{ij}^n
 =\sum_{\nu=0}^{N_{\mathrm{vel}}-1}
       w_\nu\bm\Psi_\nu\Flux_{ij}^n(\uvec_\nu).
 \label{eq:macroscopic-face-flux}
\end{equation}
With the microscopic flux divergence
\begin{equation}
 \bm{\mathcal D}_i^n
 =\frac{1}{|\Omega_i|}\sum_{j\in\mathcal J_i}A_{ij}\Flux_{ij}^n,
 \label{eq:flux-divergence}
\end{equation}
the cell averages evolve as
\begin{gather}
 \W_i^{n+1}
 =\W_i^n-\frac{\Delta t}{|\Omega_i|}
            \sum_{j\in\mathcal J_i}A_{ij}\mFlux_{ij}^n,
 \label{eq:macro-update}\\
 \frac{\fvec_i^{n+1}-\fvec_i^n}{\Delta t}
 +\bm{\mathcal D}_i^n
 =\frac{\M(\W_i^{n+1})-\fvec_i^{n+1}}{\tau_i^{n+1}}.
 \label{eq:distribution-update}
\end{gather}
The superscript $n$ is omitted in the local interface formulas below.

\subsection{Interface evolution}
The BGK integral solution couples molecular transport and collision in the interface flux. With $\tau$ fixed over the time step, the solution at $\bm x=\bm0$ is~\cite{xu2010ugks}
\begin{equation}
 \fvec_{ij}(t)=e^{-t/\tau}\fvec_{\mathrm{init}}(-\uvec t)
 +\frac{1}{\tau}\int_0^t e^{-(t-t')/\tau}\M(-\uvec(t-t'),t')\dd t'.
 \label{eq:integral-solution}
\end{equation}
Here $\fvec_{\mathrm{init}}$ is the initial distribution function, and $\M(\bm x,t)=\M(\W(\bm x,t),\uvec)$ for fixed $\uvec$. Linear expansions of the initial distribution function in space and the equilibrium in space and time give
\begin{equation}
 \begin{aligned}
 \fvec_{\mathrm{init}}(-\uvec t)&\simeq\fvec_0-t(u\fvec_x+v\fvec_y),\\
 \M(-\uvec(t-t'),t')
 &\simeq\M_0-(t-t')(u\M_x+v\M_y)+t'\M_t.
 \end{aligned}
 \label{eq:characteristic-expansions}
\end{equation}
The spatial gradients of the distribution function and the macroscopic variables $\W$ are reconstructed with the van Leer limiter. For the left and right interface values $\fvec_0^L$ and $\fvec_0^R$, upwinding gives
\begin{equation}
 \fvec_0(\uvec)=
 \begin{cases}
  \fvec_0^L(\uvec),&u_{ij}\ge0,\\
  \fvec_0^R(\uvec),&u_{ij}<0,
 \end{cases}
 \qquad
 \W_0=\sum_{\nu=0}^{N_{\mathrm{vel}}-1}
             w_\nu\bm\Psi_\nu\fvec_0(\uvec_\nu).
 \label{eq:interface-half-space-state}
\end{equation}
The same upwind choice is used for $\fvec_x$ and $\fvec_y$. The interface variables $\W_0$ determine $\M_0=\M(\W_0)$ and $\tau$, while their spatial gradients give $\M_x$ and $\M_y$ through the chain rule. The compatibility condition determines the temporal derivative:
\begin{equation}
 \begin{aligned}
 \W_t&=-\int_{\mathbb R^2}\bm\Psi(\uvec)
                    \bigl(u\M_x+v\M_y\bigr)\dd\uvec,\\
 \M_t&=\left.\frac{\partial\M}{\partial\W}\right|_{\W_0}\W_t.
 \end{aligned}
 \label{eq:interface-time-derivatives}
\end{equation}
Substituting Eq.~\eqref{eq:characteristic-expansions} into Eq.~\eqref{eq:integral-solution} and applying the time average in Eq.~\eqref{eq:microscopic-face-flux} gives the UGKS flux
\begin{equation}
 \begin{aligned}
 \Flux_{ij}
 &=u_{ij}\bigl[\bar c_1\M_0+\bar c_2(u\M_x+v\M_y)+\bar c_3\M_t\bigr]\\
 &\quad+u_{ij}\bigl[\bar c_4\fvec_0
                         +\bar c_5(u\fvec_x+v\fvec_y)\bigr],
 \end{aligned}
 \label{eq:reference-flux}
\end{equation}
with time coefficients
\begin{equation}
 \begin{aligned}
 \bar c_1&=1-\bar c_4,
 &\bar c_4&=\frac{\tau}{\Delta t}\bigl(1-e^{-\Delta t/\tau}\bigr),\\
 \bar c_2&=-\bar c_5-\tau\bar c_1,
 &\bar c_5&=\tau\bigl(e^{-\Delta t/\tau}-\bar c_4\bigr),\\
 \bar c_3&=\tfrac12\Delta t-\tau\bar c_1.
 \end{aligned}
 \label{eq:time-coefficients}
\end{equation}

\section{Discrete basis representation of the nonequilibrium distribution function}\label{sec:modal}
In this section, we separate the distribution function in cell $i$ into the local Maxwellian determined by $\W_i$ and a nonequilibrium part:
\begin{equation}
 \fvec_i^{\mathrm v}=\M_i^{\mathrm v}+\rvec_i^{\mathrm v},
 \qquad \M_i^{\mathrm v}=\M^{\mathrm v}(\W_i).
 \label{eq:cell-macro-micro-decomposition}
\end{equation}
The nonequilibrium vector $\rvec_i^{\mathrm v}$ is expanded in a discrete orthonormal basis.

\subsection{Projection of the nonequilibrium distribution function}\label{sec:modal-projection}
The midpoint velocity grid in \ref{app:basis-discrete-space} is divided into four quadrants:
\[
 \begin{aligned}
 Q_1&=\{\uvec_\nu:u_\nu>0,\ v_\nu>0\},
 &Q_2&=\{\uvec_\nu:u_\nu<0,\ v_\nu>0\},\\
 Q_3&=\{\uvec_\nu:u_\nu<0,\ v_\nu<0\},
 &Q_4&=\{\uvec_\nu:u_\nu>0,\ v_\nu<0\}.
 \end{aligned}
\]
The index $q\in\{1,2,3,4\}$ identifies the velocity quadrant throughout. The reason for this partition is explained in Corollary~\ref{cor:quadrant-upwind}.

On $Q_q$, the velocity nodes $(u_k^q,v_l^q)$, $0\leq k<N_u^q$ and $0\leq l<N_v^q$, form a Cartesian grid of $N_q=N_u^qN_v^q$ nodes. Here $N_u^q=N_u$ and $N_v^q=N_v$. The values of the cell averaged distribution functions $h_i$ and $b_i$ on this grid form the column vectors
\begin{equation}
 \begin{aligned}
 \fvec_i^{\mathrm v,q,h}
 &=\bigl(h_i(u_0^q,v_0^q),\ h_i(u_0^q,v_1^q),\ \ldots,\
 h_i(u_{N_u^q-1}^q,v_{N_v^q-1}^q)\bigr)^{\mathsf T}
 \in\mathbb R^{N_q},\\
 \fvec_i^{\mathrm v,q,b}
 &=\bigl(b_i(u_0^q,v_0^q),\ b_i(u_0^q,v_1^q),\ \ldots,\
 b_i(u_{N_u^q-1}^q,v_{N_v^q-1}^q)\bigr)^{\mathsf T}
 \in\mathbb R^{N_q}.
 \end{aligned}
 \label{eq:quadrant-nodal-vectors}
\end{equation}
In these vectors, $h_i(u_k^q,v_l^q)=h_{i,\nu}^q$ and $b_i(u_k^q,v_l^q)=b_{i,\nu}^q$, where $\nu=kN_v^q+l$ and $0\le\nu<N_q$. The notation $(u_\nu,v_\nu)$ and $w_\nu$ refers to these local nodes and weights when $q$ is fixed.
The corresponding quadrant components of Eq.~\eqref{eq:cell-macro-micro-decomposition} satisfy
\[
 \fvec_i^{\mathrm v,q,s}
 =\M_i^{\mathrm v,q,s}+\rvec_i^{\mathrm v,q,s},
 \qquad q=1,\ldots,4,\quad s=h,b,
\]
where $\M_i^{\mathrm v,q,s}$ is the column vector of Maxwellian values $M_s(\W_i,u_k^q,v_l^q)$, and $\rvec_i^{\mathrm v,q,s}$ is the nonequilibrium vector.

The discrete orthonormal basis is defined with respect to the following quadrature inner products. The positive quadrature weights satisfy $w_\nu=w_{u,k}^qw_{v,l}^q$, with $w_{u,k}^q=\Delta u$ and $w_{v,l}^q=\Delta v$. For $\mathbf z(u,v),\mathbf y(u,v)\in\mathbb R^{N_q}$, the two dimensional inner product is
\begin{equation}
 \begin{aligned}
 \langle\mathbf z(u,v),\mathbf y(u,v)\rangle_q
 &=\sum_{\uvec_\nu\in Q_q}w_\nu z(u_\nu,v_\nu)y(u_\nu,v_\nu)\\
 &=\mathbf z^{\mathsf T}\mathbf\Omega_q\mathbf y,
 \qquad
 \mathbf\Omega_q=\operatorname{diag}(w_\nu)_{\uvec_\nu\in Q_q}.
 \end{aligned}
 \label{eq:quadrant-inner-product}
\end{equation}
For coordinate vectors $\mathbf z_u,\mathbf y_u\in\mathbb R^{N_u^q}$ and $\mathbf z_v,\mathbf y_v\in\mathbb R^{N_v^q}$, the one dimensional inner products are
\begin{equation}
 \langle\mathbf z_u,\mathbf y_u\rangle_{w_u^q}
 =\sum_{k=0}^{N_u^q-1}w_{u,k}^q(\mathbf z_u)_k(\mathbf y_u)_k,
 \qquad
 \langle\mathbf z_v,\mathbf y_v\rangle_{w_v^q}
 =\sum_{l=0}^{N_v^q-1}w_{v,l}^q(\mathbf z_v)_l(\mathbf y_v)_l.
 \label{eq:axis-inner-products}
\end{equation}

Let $\mathbf p_{u,m_1}^q\in\mathbb R^{N_u^q}$ and $\mathbf p_{v,m_2}^q\in\mathbb R^{N_v^q}$ be the one dimensional orthonormal basis vectors of degrees $m_1$ and $m_2$, respectively, based on the inner products~\eqref{eq:axis-inner-products}. Their degree pairs $\bm m=(m_1,m_2)$ belong to
\begin{equation}
 \mathcal V_q=\{0,\ldots,N_u^q-1\}\times\{0,\ldots,N_v^q-1\},
 \label{eq:quadrant-degree-set}
\end{equation}
where $\times$ denotes the Cartesian product. The tensor product basis $\bm\phi_{\bm m}^q\in\mathbb R^{N_q}$ satisfies
\begin{equation}
 \begin{aligned}
 \bm\phi_{\bm m}^q
 &=\bm\phi_{(m_1,m_2)}^q
 =\mathbf p_{u,m_1}^q\otimes\mathbf p_{v,m_2}^q,\\
 (\bm\phi_{\bm m}^q)_\nu
 &=(\bm\phi_{(m_1,m_2)}^q)_\nu
 =(\mathbf p_{u,m_1}^q)_k(\mathbf p_{v,m_2}^q)_l,
 \qquad \nu=kN_v^q+l,\\
 \langle\bm\phi_{\bm m}^q,\bm\phi_{\bm n}^q\rangle_q
 &=\langle\bm\phi_{(m_1,m_2)}^q,\bm\phi_{(n_1,n_2)}^q\rangle_q
 =\delta_{m_1n_1}\delta_{m_2n_2}.
 \end{aligned}
 \label{eq:tensor-basis}
\end{equation}
Here $\bm n=(n_1,n_2)\in\mathcal V_q$, and $\otimes$ denotes the Kronecker product, with the velocity node order $\nu=kN_v^q+l$ used for $\fvec_i^{\mathrm v,q,s}$ in Eq.~\eqref{eq:quadrant-nodal-vectors}. The one dimensional basis construction is given in \ref{app:basis-one-dimensional}.

The nonequilibrium distribution function has the expansion
\begin{equation}
 \rvec_i^{\mathrm v,q,s}
 =\sum_{(m_1,m_2)\in\mathcal V_q}a_{i,(m_1,m_2)}^{q,s}\bm\phi_{(m_1,m_2)}^q.
 \label{eq:nonequilibrium-expansion}
\end{equation}
The coefficient $a_{i,(m_1,m_2)}^{q,s}$ is the projection of the nonequilibrium vector onto $\bm\phi_{(m_1,m_2)}^q$:
\begin{equation}
 a_{i,(m_1,m_2)}^{q,s}
 =\langle\rvec_i^{\mathrm v,q,s},\bm\phi_{(m_1,m_2)}^q\rangle_q
 =(\bm\phi_{(m_1,m_2)}^q)^{\mathsf T}\mathbf\Omega_q\rvec_i^{\mathrm v,q,s}.
 \label{eq:representation}
\end{equation}

\begin{samepage}
Equation~\eqref{eq:representation} gives the vector projection
\[
 \avec_i^{q,s}=\mathbf P_q^{\mathsf T}\mathbf\Omega_q\rvec_i^{\mathrm v,q,s},
\]
where
\[
 \begin{aligned}
 \mathbf P_q&=\bigl(
   \bm\phi_{(0,0)}^q,\ \bm\phi_{(0,1)}^q,\ \bm\phi_{(1,0)}^q,\ \ldots,\
   \bm\phi_{(N_u^q-1,N_v^q-1)}^q\bigr)
   \in\mathbb R^{N_q\times N_q},\\
 \avec_i^{q,s}&=\bigl(
   a_{i,(0,0)}^{q,s},\ a_{i,(0,1)}^{q,s},\ a_{i,(1,0)}^{q,s},\ \ldots,\
   a_{i,(N_u^q-1,N_v^q-1)}^{q,s}\bigr)^{\mathsf T}
   \in\mathbb R^{N_q}.
 \end{aligned}
\]
\end{samepage}
In both expressions, the degree pairs are ordered by increasing $m_1+m_2$, then by increasing $m_1$. The expansion and orthonormality give
\begin{equation}
 \rvec_i^{\mathrm v,q,s}=\mathbf P_q\avec_i^{q,s},
 \qquad
 \mathbf P_q^{\mathsf T}\mathbf\Omega_q\mathbf P_q=\mathbf I_{N_q}.
 \label{eq:quadrant-basis-identities}
\end{equation}

Ordering the nodes by quadrant within each reduced distribution function, the expansions take the block form
\[
 \underbrace{\begin{pmatrix}
   \rvec_i^{\mathrm v,1,h}\\\rvec_i^{\mathrm v,2,h}\\
   \rvec_i^{\mathrm v,3,h}\\\rvec_i^{\mathrm v,4,h}\\[3pt]
   \rvec_i^{\mathrm v,1,b}\\\rvec_i^{\mathrm v,2,b}\\
   \rvec_i^{\mathrm v,3,b}\\\rvec_i^{\mathrm v,4,b}
 \end{pmatrix}}_{\rvec_i^{\mathrm v}}
 =
 \underbrace{\begin{pmatrix}
   \mathbf P_1&&&&&&&0\\
   &\mathbf P_2&&&&&&\\
   &&\mathbf P_3&&&&&\\
   &&&\mathbf P_4&&&&\\[3pt]
   &&&&\mathbf P_1&&&\\
   &&&&&\mathbf P_2&&\\
   &&&&&&\mathbf P_3&\\
   0&&&&&&&\mathbf P_4
 \end{pmatrix}}_{\mathbf P}
 \underbrace{\begin{pmatrix}
   \avec_i^{1,h}\\\avec_i^{2,h}\\\avec_i^{3,h}\\\avec_i^{4,h}\\[3pt]
   \avec_i^{1,b}\\\avec_i^{2,b}\\\avec_i^{3,b}\\\avec_i^{4,b}
 \end{pmatrix}}_{\avec_i}.
\]
Here $\rvec_i^{\mathrm v},\avec_i\in\mathbb R^{2N_{\mathrm{vel}}}$ and $\mathbf P\in\mathbb R^{2N_{\mathrm{vel}}\times2N_{\mathrm{vel}}}$, where $N_{\mathrm{vel}}=\sum_{q=1}^4N_q$. The weight matrix $\mathbf\Omega\in\mathbb R^{2N_{\mathrm{vel}}\times2N_{\mathrm{vel}}}$ has diagonal entries $w_{u,k}^qw_{v,l}^q$, repeated for the two reduced distribution functions:
\[
 \mathbf\Omega=
 \begin{pmatrix}
  w_{u,0}^1w_{v,0}^1&&&&&0\\
  &\ddots&&&&\\
  &&w_{u,N_u^4-1}^4w_{v,N_v^4-1}^4&&&\\[3pt]
  &&&w_{u,0}^1w_{v,0}^1&&\\
  &&&&\ddots&\\
  0&&&&&w_{u,N_u^4-1}^4w_{v,N_v^4-1}^4
 \end{pmatrix}.
\]
Projecting the vector $\mathbf z^{\mathrm v}$ onto the discrete basis gives the coefficient vector $\hat{\mathbf z}$:
\[
 \hat{\mathbf z}=\mathbf P^{\mathsf T}\mathbf\Omega\mathbf z^{\mathrm v},
 \qquad
 \hat{\mathbf z}^{q,s}=\mathbf P_q^{\mathsf T}\mathbf\Omega_q\mathbf z^{\mathrm v,q,s}.
\]
Here $s=h,b$. The decomposition~\eqref{eq:cell-macro-micro-decomposition} becomes
\[
 \hat{\fvec}_i=\hat{\M}_i+\avec_i,
 \qquad \hat{\rvec}_i=\avec_i.
\]

\subsection{Conservative moments}\label{sec:modal-moments}
The $\psi$-moment of the nonequilibrium distribution function is
\begin{equation}
 \langle\rvec_i^{\mathrm v,q,s},\bm\psi^q\rangle_q
 =\sum_{(m_1,m_2)\in\mathcal V_q}a_{i,(m_1,m_2)}^{q,s}\ell_{(m_1,m_2)}^q(\psi),
 \qquad \ell_{(m_1,m_2)}^q(\psi)=\langle\bm\psi^q,\bm\phi_{(m_1,m_2)}^q\rangle_q.
 \label{eq:moment-rows}
\end{equation}
Here $\mathcal V_q$ is the degree set in Eq.~\eqref{eq:quadrant-degree-set}, and $\bm\psi^q\in\mathbb R^{N_q}$ has components $(\bm\psi^q)_\nu=\psi(u_k^q,v_l^q)$.

\begin{proposition}[Separation of discrete moments]\label{prop:modal-moment-factorization}
For $\psi(u,v)=\psi_u(u)\psi_v(v)$ and the column vectors
\[
 \begin{aligned}
 \mathbf z_u&=\bigl(\psi_u(u_0^q),\ldots,\psi_u(u_{N_u^q-1}^q)\bigr)^{\mathsf T}\in\mathbb R^{N_u^q},\\
 \mathbf z_v&=\bigl(\psi_v(v_0^q),\ldots,\psi_v(v_{N_v^q-1}^q)\bigr)^{\mathsf T}\in\mathbb R^{N_v^q},
 \end{aligned}
\]
the tensor identity $\bm\psi^q=\mathbf z_u\otimes\mathbf z_v$ gives
\begin{equation}
 \ell_{(m_1,m_2)}^q(\psi)
 =\langle\bm\psi^q,\bm\phi_{(m_1,m_2)}^q\rangle_q
 =\langle\mathbf z_u,\mathbf p_{u,m_1}^q\rangle_{w_u^q}
  \langle\mathbf z_v,\mathbf p_{v,m_2}^q\rangle_{w_v^q}.
 \label{eq:modal-moment-factorization}
\end{equation}
\end{proposition}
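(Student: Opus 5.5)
The plan is to expand the two-dimensional quadrature inner product~\eqref{eq:quadrant-inner-product} directly in the local node ordering $\nu=kN_v^q+l$, and use the tensor-product structure of the weights, the basis, and the test function to split the double sum. First, verify the tensor identity $\bm\psi^q=\mathbf z_u\otimes\mathbf z_v$. This is immediate from the definition of the Kronecker product in the same ordering used in Eq.~\eqref{eq:tensor-basis}: the $\nu$-th entry of $\mathbf z_u\otimes\mathbf z_v$ is $(\mathbf z_u)_k(\mathbf z_v)_l=\psi_u(u_k^q)\psi_v(v_l^q)=\psi(u_k^q,v_l^q)=(\bm\psi^q)_\nu$. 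This holds because the quadrant nodes form a Cartesian grid, so $u_\nu=u_k^q$ and $v_\nu=v_l^q$.

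Next, write the inner product as a double sum over $k=0,\dots,N_u^q-1$ and $l=0,\dots,N_v^q-1$. Substitute $w_\nu=w_{u,k}^qw_{v,l}^q$ and $(\bm\phi^q_{(m_1,m_2)})_\nu=(\mathbf p^q_{u,m_1})_k(\mathbf p^q_{v,m_2})_l$ from Eq.~\eqref{eq:tensor-basis}. The summand then becomes the product
\[
\bigl[w_{u,k}^q(\mathbf z_u)_k(\mathbf p^q_{u,m_1})_k\bigr]\bigl[w_{v,l}^q(\mathbf z_v)_l(\mathbf p^q_{v,m_2})_l\bigr],
\]
a function of $k$ times a function of $l$. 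Since the index set is the full product $\{0,\dots,N_u^q-1\}\times\{0,\dots,N_v^q-1\}$, the double sum factors into the product of the two single sums. These are exactly the one-dimensional inner products~\eqref{eq:axis-inner-products}, which yields Eq.~\eqref{eq:modal-moment-factorization}.

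Equivalently, the argument can be phrased in matrix form: $\mathbf\Omega_q=\mathbf\Omega_{u}^q\otimes\mathbf\Omega_{v}^q$ with $\mathbf\Omega_u^q=\operatorname{diag}(w_{u,k}^q)$, combined with the mixed-product rule $(\mathbf A\otimes\mathbf B)^{\mathsf T}(\mathbf C\otimes\mathbf D)(\mathbf E\otimes\mathbf F)=(\mathbf A^{\mathsf T}\mathbf C\mathbf E)\otimes(\mathbf B^{\mathsf T}\mathbf D\mathbf F)$ applied to column vectors, where the Kronecker product of two scalars is their product.

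There is no real obstacle here. The only point requiring care is consistency of the node ordering, namely that the Kronecker product convention matches $\nu=kN_v^q+l$ for $\bm\psi^q$, the basis and the weights alike, together with the observation that each quadrant is a full Cartesian grid, so the double sum is over a product index set. I would state this explicitly before factoring.
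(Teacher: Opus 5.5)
Your proposal is correct and matches the paper's proof: you write the quadrature inner product as a double sum over $(k,l)$, substitute $w_\nu=w_{u,k}^qw_{v,l}^q$ and the tensor-product basis entries, and factor the sum into the two one-dimensional inner products. Your explicit check of $\bm\psi^q=\mathbf z_u\otimes\mathbf z_v$ in the ordering $\nu=kN_v^q+l$ and your Kronecker mixed-product reformulation are valid additions that the paper leaves implicit.
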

\begin{proof}
The tensor product basis and quadrature give
\[
 \begin{aligned}
 \langle\bm\psi^q,\bm\phi_{(m_1,m_2)}^q\rangle_q
 &=\sum_{k=0}^{N_u^q-1}\sum_{l=0}^{N_v^q-1}
   w_{u,k}^qw_{v,l}^q
   (\mathbf z_u)_k(\mathbf z_v)_l
   (\mathbf p_{u,m_1}^q)_k(\mathbf p_{v,m_2}^q)_l\\
 &=\left[\sum_{k=0}^{N_u^q-1}w_{u,k}^q(\mathbf z_u)_k(\mathbf p_{u,m_1}^q)_k\right]
   \left[\sum_{l=0}^{N_v^q-1}w_{v,l}^q(\mathbf z_v)_l(\mathbf p_{v,m_2}^q)_l\right],
 \end{aligned}
\]
which equals the right hand side of Eq.~\eqref{eq:modal-moment-factorization}.
\end{proof}

\begin{proposition}[Degree restrictions on polynomial moments]\label{prop:vanishing-polynomial-moments}
For nonnegative integers $r_1,r_2$ and $(m_1,m_2)\in\mathcal V_q$,
\begin{equation}
 \ell_{(m_1,m_2)}^q(u^{r_1}v^{r_2})=0
 \quad\text{if}\quad m_1>r_1\ \text{or}\ m_2>r_2.
 \label{eq:vanishing-polynomial-moments}
\end{equation}
\end{proposition}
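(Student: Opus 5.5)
The plan is to reduce the two dimensional moment to a product of one dimensional moments and then use the defining property of the one dimensional orthonormal basis. Take $\psi(u,v)=u^{r_1}v^{r_2}$, which is separable with $\psi_u(u)=u^{r_1}$ and $\psi_v(v)=v^{r_2}$. Proposition~\ref{prop:modal-moment-factorization} then gives
\[
 \ell_{(m_1,m_2)}^q(u^{r_1}v^{r_2})
 =\langle\mathbf z_u,\mathbf p_{u,m_1}^q\rangle_{w_u^q}
  \langle\mathbf z_v,\mathbf p_{v,m_2}^q\rangle_{w_v^q},
\]
where $(\mathbf z_u)_k=(u_k^q)^{r_1}$ and $(\mathbf z_v)_l=(v_l^q)^{r_2}$. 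It therefore suffices to show that the first factor vanishes when $m_1>r_1$. The second factor vanishes when $m_2>r_2$ by the same argument, and either case makes the product zero.

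The key step is the one dimensional claim: if $m>r$, then $\langle\mathbf z_u,\mathbf p_{u,m}^q\rangle_{w_u^q}=0$ for the nodal vector $\mathbf z_u$ of $u^r$. I will use the construction of \ref{app:basis-one-dimensional}. There $\mathbf p_{u,m}^q$ is the nodal vector of a degree-$m$ polynomial, obtained by Gram--Schmidt (equivalently, a three term recurrence) applied to the monomials $1,u,u^2,\ldots$ under the discrete inner product $\langle\cdot,\cdot\rangle_{w_u^q}$. Consequently $\mathrm{span}\{\mathbf p_{u,0}^q,\ldots,\mathbf p_{u,r}^q\}$ equals the span of the nodal vectors of $1,u,\ldots,u^r$, for $r<N_u^q$.

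Distinct nodes make these monomial vectors linearly independent, since the Vandermonde matrix is nonsingular. Hence $\mathbf z_u=\sum_{j=0}^{r}c_j\mathbf p_{u,j}^q$ for some coefficients $c_j$. Orthonormality then gives $\langle\mathbf z_u,\mathbf p_{u,m}^q\rangle_{w_u^q}=\sum_{j\le r}c_j\delta_{jm}=0$ whenever $m>r$.

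The only subtlety is the range of $r$. If $r_1\ge N_u^q$, then no $m_1\in\{0,\ldots,N_u^q-1\}$ satisfies $m_1>r_1$, so the statement is vacuous in that coordinate. The span argument is therefore needed only for $r_1<N_u^q$, where it holds.

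I expect the main obstacle to be making precise that the appendix basis is degree-graded, that is, that $\mathbf p_{u,m}^q$ lies in the span of the nodal vectors of $1,\ldots,u^m$. If the appendix uses a recurrence rather than explicit Gram--Schmidt, I will verify this by induction on $m$. The recurrence $\mathbf p_{u,m+1}\propto(u-\alpha_m)\mathbf p_{u,m}-\beta_m\mathbf p_{u,m-1}$ visibly raises the degree by exactly one, which gives the nested span property directly.
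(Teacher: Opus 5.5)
Your proposal is correct and follows the paper's proof. You factor the moment via Proposition~\ref{prop:modal-moment-factorization}, expand the nodal vector of $u^{r_1}$ in $\mathbf p_{u,0}^q,\ldots,\mathbf p_{u,r_1}^q$, and conclude by orthonormality; your nested-span argument from the Gram--Schmidt construction and your observation that the case $r_1\ge N_u^q$ is vacuous make explicit what the paper leaves to a citation. One small correction: the appendix orthogonalizes monomials in the normalized variable $\widetilde u^q=(u-c_u^q)/d_u^q$ rather than in $u$, but this changes nothing, since an affine change of variable preserves $\operatorname{span}\{1,u,\ldots,u^r\}$.
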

\begin{proof}
For the vectors in Proposition~\ref{prop:modal-moment-factorization} with $\psi_u(u)=u^{r_1}$ and $\psi_v(v)=v^{r_2}$, expansion in the orthonormal bases gives~\cite{dlmf18_2,gautschi1999quadrature}
\[
 \begin{aligned}
 \mathbf z_u&=\sum_{j=0}^{r_1}c_j\mathbf p_{u,j}^q,
 &c_j&=\langle\mathbf z_u,\mathbf p_{u,j}^q\rangle_{w_u^q},
 &&m_1>r_1,\\
 \mathbf z_v&=\sum_{j=0}^{r_2}d_j\mathbf p_{v,j}^q,
 &d_j&=\langle\mathbf z_v,\mathbf p_{v,j}^q\rangle_{w_v^q},
 &&m_2>r_2.
 \end{aligned}
\]
Orthonormality yields
\begin{equation}
 \begin{aligned}
 \langle\mathbf z_u,\mathbf p_{u,m_1}^q\rangle_{w_u^q}
 &=\sum_{j=0}^{r_1}c_j\langle\mathbf p_{u,j}^q,\mathbf p_{u,m_1}^q\rangle_{w_u^q}
 =\sum_{j=0}^{r_1}c_j\delta_{jm_1}=0,
 &&m_1>r_1,\\
 \langle\mathbf z_v,\mathbf p_{v,m_2}^q\rangle_{w_v^q}
 &=\sum_{j=0}^{r_2}d_j\langle\mathbf p_{v,j}^q,\mathbf p_{v,m_2}^q\rangle_{w_v^q}
 =\sum_{j=0}^{r_2}d_j\delta_{jm_2}=0,
 &&m_2>r_2.
 \end{aligned}
 \label{eq:one-dimensional-vanishing-moment}
\end{equation}
Equation~\eqref{eq:modal-moment-factorization} then gives Eq.~\eqref{eq:vanishing-polynomial-moments}.
\end{proof}

\begin{proposition}[Conservative moment representation]\label{prop:conservative-moment-representation}
The conservative moments of the nonequilibrium expansion~\eqref{eq:nonequilibrium-expansion} are $\Cmat\avec_i$, where $\Cmat\in\mathbb R^{4\times2N_{\mathrm{vel}}}$ and
\begin{equation}
 \begin{aligned}
 \Cmat\avec_i
 &=\sum_{q=1}^4
 \begin{pmatrix}
 a_{i,(0,0)}^{q,h}\ell_{(0,0)}^q(1)\\[4pt]
 \displaystyle\sum_{m=0}^1 a_{i,(m,0)}^{q,h}\ell_{(m,0)}^q(u)\\[4pt]
 \displaystyle\sum_{n=0}^1 a_{i,(0,n)}^{q,h}\ell_{(0,n)}^q(v)\\[4pt]
 \tfrac12\Bigl[\displaystyle\sum_{m=0}^2 a_{i,(m,0)}^{q,h}\ell_{(m,0)}^q(u^2)
 +\sum_{n=0}^2 a_{i,(0,n)}^{q,h}\ell_{(0,n)}^q(v^2)
 +a_{i,(0,0)}^{q,b}\ell_{(0,0)}^q(1)\Bigr]
 \end{pmatrix}.
 \end{aligned}
 \label{eq:modal-conservative-moments}
\end{equation}
Only the degrees $(0,0),(1,0),(0,1),(2,0),(0,2)$ contribute for $h$, and only $(0,0)$ contributes for $b$.
\end{proposition}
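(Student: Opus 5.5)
The proof computes the conservative moments of $\rvec_i^{\mathrm v}$ directly from the definition~\eqref{eq:reduced-moments}, using the quadrature weights restricted to each quadrant. The quadrants partition the velocity grid, so each row of $\bm\Psi$ applied to $\rvec_i^{\mathrm v}$ splits as a sum over $q=1,\ldots,4$ of quadrant inner products $\langle\cdot,\cdot\rangle_q$. In the first three rows only the $h$-component appears, with weights $1,u,v$. The fourth row reads $\tfrac12\langle\rvec_i^{\mathrm v,q,h},u^2\rangle_q+\tfrac12\langle\rvec_i^{\mathrm v,q,h},v^2\rangle_q+\tfrac12\langle\rvec_i^{\mathrm v,q,b},1\rangle_q$. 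Substituting the expansion~\eqref{eq:nonequilibrium-expansion} and using linearity, each such moment becomes $\sum_{\bm m\in\mathcal V_q}a_{i,\bm m}^{q,s}\ell^q_{\bm m}(\psi)$, exactly as in Eq.~\eqref{eq:moment-rows}. This is a linear map of $\avec_i$, so it defines the matrix $\Cmat\in\mathbb R^{4\times 2N_{\mathrm{vel}}}$.

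The next step truncates each sum using Proposition~\ref{prop:vanishing-polynomial-moments}. Each weight is a monomial $u^{r_1}v^{r_2}$, and the relevant exponents are:
\begin{itemize}
\item $\psi=1$: $(r_1,r_2)=(0,0)$, so only $\bm m=(0,0)$ survives;
\item $\psi=u$: $(1,0)$, so only $(0,0),(1,0)$ survive;
\item $\psi=v$: $(0,1)$, so only $(0,0),(0,1)$ survive;
\item $\psi=u^2$: $(2,0)$, so only $(m,0)$ with $m\le2$ survive;
\item $\psi=v^2$: $(0,2)$, so only $(0,n)$ with $n\le2$ survive.
\end{itemize}
For the $b$ component only $\psi=1$ appears, so only $(0,0)$ survives. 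Collecting the surviving terms row by row reproduces Eq.~\eqref{eq:modal-conservative-moments}. The set of contributing degrees is the union of these index sets, namely $(0,0),(1,0),(0,1),(2,0),(0,2)$ for $h$ and $(0,0)$ for $b$.

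The only point needing care is the applicability of Proposition~\ref{prop:vanishing-polynomial-moments}. It relies on the one-dimensional polynomial vectors $\mathbf z_u=(u_k^{r})_k$ lying in the span of $\mathbf p_{u,0}^q,\ldots,\mathbf p_{u,r}^q$. This holds provided the basis is the discrete orthogonal polynomial basis of \ref{app:basis-one-dimensional} and $r\le 2<N_u^q,N_v^q$, so that degrees up to two are represented on each quadrant grid. I would state this assumption explicitly (at least three nodes per axis per quadrant) and then invoke Proposition~\ref{prop:modal-moment-factorization}. The factorization also gives explicit values for the surviving coefficients. For example, $\ell^q_{(1,0)}(u)=\langle\mathbf z_u,\mathbf p^q_{u,1}\rangle_{w_u^q}\langle\bm 1,\mathbf p^q_{v,0}\rangle_{w_v^q}$, which makes $\Cmat$ sparse and computable. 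Beyond this check the argument is bookkeeping.
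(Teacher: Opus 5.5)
Your proposal is correct and follows the paper's proof essentially step for step: write each of the four moment rows as a quadrant sum of $w_\nu\psi\,r^{\mathrm v,q,s}_{i,\nu}$, substitute the expansion and interchange sums to obtain $\sum_{\bm m}a^{q,s}_{i,\bm m}\ell^q_{\bm m}(\psi)$, then prune each sum monomial by monomial with Proposition~\ref{prop:vanishing-polynomial-moments}. Your extra caveat $N_u^q,N_v^q\ge3$ is harmless but not actually needed, because if $N_\xi^q\le r$ then the condition $m>r$ never occurs on $\mathcal V_q$, so the vanishing statement holds vacuously and the nonexistent indices in the displayed formula are simply absent.
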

\begin{proof}
The discrete form of the reduced moments~\eqref{eq:reduced-moments} is
\[
 \Cmat\avec_i
 =\sum_{q=1}^4\sum_{\uvec_\nu\in Q_q}w_\nu
 \begin{pmatrix}
 r_{i,\nu}^{\mathrm v,q,h}\\
 u_\nu r_{i,\nu}^{\mathrm v,q,h}\\
 v_\nu r_{i,\nu}^{\mathrm v,q,h}\\
 \tfrac12\bigl[(u_\nu^2+v_\nu^2)r_{i,\nu}^{\mathrm v,q,h}+r_{i,\nu}^{\mathrm v,q,b}\bigr]
 \end{pmatrix},
\]
where $r_{i,\nu}^{\mathrm v,q,s}$ is the $\nu$th entry of $\rvec_i^{\mathrm v,q,s}$, for $s=h,b$. Substitution of Eq.~\eqref{eq:nonequilibrium-expansion} into a moment with weight $\psi$ yields
\[
 \begin{aligned}
 \sum_{\uvec_\nu\in Q_q}w_\nu\psi(u_\nu,v_\nu)r_{i,\nu}^{\mathrm v,q,s}
 &=\sum_{\uvec_\nu\in Q_q}w_\nu\psi(u_\nu,v_\nu)
   \sum_{(m_1,m_2)\in\mathcal V_q}a_{i,(m_1,m_2)}^{q,s}(\bm\phi_{(m_1,m_2)}^q)_\nu\\
 &=\sum_{(m_1,m_2)\in\mathcal V_q}a_{i,(m_1,m_2)}^{q,s}
   \sum_{\uvec_\nu\in Q_q}w_\nu\psi(u_\nu,v_\nu)(\bm\phi_{(m_1,m_2)}^q)_\nu\\
 &=\sum_{(m_1,m_2)\in\mathcal V_q}a_{i,(m_1,m_2)}^{q,s}\ell_{(m_1,m_2)}^q(\psi).
 \end{aligned}
\]
For nonnegative integers $r_1,r_2$, Proposition~\ref{prop:vanishing-polynomial-moments} gives
\[
 \sum_{(m_1,m_2)\in\mathcal V_q}a_{i,(m_1,m_2)}^{q,s}\ell_{(m_1,m_2)}^q(u^{r_1}v^{r_2})
 =\sum_{\substack{(m_1,m_2)\in\mathcal V_q\\m_1\le r_1,\ m_2\le r_2}}
 a_{i,(m_1,m_2)}^{q,s}\ell_{(m_1,m_2)}^q(u^{r_1}v^{r_2}).
\]
Applying this reduction to the four moment rows above yields Eq.~\eqref{eq:modal-conservative-moments}.
\end{proof}
All remaining basis vectors have zero conservative moments, so their coefficients do not affect $\Cmat\avec_i$.

\section{UGKS evolution of the nonequilibrium coefficients}\label{sec:updates}

\subsection{Projection of the interface flux}\label{sec:integral-flux-projection}
The integral solution~\eqref{eq:integral-solution} gives the UGKS flux~\eqref{eq:reference-flux}. We express this flux in the discrete basis for each quadrant $Q_q$ and reduced distribution function $s\in\{h,b\}$.

For an interior Cartesian face $ij$ shared by cells $\Omega_i$ and $\Omega_j$, let $\bm n_{ij}=(n_{ij,x},n_{ij,y})^{\mathsf T}$ be the unit normal directed from cell $i$ toward cell $j$. Writing the nodes of $Q_q$ as $(u_\nu,v_\nu)$, $\nu=0,\ldots,N_q-1$, the velocity matrices are
\[
 \mathbf D_u^q=\operatorname{diag}(u_0,\ldots,u_{N_q-1}),
 \qquad
 \mathbf D_v^q=\operatorname{diag}(v_0,\ldots,v_{N_q-1}).
\]
For $u_{ij,\nu}=n_{ij,x}u_\nu+n_{ij,y}v_\nu$, the normal velocity matrix is
\[
 \mathbf D_{ij}^q
 =\operatorname{diag}(u_{ij,0},\ldots,u_{ij,N_q-1})
 =n_{ij,x}\mathbf D_u^q+n_{ij,y}\mathbf D_v^q.
\]
The nodal vector form of Eq.~\eqref{eq:reference-flux} is
\begin{equation}
 \Flux_{ij}^{\mathrm v,q,s}
 =\mathbf D_{ij}^q\left[
 \begin{aligned}
 &\bar c_1\M_0^{\mathrm v,q,s}
  +\bar c_2(\mathbf D_u^q\M_x^{\mathrm v,q,s}+\mathbf D_v^q\M_y^{\mathrm v,q,s})
  +\bar c_3\M_t^{\mathrm v,q,s}\\
 &+\bar c_4\fvec_0^{\mathrm v,q,s}
  +\bar c_5(\mathbf D_u^q\fvec_x^{\mathrm v,q,s}+\mathbf D_v^q\fvec_y^{\mathrm v,q,s})
 \end{aligned}\right].
 \label{eq:nodal-integral-flux}
\end{equation}
Since $\mathbf P_q^{-1}=\mathbf P_q^{\mathsf T}\mathbf\Omega_q$, left multiplication by $\mathbf P_q^{-1}$ gives
\begin{equation}
 \hat{\Flux}_{ij}^{q,s}
 =\mathbf P_q^{-1}\Flux_{ij}^{\mathrm v,q,s}
 =\mathbf P_q^{-1}\mathbf D_{ij}^q\left[
 \begin{aligned}
 &\bar c_1\M_0^{\mathrm v,q,s}
  +\bar c_2(\mathbf D_u^q\M_x^{\mathrm v,q,s}+\mathbf D_v^q\M_y^{\mathrm v,q,s})
  +\bar c_3\M_t^{\mathrm v,q,s}\\
 &+\bar c_4\fvec_0^{\mathrm v,q,s}
  +\bar c_5(\mathbf D_u^q\fvec_x^{\mathrm v,q,s}+\mathbf D_v^q\fvec_y^{\mathrm v,q,s})
 \end{aligned}\right].
 \label{eq:projected-integral-flux}
\end{equation}
Substituting the discrete basis representations
\[
 \begin{aligned}
 \M_\alpha^{\mathrm v,q,s}&=\mathbf P_q\hat{\M}_\alpha^{q,s},
 \qquad \alpha=0,x,y,t,\\
 \fvec_\beta^{\mathrm v,q,s}&=\mathbf P_q\hat{\fvec}_\beta^{q,s},
 \qquad \beta=0,x,y,
 \end{aligned}
\]
into Eq.~\eqref{eq:projected-integral-flux} yields
\begin{equation}
 \hat{\Flux}_{ij}^{q,s}
 =\mathbf P_q^{-1}\mathbf D_{ij}^q\left[
 \begin{aligned}
 &\bar c_1\mathbf P_q\hat{\M}_0^{q,s}
  +\bar c_2(\mathbf D_u^q\mathbf P_q\hat{\M}_x^{q,s}+\mathbf D_v^q\mathbf P_q\hat{\M}_y^{q,s})
  +\bar c_3\mathbf P_q\hat{\M}_t^{q,s}\\
 &+\bar c_4\mathbf P_q\hat{\fvec}_0^{q,s}
  +\bar c_5(\mathbf D_u^q\mathbf P_q\hat{\fvec}_x^{q,s}+\mathbf D_v^q\mathbf P_q\hat{\fvec}_y^{q,s})
 \end{aligned}\right].
 \label{eq:expanded-polynomial-flux}
\end{equation}
With the definition
\begin{equation}
 \mathbf J_\eta^q
 =\mathbf P_q^{-1}\mathbf D_\eta^q\mathbf P_q
 =\mathbf P_q^{\mathsf T}\mathbf\Omega_q\mathbf D_\eta^q\mathbf P_q,
 \qquad \eta=u,v,ij,
 \label{eq:velocity-matrices}
\end{equation}
the identity $\mathbf P_q\mathbf P_q^{\mathsf T}\mathbf\Omega_q=\mathbf I_{N_q}$ gives
\[
 \mathbf P_q^{\mathsf T}\mathbf\Omega_q\mathbf D_{ij}^q\mathbf D_\xi^q\mathbf P_q
 =\underbrace{(\mathbf P_q^{\mathsf T}\mathbf\Omega_q\mathbf D_{ij}^q\mathbf P_q)}_{\mathbf J_{ij}^q}
  \underbrace{(\mathbf P_q^{\mathsf T}\mathbf\Omega_q\mathbf D_\xi^q\mathbf P_q)}_{\mathbf J_\xi^q}
 =\mathbf J_{ij}^q\mathbf J_\xi^q,
 \qquad \xi=u,v.
\]
Applying these identities to each term in Eq.~\eqref{eq:expanded-polynomial-flux} yields
\begin{equation}
 \begin{aligned}
 \hat{\Flux}_{ij}^{q,s}
 ={}&\bar c_1\mathbf J_{ij}^q\hat{\M}_0^{q,s}
  +\bar c_2\mathbf J_{ij}^q\mathbf J_u^q\hat{\M}_x^{q,s}
  +\bar c_2\mathbf J_{ij}^q\mathbf J_v^q\hat{\M}_y^{q,s}
  +\bar c_3\mathbf J_{ij}^q\hat{\M}_t^{q,s}\\
 &+\bar c_4\mathbf J_{ij}^q\hat{\fvec}_0^{q,s}
  +\bar c_5\mathbf J_{ij}^q\mathbf J_u^q\hat{\fvec}_x^{q,s}
  +\bar c_5\mathbf J_{ij}^q\mathbf J_v^q\hat{\fvec}_y^{q,s}.
 \end{aligned}
 \label{eq:termwise-polynomial-flux}
\end{equation}
Using the common normal velocity operator $\mathbf J_{ij}^q$, this expression becomes
\begin{equation}
 \hat{\Flux}_{ij}^{q,s}
 =\mathbf J_{ij}^q\left[
 \begin{aligned}
 &\bar c_1\hat{\M}_0^{q,s}
  +\bar c_2(\mathbf J_u^q\hat{\M}_x^{q,s}+\mathbf J_v^q\hat{\M}_y^{q,s})
  +\bar c_3\hat{\M}_t^{q,s}\\
 &+\bar c_4\hat{\fvec}_0^{q,s}
  +\bar c_5(\mathbf J_u^q\hat{\fvec}_x^{q,s}+\mathbf J_v^q\hat{\fvec}_y^{q,s})
 \end{aligned}\right].
 \label{eq:complete-polynomial-flux}
\end{equation}

For row and column indices $(m_1,m_2),(n_1,n_2)\in\mathcal V_q$, the matrix entries are
\begin{equation}
 (\mathbf J_\xi^q)_{(m_1,m_2),(n_1,n_2)}
 =\langle\mathbf D_\xi^q\bm\phi_{(n_1,n_2)}^q,\bm\phi_{(m_1,m_2)}^q\rangle_q,
 \qquad \xi=u,v.
 \label{eq:quadrant-velocity-projection}
\end{equation}

The upwind side in Eq.~\eqref{eq:interface-half-space-state} is fixed on each quadrant. The reconstructed coefficients have the decomposition
\begin{equation}
 \begin{aligned}
 \hat{\fvec}_0^{q,s}
 &=\hat{\M}^{q,s,\mathrm{up}}+\avec_0^{q,s},\\
 \hat{\fvec}_\zeta^{q,s}
 &=\hat{\M}_\zeta^{q,s,\mathrm{up}}+\avec_\zeta^{q,s},
 \qquad \zeta=x,y.
 \end{aligned}
 \label{eq:reconstructed-split}
\end{equation}
Here $\avec_0^{q,s}$ is the reconstructed nonequilibrium coefficient vector, and $\avec_x^{q,s},\avec_y^{q,s}$ are its spatial slopes. With $\M^{\mathrm{up}}$ denoting the upwind cell equilibrium and $\M_0=\M(\W_0)$ the interface collision equilibrium, Eq.~\eqref{eq:complete-polynomial-flux} becomes
\begin{equation}
 \hat{\Flux}_{ij}^{q,s}
 =\hat{\Flux}_{ij,\mathrm{neq}}^{q,s}
  +\hat{\Flux}_{ij,\mathrm{eq}}^{q,s},
 \label{eq:full-modal-flux}
\end{equation}
where
\begin{equation}
 \hat{\Flux}_{ij,\mathrm{neq}}^{q,s}
 =\mathbf J_{ij}^q\bigl[
     \bar c_4\avec_0^{q,s}
     +\bar c_5(\mathbf J_u^q\avec_x^{q,s}+\mathbf J_v^q\avec_y^{q,s})\bigr],
 \label{eq:nonequilibrium-modal-flux}
\end{equation}
\begin{equation}
 \hat{\Flux}_{ij,\mathrm{eq}}^{q,s}
 =\mathbf J_{ij}^q\left[
 \begin{aligned}
 &\bar c_1\hat{\M}_0^{q,s}
  +\bar c_2(\mathbf J_u^q\hat{\M}_x^{q,s}+\mathbf J_v^q\hat{\M}_y^{q,s})
  +\bar c_3\hat{\M}_t^{q,s}\\
 &+\bar c_4\hat{\M}^{q,s,\mathrm{up}}
  +\bar c_5(\mathbf J_u^q\hat{\M}_x^{q,s,\mathrm{up}}+\mathbf J_v^q\hat{\M}_y^{q,s,\mathrm{up}})
 \end{aligned}\right].
 \label{eq:equilibrium-modal-flux}
\end{equation}

The $\bar c_4,\bar c_5$ terms in Eq.~\eqref{eq:equilibrium-modal-flux} give the free transport flux of the Maxwellian:
\begin{equation}
 \hat{\Flux}_{ij,\mathrm{M,fr}}^{q,s}
 =\mathbf J_{ij}^q\bigl[
   \bar c_4\hat{\M}^{q,s,\mathrm{up}}
   +\bar c_5(\mathbf J_u^q\hat{\M}_x^{q,s,\mathrm{up}}
             +\mathbf J_v^q\hat{\M}_y^{q,s,\mathrm{up}})\bigr],
 \label{eq:maxwellian-free-transport}
\end{equation}
where the free transport of the equilibrium distribution is expressed analytically, as in the UGKWP wave flux $\mathbf F^{\mathrm{fr,wave}}$~\cite{zhu2019ugkwp2}.

The properties of $\mathbf J_u^q,\mathbf J_v^q$ and the evaluation of $\mathbf J_u^q\avec^{q,s},\mathbf J_v^q\avec^{q,s}$ are detailed in Section~\ref{sec:modal-velocity}.

The equilibrium flux in Eq.~\eqref{eq:equilibrium-modal-flux} is evaluated efficiently using the separable Maxwellian
\begin{equation}
 \begin{aligned}
 M_s&=C_sX(u)Y(v),\qquad C_h=\rho,\quad C_b=K\rho\theta,\\
 X(u)&=\frac{e^{-(u-U)^2/(2\theta)}}{\sqrt{2\pi\theta}},
 \qquad
 Y(v)=\frac{e^{-(v-V)^2/(2\theta)}}{\sqrt{2\pi\theta}}.
 \end{aligned}
 \label{eq:equilibrium-factors}
\end{equation}
Applying Proposition~\ref{prop:modal-moment-factorization} to the Maxwellian gives
\begin{equation}
 \begin{aligned}
 \hat M_{(m_1,m_2)}^{q,s}
 &=\langle M_s,\bm\phi_{(m_1,m_2)}^q\rangle_q
  =C_s\langle XY,\bm\phi_{(m_1,m_2)}^q\rangle_q\\
 &=C_s\langle X,\mathbf p_{u,m_1}^q\rangle_{w_u^q}
       \langle Y,\mathbf p_{v,m_2}^q\rangle_{w_v^q}
  =C_s\hat X_{m_1}^q\hat Y_{m_2}^q.
 \end{aligned}
 \label{eq:separable-coefficients}
\end{equation}
For nonnegative integers $r$, define the one dimensional projections
\begin{equation}
 \hat X_m^{q,(r)}=\langle u^rX,\mathbf p_{u,m}^q\rangle_{w_u^q},
 \qquad
 \hat Y_n^{q,(r)}=\langle v^rY,\mathbf p_{v,n}^q\rangle_{w_v^q}.
 \label{eq:equilibrium-axis-projections}
\end{equation}
Here $\hat X_m^q=\hat X_m^{q,(0)}$ and $\hat Y_n^q=\hat Y_n^{q,(0)}$. Proposition~\ref{prop:modal-moment-factorization} also gives the velocity weighted Maxwellian projection
\begin{equation}
 \begin{aligned}
 \langle u^{r_1}v^{r_2}M_s,\bm\phi_{(m_1,m_2)}^q\rangle_q
 &=C_s\langle (u^{r_1}X)(v^{r_2}Y),\bm\phi_{(m_1,m_2)}^q\rangle_q\\
 &=C_s\langle u^{r_1}X,\mathbf p_{u,m_1}^q\rangle_{w_u^q}
       \langle v^{r_2}Y,\mathbf p_{v,m_2}^q\rangle_{w_v^q}
  =C_s\hat X_{m_1}^{q,(r_1)}\hat Y_{m_2}^{q,(r_2)}.
 \end{aligned}
 \label{eq:weighted-equilibrium-coefficients}
\end{equation}
Since the velocity grid and basis are fixed, $\hat{\M}_\zeta^{q,s}=\partial_\zeta\hat{\M}^{q,s}$ for $\zeta=x,y,t$, and
\begin{equation}
 \partial_\zeta\hat M_{(m_1,m_2)}^{q,s}
 =C_{s,\zeta}\hat X_{m_1}^q\hat Y_{m_2}^q
  +C_s(\partial_\zeta\hat X_{m_1}^q)\hat Y_{m_2}^q
  +C_s\hat X_{m_1}^q(\partial_\zeta\hat Y_{m_2}^q),
 \label{eq:equilibrium-derivative-projections}
\end{equation}
where subscripts $x,y,t$ on the macroscopic variables denote their derivatives, and $C_{h,\zeta}=\rho_\zeta$, $C_{b,\zeta}=K(\theta\rho_\zeta+\rho\theta_\zeta)$. The Gaussian derivatives are
\begin{equation}
 \begin{aligned}
 \partial_\zeta X&=X\left[\frac{u-U}{\theta}U_\zeta+\frac{(u-U)^2-\theta}{2\theta^2}\theta_\zeta\right],\\
 \partial_\zeta Y&=Y\left[\frac{v-V}{\theta}V_\zeta+\frac{(v-V)^2-\theta}{2\theta^2}\theta_\zeta\right],
 \qquad \zeta=x,y,t.
 \end{aligned}
 \label{eq:equilibrium-axis-derivatives}
\end{equation}
Substitution into $\partial_\zeta M_s=C_{s,\zeta}XY+C_s(\partial_\zeta X)Y+C_sX(\partial_\zeta Y)$ gives the explicit expansion
\begin{equation}
 \begin{aligned}
 \partial_\zeta M_s
 =XY\Bigl[&C_{s,\zeta}-\frac{C_s}{\theta}(UU_\zeta+VV_\zeta)
  +\frac{C_s\theta_\zeta}{2\theta^2}(U^2+V^2-2\theta)\\
 &+\frac{C_s}{\theta}\left(U_\zeta-\frac{U\theta_\zeta}{\theta}\right)u
  +\frac{C_s}{\theta}\left(V_\zeta-\frac{V\theta_\zeta}{\theta}\right)v
  +\frac{C_s\theta_\zeta}{2\theta^2}(u^2+v^2)\Bigr].
 \end{aligned}
 \label{eq:equilibrium-derivative-polynomial}
\end{equation}
For the Maxwellian and its derivatives, Eqs.~\eqref{eq:velocity-matrices} and~\eqref{eq:weighted-equilibrium-coefficients} give, for nonnegative integers $r_1,r_2$ and $\zeta=x,y,t$,
\[
 \begin{aligned}
 \bigl((\mathbf J_u^q)^{r_1}(\mathbf J_v^q)^{r_2}\hat{\M}^{q,s}\bigr)_{(m_1,m_2)}
 &=\langle u^{r_1}v^{r_2}M_s,\bm\phi_{(m_1,m_2)}^q\rangle_q
  =C_s\hat X_{m_1}^{q,(r_1)}\hat Y_{m_2}^{q,(r_2)},\\
 \bigl((\mathbf J_u^q)^{r_1}(\mathbf J_v^q)^{r_2}\hat{\M}_\zeta^{q,s}\bigr)_{(m_1,m_2)}
 &=\langle u^{r_1}v^{r_2}\partial_\zeta M_s,\bm\phi_{(m_1,m_2)}^q\rangle_q.
 \end{aligned}
\]
Together with the derivative expansion~\eqref{eq:equilibrium-derivative-polynomial} and $\mathbf J_{ij}^q=n_{ij,x}\mathbf J_u^q+n_{ij,y}\mathbf J_v^q$, these identities give
\[
 (\hat{\Flux}_{ij,\mathrm{eq}}^{q,s})_{(m_1,m_2)}
 =n_{ij,x}\hat F_{u,\mathrm{eq},(m_1,m_2)}^{q,s}
  +n_{ij,y}\hat F_{v,\mathrm{eq},(m_1,m_2)}^{q,s}.
\]
Write $\hat X^{(r)}=\hat X_{m_1}^{q,(r)}$ and $\hat Y^{(r)}=\hat Y_{m_2}^{q,(r)}$, with superscript $(0)$ omitted. Subscripts $0$ and $\mathrm{up}$ identify the interface and reconstructed upwind states.

The $u$-direction equilibrium flux is
\begingroup
\allowdisplaybreaks[2]
\begin{align}
 \hat F_{u,\mathrm{eq},(m_1,m_2)}^{q,s}={}&\bar c_1\bigl[C_s\hat X^{(1)}\hat Y\bigr]_0\notag\\
 &+\bar c_2\left[\begin{aligned}
 &\left(C_{s,x}-\frac{C_s}{\theta}(UU_x+VV_x)+\frac{C_s\theta_x}{2\theta^2}(U^2+V^2-2\theta)\right)\hat X^{(2)}\hat Y\\
 &+\frac{C_s}{\theta}\left(U_x-\frac{U\theta_x}{\theta}\right)\hat X^{(3)}\hat Y+\frac{C_s}{\theta}\left(V_x-\frac{V\theta_x}{\theta}\right)\hat X^{(2)}\hat Y^{(1)}\\
 &+\frac{C_s\theta_x}{2\theta^2}\bigl(\hat X^{(4)}\hat Y+\hat X^{(2)}\hat Y^{(2)}\bigr)\end{aligned}\right]_{0}\notag\\
 &+\bar c_2\left[\begin{aligned}
 &\left(C_{s,y}-\frac{C_s}{\theta}(UU_y+VV_y)+\frac{C_s\theta_y}{2\theta^2}(U^2+V^2-2\theta)\right)\hat X^{(1)}\hat Y^{(1)}\\
 &+\frac{C_s}{\theta}\left(U_y-\frac{U\theta_y}{\theta}\right)\hat X^{(2)}\hat Y^{(1)}+\frac{C_s}{\theta}\left(V_y-\frac{V\theta_y}{\theta}\right)\hat X^{(1)}\hat Y^{(2)}\\
 &+\frac{C_s\theta_y}{2\theta^2}\bigl(\hat X^{(3)}\hat Y^{(1)}+\hat X^{(1)}\hat Y^{(3)}\bigr)\end{aligned}\right]_{0}\notag\\
 &+\bar c_3\left[\begin{aligned}
 &\left(C_{s,t}-\frac{C_s}{\theta}(UU_t+VV_t)+\frac{C_s\theta_t}{2\theta^2}(U^2+V^2-2\theta)\right)\hat X^{(1)}\hat Y\\
 &+\frac{C_s}{\theta}\left(U_t-\frac{U\theta_t}{\theta}\right)\hat X^{(2)}\hat Y+\frac{C_s}{\theta}\left(V_t-\frac{V\theta_t}{\theta}\right)\hat X^{(1)}\hat Y^{(1)}\\
 &+\frac{C_s\theta_t}{2\theta^2}\bigl(\hat X^{(3)}\hat Y+\hat X^{(1)}\hat Y^{(2)}\bigr)\end{aligned}\right]_{0}\notag\\
 &+\bar c_4\bigl[C_s\hat X^{(1)}\hat Y\bigr]_{\mathrm{up}}\notag\\
 &+\bar c_5\left[\begin{aligned}
 &\left(C_{s,x}-\frac{C_s}{\theta}(UU_x+VV_x)+\frac{C_s\theta_x}{2\theta^2}(U^2+V^2-2\theta)\right)\hat X^{(2)}\hat Y\\
 &+\frac{C_s}{\theta}\left(U_x-\frac{U\theta_x}{\theta}\right)\hat X^{(3)}\hat Y+\frac{C_s}{\theta}\left(V_x-\frac{V\theta_x}{\theta}\right)\hat X^{(2)}\hat Y^{(1)}\\
 &+\frac{C_s\theta_x}{2\theta^2}\bigl(\hat X^{(4)}\hat Y+\hat X^{(2)}\hat Y^{(2)}\bigr)\end{aligned}\right]_{\mathrm{up}}\notag\\
 &+\bar c_5\left[\begin{aligned}
 &\left(C_{s,y}-\frac{C_s}{\theta}(UU_y+VV_y)+\frac{C_s\theta_y}{2\theta^2}(U^2+V^2-2\theta)\right)\hat X^{(1)}\hat Y^{(1)}\\
 &+\frac{C_s}{\theta}\left(U_y-\frac{U\theta_y}{\theta}\right)\hat X^{(2)}\hat Y^{(1)}+\frac{C_s}{\theta}\left(V_y-\frac{V\theta_y}{\theta}\right)\hat X^{(1)}\hat Y^{(2)}\\
 &+\frac{C_s\theta_y}{2\theta^2}\bigl(\hat X^{(3)}\hat Y^{(1)}+\hat X^{(1)}\hat Y^{(3)}\bigr)\end{aligned}\right]_{\mathrm{up}}.
 \label{eq:equilibrium-u-flux-products}
\end{align}
\endgroup
The $v$-direction equilibrium flux is
\begingroup
\allowdisplaybreaks[2]
\begin{align}
 \hat F_{v,\mathrm{eq},(m_1,m_2)}^{q,s}={}&\bar c_1\bigl[C_s\hat X\hat Y^{(1)}\bigr]_0\notag\\*
 &+\bar c_2\left[\begin{aligned}
 &\left(C_{s,x}-\frac{C_s}{\theta}(UU_x+VV_x)+\frac{C_s\theta_x}{2\theta^2}(U^2+V^2-2\theta)\right)\hat X^{(1)}\hat Y^{(1)}\\
 &+\frac{C_s}{\theta}\left(U_x-\frac{U\theta_x}{\theta}\right)\hat X^{(2)}\hat Y^{(1)}+\frac{C_s}{\theta}\left(V_x-\frac{V\theta_x}{\theta}\right)\hat X^{(1)}\hat Y^{(2)}\\
 &+\frac{C_s\theta_x}{2\theta^2}\bigl(\hat X^{(3)}\hat Y^{(1)}+\hat X^{(1)}\hat Y^{(3)}\bigr)\end{aligned}\right]_{0}\notag\\
 &+\bar c_2\left[\begin{aligned}
 &\left(C_{s,y}-\frac{C_s}{\theta}(UU_y+VV_y)+\frac{C_s\theta_y}{2\theta^2}(U^2+V^2-2\theta)\right)\hat X\hat Y^{(2)}\\
 &+\frac{C_s}{\theta}\left(U_y-\frac{U\theta_y}{\theta}\right)\hat X^{(1)}\hat Y^{(2)}+\frac{C_s}{\theta}\left(V_y-\frac{V\theta_y}{\theta}\right)\hat X\hat Y^{(3)}\\
 &+\frac{C_s\theta_y}{2\theta^2}\bigl(\hat X^{(2)}\hat Y^{(2)}+\hat X\hat Y^{(4)}\bigr)\end{aligned}\right]_{0}\notag\\
 &+\bar c_3\left[\begin{aligned}
 &\left(C_{s,t}-\frac{C_s}{\theta}(UU_t+VV_t)+\frac{C_s\theta_t}{2\theta^2}(U^2+V^2-2\theta)\right)\hat X\hat Y^{(1)}\\
 &+\frac{C_s}{\theta}\left(U_t-\frac{U\theta_t}{\theta}\right)\hat X^{(1)}\hat Y^{(1)}+\frac{C_s}{\theta}\left(V_t-\frac{V\theta_t}{\theta}\right)\hat X\hat Y^{(2)}\\
 &+\frac{C_s\theta_t}{2\theta^2}\bigl(\hat X^{(2)}\hat Y^{(1)}+\hat X\hat Y^{(3)}\bigr)\end{aligned}\right]_{0}\notag\\
 &+\bar c_4\bigl[C_s\hat X\hat Y^{(1)}\bigr]_{\mathrm{up}}\notag\\
 &+\bar c_5\left[\begin{aligned}
 &\left(C_{s,x}-\frac{C_s}{\theta}(UU_x+VV_x)+\frac{C_s\theta_x}{2\theta^2}(U^2+V^2-2\theta)\right)\hat X^{(1)}\hat Y^{(1)}\\
 &+\frac{C_s}{\theta}\left(U_x-\frac{U\theta_x}{\theta}\right)\hat X^{(2)}\hat Y^{(1)}+\frac{C_s}{\theta}\left(V_x-\frac{V\theta_x}{\theta}\right)\hat X^{(1)}\hat Y^{(2)}\\
 &+\frac{C_s\theta_x}{2\theta^2}\bigl(\hat X^{(3)}\hat Y^{(1)}+\hat X^{(1)}\hat Y^{(3)}\bigr)\end{aligned}\right]_{\mathrm{up}}\notag\\
 &+\bar c_5\left[\begin{aligned}
 &\left(C_{s,y}-\frac{C_s}{\theta}(UU_y+VV_y)+\frac{C_s\theta_y}{2\theta^2}(U^2+V^2-2\theta)\right)\hat X\hat Y^{(2)}\\
 &+\frac{C_s}{\theta}\left(U_y-\frac{U\theta_y}{\theta}\right)\hat X^{(1)}\hat Y^{(2)}+\frac{C_s}{\theta}\left(V_y-\frac{V\theta_y}{\theta}\right)\hat X\hat Y^{(3)}\\
 &+\frac{C_s\theta_y}{2\theta^2}\bigl(\hat X^{(2)}\hat Y^{(2)}+\hat X\hat Y^{(4)}\bigr)\end{aligned}\right]_{\mathrm{up}}.
 \label{eq:equilibrium-v-flux-products}
\end{align}
\endgroup
Combining Eqs.~\eqref{eq:equilibrium-u-flux-products} and~\eqref{eq:equilibrium-v-flux-products} with Eq.~\eqref{eq:nonequilibrium-modal-flux} gives the final microscopic flux:
\begin{equation}
 \begin{aligned}
 \hat F_{ij,(m_1,m_2)}^{q,s}
 ={}&\Bigl(\mathbf J_{ij}^q[\bar c_4\avec_0^{q,s}+\bar c_5(\mathbf J_u^q\avec_x^{q,s}+\mathbf J_v^q\avec_y^{q,s})]\Bigr)_{(m_1,m_2)}\\
 &+n_{ij,x}\hat F_{u,\mathrm{eq},(m_1,m_2)}^{q,s}
  +n_{ij,y}\hat F_{v,\mathrm{eq},(m_1,m_2)}^{q,s}.
 \end{aligned}
 \label{eq:explicit-equilibrium-normal-flux}
\end{equation}
\begin{samepage}
\begin{corollary}[Macroscopic interface flux]
The conservative moments of the microscopic flux~\eqref{eq:explicit-equilibrium-normal-flux} give\label{cor:macroscopic-interface-flux}
\begin{equation}
  \mFlux_{ij}^n=\Cmat\hat{\Flux}_{ij}^n
 =\sum_{q=1}^4
 \begin{pmatrix}
 \hat F_{ij,(0,0)}^{n,q,h}\ell_{(0,0)}^q(1)\\[4pt]
 \displaystyle\sum_{m=0}^1 \hat F_{ij,(m,0)}^{n,q,h}\ell_{(m,0)}^q(u)\\[4pt]
 \displaystyle\sum_{d=0}^1 \hat F_{ij,(0,d)}^{n,q,h}\ell_{(0,d)}^q(v)\\[4pt]
 \tfrac12\Bigl[\displaystyle\sum_{m=0}^2 \hat F_{ij,(m,0)}^{n,q,h}\ell_{(m,0)}^q(u^2)
 +\sum_{d=0}^2 \hat F_{ij,(0,d)}^{n,q,h}\ell_{(0,d)}^q(v^2)
 +\hat F_{ij,(0,0)}^{n,q,b}\ell_{(0,0)}^q(1)\Bigr]
 \end{pmatrix}.
 \label{eq:modal-macroscopic-flux}
\end{equation}
Thus the macroscopic flux requires only degrees $(0,0),(1,0),(0,1),(2,0),(0,2)$ of its $h$ component and degree $(0,0)$ of its $b$ component.
\end{corollary}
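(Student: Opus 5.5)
The plan is to reduce the corollary to Proposition~\ref{prop:conservative-moment-representation} by treating the microscopic flux exactly as that proposition treats the nonequilibrium vector. Start from the definition~\eqref{eq:macroscopic-face-flux}: $\mFlux_{ij}^n$ is the quadrature moment of the nodal flux $\Flux_{ij}^{\mathrm v}$ with the weights $\bm\Psi_\nu$. Splitting the sum over nodes by quadrant $q$ and by reduced component $s\in\{h,b\}$ gives
\[
\mFlux_{ij}^n=\sum_{q=1}^4\sum_{\uvec_\nu\in Q_q}w_\nu\bm\Psi_\nu\,\bigl(F^{\mathrm v,q,h}_{ij,\nu},F^{\mathrm v,q,b}_{ij,\nu}\bigr)^{\mathsf T}.
\]
The four rows are then the $\psi$-moments $\langle\Flux^{\mathrm v,q,s}_{ij},\bm\psi^q\rangle_q$, with $\psi\in\{1,u,v,\tfrac12(u^2+v^2)\}$ for $s=h$ and $\psi=\tfrac12$ for $s=b$ in the energy row.

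Next, use the basis representation of the flux. By Eq.~\eqref{eq:projected-integral-flux}, $\hat{\Flux}_{ij}^{q,s}=\mathbf P_q^{-1}\Flux_{ij}^{\mathrm v,q,s}$, so $\Flux_{ij}^{\mathrm v,q,s}=\mathbf P_q\hat{\Flux}_{ij}^{q,s}=\sum_{\bm m\in\mathcal V_q}\hat F^{q,s}_{ij,\bm m}\bm\phi^q_{\bm m}$. The explicit form~\eqref{eq:explicit-equilibrium-normal-flux} is just these coefficients written out and plays no further role. Substituting this expansion and exchanging the finite sums gives
\[
\langle\Flux^{\mathrm v,q,s}_{ij},\bm\psi^q\rangle_q=\sum_{\bm m}\hat F^{q,s}_{ij,\bm m}\ell^q_{\bm m}(\psi),
\]
which is the same computation as in the proof of Proposition~\ref{prop:conservative-moment-representation} with $\avec_i$ replaced by $\hat{\Flux}_{ij}$. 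This yields $\mFlux_{ij}^n=\Cmat\hat{\Flux}_{ij}^n$ with the same matrix $\Cmat$.

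Then apply Proposition~\ref{prop:vanishing-polynomial-moments} row by row:
\begin{itemize}
\item For $\psi=1$, only $(0,0)$ survives.
\item For $\psi=u$, only $m_1\le1$, $m_2=0$ survive; for $\psi=v$, only $(0,0)$ and $(0,1)$ survive.
\item For $u^2$ and $v^2$, only $(m,0)$ and $(0,d)$ with $m,d\le2$ survive.
\item For the $b$ component with constant weight, only $(0,0)$ survives.
\end{itemize}
Collecting these terms, with the factor $\tfrac12$ in the energy row, gives Eq.~\eqref{eq:modal-macroscopic-flux} and the stated degree restriction.

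No step is a real obstacle. The only point needing care is that the flux coefficients really are the projections $\mathbf P_q^{\mathsf T}\mathbf\Omega_q\Flux^{\mathrm v,q,s}_{ij}$ of the same nodal flux used in Eq.~\eqref{eq:macroscopic-face-flux}. This holds because $\mathbf P_q$ is a square invertible matrix (Eq.~\eqref{eq:quadrant-basis-identities}), so no truncation enters at this stage and the representation is exact. One must also note that the weights $w_\nu$ in Eq.~\eqref{eq:macroscopic-face-flux} coincide with the diagonal of $\mathbf\Omega_q$ on each quadrant.
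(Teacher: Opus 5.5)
Your proposal is correct and follows essentially the paper's argument: expand the nodal flux as $\mathbf P_q\hat{\Flux}_{ij}^{q,s}$, exchange the finite sums to get $\sum_{\bm m}\hat F_{ij,\bm m}^{q,s}\ell_{\bm m}^q(\psi)$, and discard the vanishing terms. The only cosmetic difference is that you apply Proposition~\ref{prop:vanishing-polynomial-moments} row by row, whereas the paper cites Proposition~\ref{prop:conservative-moment-representation} with $\avec_i$ replaced by $\hat{\Flux}_{ij}^n$, which packages the same reduction.
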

\end{samepage}
\begin{proof}
The microscopic flux has the expansion
\[
 F_{ij,\nu}^{\mathrm v,n,q,s}
 =\sum_{(m_1,m_2)\in\mathcal V_q}\hat F_{ij,(m_1,m_2)}^{n,q,s}(\bm\phi_{(m_1,m_2)}^q)_\nu.
\]
Its discrete $\psi$-moment on $Q_q$ is therefore
\[
 \begin{aligned}
 \sum_{\uvec_\nu\in Q_q}w_\nu\psi(u_\nu,v_\nu)F_{ij,\nu}^{\mathrm v,n,q,s}
 &=\sum_{\uvec_\nu\in Q_q}w_\nu\psi(u_\nu,v_\nu)
   \sum_{(m_1,m_2)\in\mathcal V_q}\hat F_{ij,(m_1,m_2)}^{n,q,s}(\bm\phi_{(m_1,m_2)}^q)_\nu\\
 &=\sum_{(m_1,m_2)\in\mathcal V_q}\hat F_{ij,(m_1,m_2)}^{n,q,s}
   \sum_{\uvec_\nu\in Q_q}w_\nu\psi(u_\nu,v_\nu)(\bm\phi_{(m_1,m_2)}^q)_\nu\\
 &=\sum_{(m_1,m_2)\in\mathcal V_q}\hat F_{ij,(m_1,m_2)}^{n,q,s}\ell_{(m_1,m_2)}^q(\psi).
 \end{aligned}
\]
Applying this identity to the mass, two momentum, and energy moments in Eq.~\eqref{eq:macroscopic-face-flux} gives
\[
 \mFlux_{ij}^n=\Cmat\hat{\Flux}_{ij}^n
 =\sum_{q=1}^4\sum_{(m_1,m_2)\in\mathcal V_q}
 \begin{pmatrix}
 \hat F_{ij,(m_1,m_2)}^{n,q,h}\ell_{(m_1,m_2)}^q(1)\\[4pt]
 \hat F_{ij,(m_1,m_2)}^{n,q,h}\ell_{(m_1,m_2)}^q(u)\\[4pt]
 \hat F_{ij,(m_1,m_2)}^{n,q,h}\ell_{(m_1,m_2)}^q(v)\\[4pt]
 \tfrac12\hat F_{ij,(m_1,m_2)}^{n,q,h}\ell_{(m_1,m_2)}^q(u^2+v^2)
 +\tfrac12\hat F_{ij,(m_1,m_2)}^{n,q,b}\ell_{(m_1,m_2)}^q(1)
 \end{pmatrix}.
\]
Proposition~\ref{prop:conservative-moment-representation}, with $\avec_i$ replaced by $\hat{\Flux}_{ij}^n$, reduces the mass sum to $(0,0)$, the $u$-momentum sum to $(0,0),(1,0)$, and the $v$-momentum sum to $(0,0),(0,1)$. Only basis vectors of degrees $(0,0),(1,0),(2,0),(0,1),(0,2)$ for $h$ and $(0,0)$ for $b$ contribute to the energy sum. These reductions give Eq.~\eqref{eq:modal-macroscopic-flux}.
\end{proof}

\subsection{Velocity multiplication and upwind split}\label{sec:modal-velocity}
The flux in Eq.~\eqref{eq:nonequilibrium-modal-flux} involves products $\mathbf J_u^q\avec^{q,s}$ and $\mathbf J_v^q\avec^{q,s}$. These products are evaluated by the three term recurrence derived below. A separate discrete basis on each velocity quadrant preserves this coupling for Cartesian upwinding.

\begin{proposition}[Coupling under velocity multiplication]\label{prop:velocity-coupling}
The matrices defined by Eq.~\eqref{eq:quadrant-velocity-projection} are symmetric and satisfy
\begin{equation}
 \begin{aligned}
 (\mathbf J_u^q)_{(m_1,m_2),(n_1,n_2)}
 &=0\quad\text{if}\quad m_2\ne n_2\ \text{or}\ |m_1-n_1|>1,\\
 (\mathbf J_v^q)_{(m_1,m_2),(n_1,n_2)}
 &=0\quad\text{if}\quad m_1\ne n_1\ \text{or}\ |m_2-n_2|>1.
 \end{aligned}
 \label{eq:tensor-velocity-projection}
\end{equation}
For nonnegative integers $r_1,r_2$ and $(m_1,m_2),(n_1,n_2)\in\mathcal V_q$,
\[
 \begin{aligned}
 (\mathbf J_u^q)^{r_1}(\mathbf J_v^q)^{r_2}
 &=(\mathbf J_v^q)^{r_2}(\mathbf J_u^q)^{r_1},\\
 \bigl[(\mathbf J_u^q)^{r_1}(\mathbf J_v^q)^{r_2}\bigr]_{(m_1,m_2),(n_1,n_2)}
 &=0\quad\text{if}\quad |m_1-n_1|>r_1\ \text{or}\ |m_2-n_2|>r_2.
 \end{aligned}
\]
\end{proposition}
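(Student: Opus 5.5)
The plan is to reduce everything to one dimensional statements through the tensor structure, as in Proposition~\ref{prop:modal-moment-factorization}. First, symmetry. The matrix $\mathbf D_\xi^q$ is diagonal and $\mathbf\Omega_q$ is diagonal, so $\mathbf\Omega_q\mathbf D_\xi^q$ is symmetric. Hence $\mathbf J_\xi^q=\mathbf P_q^{\mathsf T}(\mathbf\Omega_q\mathbf D_\xi^q)\mathbf P_q$ is symmetric. Equivalently, in Eq.~\eqref{eq:quadrant-velocity-projection} one has $\langle\mathbf D_\xi^q\bm\phi_{\bm n}^q,\bm\phi_{\bm m}^q\rangle_q=\langle\bm\phi_{\bm n}^q,\mathbf D_\xi^q\bm\phi_{\bm m}^q\rangle_q$.

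Next, the sparsity pattern. Since $u_\nu=u_k^q$ depends only on $k$, we have $\mathbf D_u^q=\mathbf D_{u,1}\otimes\mathbf I_{N_v^q}$, where $\mathbf D_{u,1}=\operatorname{diag}(u_k^q)$. Therefore
\[
 (\mathbf J_u^q)_{(m_1,m_2),(n_1,n_2)}=\langle u\,\mathbf p_{u,n_1}^q,\mathbf p_{u,m_1}^q\rangle_{w_u^q}\,\langle\mathbf p_{v,n_2}^q,\mathbf p_{v,m_2}^q\rangle_{w_v^q},
\]
by the same factorization computation as in Proposition~\ref{prop:modal-moment-factorization}. The second factor is $\delta_{m_2n_2}$ by orthonormality. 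For the first factor, we use that $\mathbf p_{u,n}^q$ are discrete orthogonal polynomials (\ref{app:basis-one-dimensional}), i.e.\ nodal values of a degree-$n$ polynomial $p_n$. Then $u\,p_{n_1}$ has degree $n_1+1$ and lies in $\operatorname{span}\{\mathbf p_{u,0}^q,\ldots,\mathbf p_{u,n_1+1}^q\}$. As in Eq.~\eqref{eq:one-dimensional-vanishing-moment}, the first factor vanishes for $m_1>n_1+1$, and by symmetry for $n_1>m_1+1$. This is the three-term recurrence. One caveat: when $n_1=N_u^q-1$, the vector $u\,\mathbf p_{u,n_1}^q$ still lies in the full $N_u^q$-dimensional span, so no issue arises. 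The $\mathbf J_v^q$ case is identical with the roles of the axes swapped. The main thing to verify carefully is that the appendix basis really is the Gram--Schmidt/Stieltjes orthonormalization of monomials, which supplies this degree structure.

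For commutation, write $\mathbf J_u^q=\mathbf T_u\otimes\mathbf I$ and $\mathbf J_v^q=\mathbf I\otimes\mathbf T_v$ in the degree index. Here $\mathbf T_u$ and $\mathbf T_v$ are the one dimensional tridiagonal (Jacobi) matrices, and the entries factor as shown above. Kronecker products of this form commute, giving $(\mathbf T_u^{r_1}\otimes\mathbf I)(\mathbf I\otimes\mathbf T_v^{r_2})=\mathbf T_u^{r_1}\otimes\mathbf T_v^{r_2}$ in either order. A simpler route avoids reindexing. By Eq.~\eqref{eq:velocity-matrices}, $\mathbf J_u^q\mathbf J_v^q=\mathbf P_q^{-1}\mathbf D_u^q\mathbf D_v^q\mathbf P_q$, and diagonal matrices commute, so $(\mathbf J_u^q)^{r_1}(\mathbf J_v^q)^{r_2}=\mathbf P_q^{-1}(\mathbf D_u^q)^{r_1}(\mathbf D_v^q)^{r_2}\mathbf P_q$ is symmetric in the ordering.

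Finally, the bandwidth. The entries of the product are $[\mathbf T_u^{r_1}]_{m_1n_1}[\mathbf T_v^{r_2}]_{m_2n_2}$. A power of a tridiagonal matrix has bandwidth $r$, which follows by induction on $r$: a product of matrices with bandwidths $a$ and $b$ has bandwidth at most $a+b$. The same bound also follows directly from $\langle u^{r_1}\mathbf p_{u,n_1}^q,\mathbf p_{u,m_1}^q\rangle_{w_u^q}=0$ for $|m_1-n_1|>r_1$, using the degree argument above. This gives the stated vanishing whenever $|m_1-n_1|>r_1$ or $|m_2-n_2|>r_2$.
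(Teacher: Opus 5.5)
Your proposal is correct and follows essentially the paper's route. In both, separability reduces each entry to a product of one dimensional weighted inner products, and the Gram--Schmidt degree structure of $\mathbf p_{u,n}^q,\mathbf p_{v,n}^q$ gives the tridiagonal and banded vanishing. The similarity form $\mathbf P_q^{-1}(\mathbf D_u^q)^{r_1}(\mathbf D_v^q)^{r_2}\mathbf P_q$ then gives the commutation and, through separability again, the bandwidth bound. Your alternative Kronecker/Jacobi-matrix route is also valid, but as you note, $\mathbf P_q$ orders degrees by total degree, so $\mathbf J_u^q=\mathbf T_u\otimes\mathbf I$ holds only after a permutation of indices.
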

\begin{proof}
By separability~\eqref{eq:tensor-basis},
\begin{equation}
 \begin{aligned}
 \langle\mathbf D_u^q\bm\phi_{(m_1,m_2)}^q,\bm\phi_{(n_1,n_2)}^q\rangle_q
 &=\langle\bm\phi_{(m_1,m_2)}^q,\mathbf D_u^q\bm\phi_{(n_1,n_2)}^q\rangle_q\\[2pt]
 &=\bigl\langle\mathbf p_{u,m_1}^q,u\mathbf p_{u,n_1}^q\bigr\rangle_{w_u^q}
   \,\bigl\langle\mathbf p_{v,m_2}^q,\mathbf p_{v,n_2}^q\bigr\rangle_{w_v^q}\\[2pt]
 &=\delta_{m_2n_2}\bigl\langle\mathbf p_{u,m_1}^q,u\mathbf p_{u,n_1}^q\bigr\rangle_{w_u^q}
 =\delta_{m_2n_2}\bigl\langle u\mathbf p_{u,m_1}^q,\mathbf p_{u,n_1}^q\bigr\rangle_{w_u^q}\\[2pt]
 &=0,\qquad n_2\ne m_2\ \text{or}\ |m_1-n_1|>1.
 \end{aligned}
 \label{eq:recurrence-orthogonality}
\end{equation}
Set $\bm\phi_{(k_1,k_2)}^q=\bm0$ for $(k_1,k_2)\notin\mathcal V_q$. The orthonormal expansion and Eq.~\eqref{eq:recurrence-orthogonality} give
\[
 \begin{aligned}
 \mathbf D_u^q\bm\phi_{(m_1,m_2)}^q
 &=\sum_{(n_1,n_2)\in\mathcal V_q}
   \langle\mathbf D_u^q\bm\phi_{(m_1,m_2)}^q,\bm\phi_{(n_1,n_2)}^q\rangle_q\,\bm\phi_{(n_1,n_2)}^q\\[3pt]
 &=\sum_{\substack{(n_1,n_2)\in\mathcal V_q,\ n_2=m_2\\ |n_1-m_1|\le1}}
   \langle\mathbf D_u^q\bm\phi_{(m_1,m_2)}^q,\bm\phi_{(n_1,n_2)}^q\rangle_q\,\bm\phi_{(n_1,n_2)}^q\\[3pt]
 &=\sum_{j=-1}^{1}
   \bigl\langle\mathbf D_u^q\bm\phi_{(m_1,m_2)}^q,\bm\phi_{(m_1+j,m_2)}^q\bigr\rangle_q\,\bm\phi_{(m_1+j,m_2)}^q.
 \end{aligned}
\]
Denoting the projections for $j=-1,0,1$ by $\beta_{u,m_1}^q$, $\alpha_{u,m_1}^q$, and $\beta_{u,m_1+1}^q$, respectively, gives the three term recurrence; the $v$ expansion is analogous:
\begin{equation}
 \begin{aligned}
 \mathbf D_u^q\bm\phi_{(m_1,m_2)}^q
 &=\beta_{u,m_1}^q\bm\phi_{(m_1-1,m_2)}^q
 +\alpha_{u,m_1}^q\bm\phi_{(m_1,m_2)}^q
 +\beta_{u,m_1+1}^q\bm\phi_{(m_1+1,m_2)}^q,\\
 \mathbf D_v^q\bm\phi_{(m_1,m_2)}^q
 &=\beta_{v,m_2}^q\bm\phi_{(m_1,m_2-1)}^q
 +\alpha_{v,m_2}^q\bm\phi_{(m_1,m_2)}^q
 +\beta_{v,m_2+1}^q\bm\phi_{(m_1,m_2+1)}^q,
 \end{aligned}
 \label{eq:basis-generation}
\end{equation}
where $\beta_{\xi,0}^q=\beta_{\xi,N_\xi^q}^q=0$ for $\xi=u,v$.

For fixed $q$ and $s$, write $\avec=\avec^{q,s}\in\mathbb R^{N_q}$, with components $a_{(n_1,n_2)}$. By symmetry,
\begin{equation}
 \begin{aligned}
 (\mathbf J_\xi^q\avec)_{(m_1,m_2)}
 &=\sum_{(n_1,n_2)\in\mathcal V_q}a_{(n_1,n_2)}
 \langle\mathbf D_\xi^q\bm\phi_{(n_1,n_2)}^q,\bm\phi_{(m_1,m_2)}^q\rangle_q\\
 &=\sum_{(n_1,n_2)\in\mathcal V_q}a_{(n_1,n_2)}
 \langle\bm\phi_{(n_1,n_2)}^q,\mathbf D_\xi^q\bm\phi_{(m_1,m_2)}^q\rangle_q.
  \end{aligned}
 \label{eq:velocity-coefficient-action}
\end{equation}
Substituting Eq.~\eqref{eq:basis-generation} into Eq.~\eqref{eq:velocity-coefficient-action} and using orthonormality gives
\begin{equation}
 \begin{aligned}
 (\mathbf J_u^q\avec)_{(m_1,m_2)}
 &=\beta_{u,m_1}^q a_{(m_1-1,m_2)}
   +\alpha_{u,m_1}^q a_{(m_1,m_2)}
   +\beta_{u,m_1+1}^q a_{(m_1+1,m_2)},\\
 (\mathbf J_v^q\avec)_{(m_1,m_2)}
 &=\beta_{v,m_2}^q a_{(m_1,m_2-1)}
   +\alpha_{v,m_2}^q a_{(m_1,m_2)}
   +\beta_{v,m_2+1}^q a_{(m_1,m_2+1)},
 \end{aligned}
 \label{eq:modal-products}
\end{equation}
where coefficients outside $\mathcal V_q$ are zero. The similarity transform~\eqref{eq:velocity-matrices} gives
\[
 (\mathbf J_u^q)^{r_1}(\mathbf J_v^q)^{r_2}
 =\mathbf P_q^{-1}(\mathbf D_u^q)^{r_1}(\mathbf D_v^q)^{r_2}\mathbf P_q
 =(\mathbf J_v^q)^{r_2}(\mathbf J_u^q)^{r_1}.
\]
Separability also gives
\[
 \begin{aligned}[b]
 \bigl[(\mathbf J_u^q)^{r_1}(\mathbf J_v^q)^{r_2}\bigr]_{(m_1,m_2),(n_1,n_2)}
 &=\langle\bm\phi_{(m_1,m_2)}^q,(\mathbf D_u^q)^{r_1}(\mathbf D_v^q)^{r_2}\bm\phi_{(n_1,n_2)}^q\rangle_q\\
 &=\langle\mathbf p_{u,m_1}^q,u^{r_1}\mathbf p_{u,n_1}^q\rangle_{w_u^q}
   \langle\mathbf p_{v,m_2}^q,v^{r_2}\mathbf p_{v,n_2}^q\rangle_{w_v^q}\\
 &=\langle u^{r_1}\mathbf p_{u,m_1}^q,\mathbf p_{u,n_1}^q\rangle_{w_u^q}
   \langle v^{r_2}\mathbf p_{v,m_2}^q,\mathbf p_{v,n_2}^q\rangle_{w_v^q}\\
 &=0\quad\text{if}\quad |m_1-n_1|>r_1\ \text{or}\ |m_2-n_2|>r_2.
 \end{aligned}
 \qedhere
\]
\end{proof}

\begin{corollary}[Quadrant partition and upwinding]\label{cor:quadrant-upwind}
On each velocity quadrant $Q_q$, the projections of multiplication by $\xi^\pm=(\xi\pm|\xi|)/2$, $\xi=u,v$, are either $\mathbf J_\xi^q$ or $\bm0$. They retain the adjacent degree coupling in Eq.~\eqref{eq:tensor-velocity-projection}.
\end{corollary}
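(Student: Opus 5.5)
The plan is to use the fact that the velocity nodes of each quadrant $Q_q$ avoid the coordinate axes. The midpoint grid of the appendix has no node with $u_\nu=0$ or $v_\nu=0$, so on $Q_q$ the sign of $u_\nu$ is the same for every node, and likewise the sign of $v_\nu$. Hence $u^+_\nu=(u_\nu+|u_\nu|)/2$ equals $u_\nu$ for all $\nu$ when $q\in\{1,4\}$, and equals $0$ for all $\nu$ when $q\in\{2,3\}$. The negative part behaves the opposite way: $u^-_\nu=u_\nu$ for $q\in\{2,3\}$ and $u^-_\nu=0$ for $q\in\{1,4\}$. The $v$ direction is the same, with the pairs $\{1,2\}$ and $\{3,4\}$. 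At the level of diagonal matrices this gives
\[
 \mathbf D_{\xi^\pm}^q=\operatorname{diag}\bigl(\xi^\pm_0,\ldots,\xi^\pm_{N_q-1}\bigr)\in\{\mathbf D_\xi^q,\bm 0\},\qquad \xi=u,v.
\]

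Next I project this onto the basis. The projected operator is $\mathbf P_q^{\mathsf T}\mathbf\Omega_q\mathbf D_{\xi^\pm}^q\mathbf P_q$, defined in the same way as Eq.~\eqref{eq:velocity-matrices}. Substituting the dichotomy above gives either $\mathbf J_\xi^q$ or $\mathbf P_q^{\mathsf T}\mathbf\Omega_q\bm 0\,\mathbf P_q=\bm 0$. In the first case Proposition~\ref{prop:velocity-coupling} applies directly: the matrix is symmetric and tridiagonal in the corresponding degree index, it acts diagonally in the other index, and it is evaluated by the three-term recurrence \eqref{eq:modal-products}. The zero matrix trivially satisfies the sparsity pattern \eqref{eq:tensor-velocity-projection}. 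The same reasoning applied to $u_{ij}^\pm$ on a Cartesian face ($\bm n_{ij}=\pm\bm e_x$ or $\pm\bm e_y$) shows that the upwind choice in Eq.~\eqref{eq:interface-half-space-state} is fixed on each quadrant, consistent with Eq.~\eqref{eq:reconstructed-split}.

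The only real point to check is the sign-definiteness of the nodes. It relies on the midpoint grid in \ref{app:basis-discrete-space} being symmetric and having no node on $u=0$ or $v=0$; I would cite that construction explicitly. For contrast, I would remark that on a full grid (no partition), $\mathbf D_{\xi^\pm}$ is multiplication by a non-polynomial function. Its projection is then generally dense, because $\langle \xi^\pm \mathbf p_m,\mathbf p_n\rangle$ need not vanish for $|m-n|>1$, and that is why the partition is needed.
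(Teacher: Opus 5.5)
Your argument is correct and is essentially the paper's proof: the midpoint grid puts no node on $u=0$ or $v=0$, so $\xi^\pm$ on $Q_q$ is either $\xi$ or $0$, the projected operators are therefore $\mathbf J_\xi^q$ or $\bm 0$, Proposition~\ref{prop:velocity-coupling} supplies the adjacent degree coupling, and your closing remark corresponds to the paper's observation that $|u|\notin\operatorname{span}\{1,u\}$ across $u=0$. One small wording point: on a finite grid every nodal function is a polynomial of degree below the node count, so the precise reason sparsity is lost without the partition is that $|u|$ is not of degree one there, not that it is non-polynomial.
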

\begin{proof}
For a face normal to the positive $x$ direction, the left and right reconstructed distribution functions $\fvec_0^L$ and $\fvec_0^R$ in Eq.~\eqref{eq:interface-half-space-state} give
\[
 u\fvec_0
 =\frac{u+|u|}{2}\fvec_0^L+\frac{u-|u|}{2}\fvec_0^R
 =\frac{u}{2}(\fvec_0^L+\fvec_0^R)
  +\frac{|u|}{2}(\fvec_0^L-\fvec_0^R).
\]
Without the quadrant partition,
\[
 |u|\notin\operatorname{span}\{1,u\}\qquad\text{across }u=0,
\]
so the last term can couple nonadjacent degrees.

On each quadrant, both velocity signs are constant, so
\begin{equation}
 (\xi^+,\xi^-)=
 \begin{cases}
 (\xi,0),&\xi>0\text{ on }Q_q,\\[3pt]
 (0,\xi),&\xi<0\text{ on }Q_q,
 \end{cases}
 \qquad \xi=u,v.
 \label{eq:quadrant-upwind-split}
\end{equation}
Projection and Proposition~\ref{prop:velocity-coupling} give the result. We therefore construct the discrete basis separately on the four quadrants to preserve adjacent degree coupling for Cartesian upwinding in both coordinate directions.
\end{proof}

\subsection{Macroscopic and nonequilibrium updates}\label{sec:modal-cell-update}
The macroscopic variables are updated by Eq.~\eqref{eq:macro-update}. The flux divergence is
\[
 \begin{aligned}
 \bm{\mathcal D}_i^{\mathrm v,n}
 &=\frac{1}{|\Omega_i|}\sum_{j\in\mathcal J_i}A_{ij}\Flux_{ij}^{\mathrm v,n},\\
 \hat{\bm{\mathcal D}}_i^n
 &=\mathbf P^{-1}\bm{\mathcal D}_i^{\mathrm v,n}
 =\frac{1}{|\Omega_i|}\sum_{j\in\mathcal J_i}A_{ij}\hat{\Flux}_{ij}^n.
 \end{aligned}
\]
For the update $\fvec_i^{\mathrm v,\ast}=\M_i^{\mathrm v,n+1}+\rvec_i^{\mathrm v,\ast}$, Eq.~\eqref{eq:distribution-update} gives
\begin{equation}
 \left(1+\frac{\Delta t}{\tau_i^{n+1}}\right)\rvec_i^{\mathrm v,\ast}
 =\rvec_i^{\mathrm v,n}-\Delta t\bm{\mathcal D}_i^{\mathrm v,n}
  +\M_i^{\mathrm v,n}-\M_i^{\mathrm v,n+1}.
 \label{eq:residual-distribution-update}
\end{equation}
The superscript $\ast$ denotes the state before truncation and conservative correction. The projection $\avec_i=\mathbf P^{\mathsf T}\mathbf\Omega\rvec_i^{\mathrm v}$ gives
\begin{equation}
 \avec_i^\ast=\lambda_i\bigl[
 \avec_i^n-\Delta t\hat{\bm{\mathcal D}}_i^n
 +\hat{\M}_i^n-\hat{\M}_i^{n+1}\bigr],
 \qquad \lambda_i=\frac{\tau_i^{n+1}}{\tau_i^{n+1}+\Delta t}.
 \label{eq:explicit-candidate}
\end{equation}
The difference $\hat{\M}_i^n-\hat{\M}_i^{n+1}$ expresses the updated distribution function relative to its new equilibrium. Section~\ref{sec:properties} applies this update on the candidate coefficients, then truncates and corrects them to obtain $\avec_i^{n+1}$.

\subsection{Diffuse wall boundary condition}\label{sec:boundaries}
At a diffuse wall, outgoing velocities carry the gas side distribution function, while incoming velocities carry a Maxwellian $\M_w$ with the wall temperature $T_w$ and tangential velocity $\bm U_w$. The reconstructed outgoing coefficients are
\[
 \hat{\fvec}^{\mathrm{out}}
 =\hat{\M}^{\mathrm{out}}+\avec^{\mathrm{out}}.
\]
With the normal directed out of the gas cell, the two contributions are weighted by $u_{ij}^+=\max(u_{ij},0)$ and $u_{ij}^-=\min(u_{ij},0)$. As in Eq.~\eqref{eq:velocity-matrices}, their quadrant matrices are
\[
 \mathbf J_{ij}^{\pm,q}
 =\mathbf P_q^{-1}\mathbf D_{ij}^{\pm,q}\mathbf P_q,
 \qquad \mathbf J_{ij}^q=\mathbf J_{ij}^{+,q}+\mathbf J_{ij}^{-,q},
\]
where $\mathbf D_{ij}^{\pm,q}$ is diagonal with entries $u_{ij,\nu}^{\pm}$ on $Q_q$. Assembling the quadrant blocks for both $h$ and $b$ gives
\[
 \mathbf J_{ij}^{\pm}
 =\mathbf P^{-1}\mathbf D_{ij}^{\pm}\mathbf P
 =\mathbf P^{\mathsf T}\mathbf\Omega\mathbf D_{ij}^{\pm}\mathbf P,
 \qquad \mathbf J_{ij}=\mathbf J_{ij}^{+}+\mathbf J_{ij}^{-}.
\]
For a Cartesian wall, $u_{ij}=\pm u$ or $\pm v$, so Corollary~\ref{cor:quadrant-upwind} gives the same adjacent degree coupling.

Zero normal mass flux determines the wall density:
\begin{equation}
 \sum_{q=1}^{4}\sum_{(m_1,m_2)\in\mathcal V_q}
 \left[
  \hat f_{(m_1,m_2)}^{q,h,\mathrm{out}}\,\ell_{(m_1,m_2)}^q(u_{ij}^{+})
  +\hat M_{w,(m_1,m_2)}^{q,h}\,\ell_{(m_1,m_2)}^q(u_{ij}^{-})
 \right]=0.
 \label{eq:wall}
\end{equation}
Since $\M_w$ is proportional to its density, this is a scalar linear equation. The resulting wall flux is
\begin{equation}
 \hat{\Flux}_{ij}
 =\mathbf J_{ij}^+\hat{\fvec}^{\mathrm{out}}
  +\mathbf J_{ij}^-\hat{\M}_w.
 \label{eq:diffuse-wall-modal-flux}
\end{equation}
The decomposition~\eqref{eq:full-modal-flux} has the wall contributions
\[
 \begin{aligned}
 \hat{\Flux}_{ij,\mathrm{neq}}&=\mathbf J_{ij}^+\avec^{\mathrm{out}},\\
 \hat{\Flux}_{ij,\mathrm{eq}}&=\mathbf J_{ij}^+\hat{\M}^{\mathrm{out}}
                         +\mathbf J_{ij}^-\hat{\M}_w.
 \end{aligned}
\]
The wall flux enters the updates in Section~\ref{sec:modal-cell-update} in the same manner as an interior face flux.

\section{Adaptive evolution of nonequilibrium DOF}\label{sec:properties}
The blocks for updating the nonequilibrium DOF are determined jointly by the nonequilibrium transport candidates and the equilibrium source blocks. Block truncation and conservative correction then give the nonequilibrium DOF at the beginning of the next time step.

\subsection{Nonequilibrium DOF blocks and candidates}\label{sec:candidates}\label{sec:transport-support}
In each velocity quadrant $Q_q$, the nonequilibrium DOF are grouped into $5\times5$ degree blocks in $\mathcal V_q$, allowing smaller blocks at the degree limits:
\begin{equation}
 B_{\alpha\beta}^{q}
 =\left\{(m_1,m_2)\in\mathcal V_q:
 \begin{array}{l}
 5\alpha\le m_1<\min\{5(\alpha+1),N_u^q\},\\
 5\beta \le m_2<\min\{5(\beta+1),N_v^q\}
 \end{array}\right\}.
 \label{eq:modal-block}
\end{equation}
The width five includes all nonequilibrium degrees that can contribute to the macroscopic flux: the moment weights and the velocity products in the nonequilibrium flux each have degree at most two.
The set $\mathfrak B^q$ consists of all nonempty blocks $B_{\alpha\beta}^q$ in the partition of $\mathcal V_q$, with $\alpha,\beta$ nonnegative integers. For each component $s\in\{h,b\}$, the blocks retained in cell $i$ at $t^n$ form the set $\mathcal B_i^{n,q,s}\subseteq\mathfrak B^q$. The associated degree pairs form
\begin{equation}
 \mathcal A_i^{n,q,s}
 =\bigcup_{B\in\mathcal B_i^{n,q,s}}B,
 \qquad
 a_{i,(m_1,m_2)}^{n,q,s}=0
 \quad\text{for }(m_1,m_2)\notin\mathcal A_i^{n,q,s}.
 \label{eq:retained-block-indices}
\end{equation}
The lowest block $B_{00}^q$ is retained for both $h$ and $b$ because it contains all DOF that contribute to the conservative moments in Section~\ref{sec:modal-moments}.

The discrete basis coefficients are reconstructed in physical space, with unretained coefficients set to zero. In two dimensions, the reconstruction stencil of either cell $\zeta\in\{i,j\}$ adjacent to an interior Cartesian face $ij$ consists of $\zeta$ and its four coordinate neighbors:
\begin{equation}
 \mathcal S_{ij,\zeta}
 =\{\zeta,\zeta_x^-,\zeta_x^+,\zeta_y^-,\zeta_y^+\}.
 \label{eq:face-reconstruction-stencil}
\end{equation}
This reconstruction is supported on the blocks
\begin{equation}
 \mathcal I_{ij,\zeta}^{n,q,s}
 =\bigcup_{\ell\in\mathcal S_{ij,\zeta}}
  \mathcal B_{\ell}^{n,q,s}.
 \label{eq:face-input-support}
\end{equation}

On a Cartesian face, the nonequilibrium flux~\eqref{eq:nonequilibrium-modal-flux} contains velocity products of degree at most two. The required flux coefficients are evaluated directly on the candidate blocks using the three term recurrence~\eqref{eq:modal-products}. By Proposition~\ref{prop:velocity-coupling} and Corollary~\ref{cor:quadrant-upwind}, these velocity products couple degree pairs $(k_1,k_2)$ and $(m_1,m_2)$ only if
\begin{equation}
 |m_d-k_d|\le2,
 \qquad d=1,2.
 \label{eq:modal-bandwidth-two}
\end{equation}
A single layer of neighboring $5\times5$ blocks covers this coupling.

For $\mathcal I\subseteq\mathfrak B^q$, define
\begin{equation}
 \mathcal N_1(\mathcal I)
 =\left\{B_{\gamma\delta}^{q}\in\mathfrak B^q:
 \exists B_{\alpha\beta}^{q}\in\mathcal I,
 \max\{|\gamma-\alpha|,|\delta-\beta|\}\le1\right\}.
 \label{eq:one-block-expansion}
\end{equation}
Both sides of an interface use the same nonequilibrium candidates to evaluate a shared nonequilibrium flux. The face nonequilibrium candidate set is
\begin{equation}
 \mathcal C_{ij,\mathrm{neq}}^{n,q,s}=\mathcal C_{ji,\mathrm{neq}}^{n,q,s}
 =\mathcal N_1\!\left(
   \mathcal I_{ij,i}^{n,q,s}\cup\mathcal I_{ij,j}^{n,q,s}
   \right).
 \label{eq:common-face-support}
\end{equation}
The cell nonequilibrium candidates comprise the retained blocks and the candidates of all adjacent faces:
\begin{equation}
 \mathcal C_{i,\mathrm{neq}}^{n,q,s}
 =\mathcal B_i^{n,q,s}
  \cup
  \bigcup_{j\in\mathcal J_i}\mathcal C_{ij,\mathrm{neq}}^{n,q,s}.
 \label{eq:local-candidate}
\end{equation}

\subsection{Equilibrium source blocks and blockwise evolution}\label{sec:adaptive-evolution}\label{sec:equilibrium-flux-evaluation}
\emph{(i) Macroscopic update.}
The macroscopic flux~\eqref{eq:modal-macroscopic-flux} can be written as
\[
 \mFlux_{ij}^n
 =\Cmat\bigl(\hat{\Flux}_{ij,\mathrm{eq}}^n
                  +\hat{\Flux}_{ij,\mathrm{neq}}^n\bigr),
\]
which requires only the flux coefficients in $B_{00}^q$. The macroscopic update~\eqref{eq:macro-update} is then
\[
 \W_i^{n+1}
 =\W_i^n-\frac{\Delta t}{|\Omega_i|}
       \sum_{j\in\mathcal J_i}A_{ij}\mFlux_{ij}^n,
 \qquad \M_i^{n+1}=\M(\W_i^{n+1}).
\]

\emph{(ii) Equilibrium source and candidate blocks.}
For fixed $q,s$ and a $5\times5$ block $B$, define the coefficient vectors in $\mathbb R^{25}$
\[
 \begin{aligned}
 \mathbf A_{i,B}^{n}
 &=(\avec_i^{n,q,s})_B
 =\bigl(a_{i,(m_1,m_2)}^{n,q,s}\bigr)_{(m_1,m_2)\in B},\\
 \widehat{\mathbf M}_{i,B}^{n}
 &=(\hat{\M}_i^{n,q,s})_B
 =\bigl(\hat M_{i,(m_1,m_2)}^{n,q,s}\bigr)_{(m_1,m_2)\in B}.
 \end{aligned}
\]
Blocks at the degree limits have correspondingly shorter vectors.
The nonequilibrium face flux is zero outside $\mathcal C_{ij,\mathrm{neq}}^{n,q,s}$:
\[
 (\hat{\Flux}_{ij,\mathrm{neq}}^n)_B=\mathbf0,
 \qquad B\notin\mathcal C_{ij,\mathrm{neq}}^{n,q,s}.
\]
The nonequilibrium flux divergence is
\begin{equation}
 (\hat{\bm{\mathcal D}}_{i,\mathrm{neq}}^n)_B
 =\frac{1}{|\Omega_i|}\sum_{j\in\mathcal J_i}
     A_{ij}(\hat{\Flux}_{ij,\mathrm{neq}}^n)_B.
 \label{eq:candidate-flux-divergence}
\end{equation}
The equilibrium flux divergence is
\begin{equation}
 \hat{\bm{\mathcal D}}_{i,\mathrm{eq}}^n
 =\frac{1}{|\Omega_i|}\sum_{j\in\mathcal J_i}
      A_{ij}\hat{\Flux}_{ij,\mathrm{eq}}^n.
 \label{eq:source-equilibrium-divergence}
\end{equation}
With $\lambda_i$ from Eq.~\eqref{eq:explicit-candidate}, define the equilibrium contribution to the update by
\begin{equation}
 \mathbf G_{i,B}^{n}
 =\lambda_i\left[
       \widehat{\mathbf M}_{i,B}^{n}
      -\widehat{\mathbf M}_{i,B}^{n+1}
      -\Delta t\,(\hat{\bm{\mathcal D}}_{i,\mathrm{eq}}^n)_B
              \right].
 \label{eq:explicit-update-source}
\end{equation}
This contribution can be nonzero even on blocks with zero nonequilibrium flux. The Maxwellian coefficients follow from Eq.~\eqref{eq:separable-coefficients}.

Let $\mathcal B_{G,i}^{n-1,q,s}$ denote the equilibrium source blocks retained at the preceding time step. We use their union with the current nonequilibrium transport candidates to select the blocks evaluated in the update:
\begin{equation}
 \mathcal P_{G,i}^{n,q}
 =\begin{cases}
    \mathfrak B^q,& n=0,\\[3pt]
    \displaystyle\bigcup_{s\in\{h,b\}}
      \left(\mathcal B_{G,i}^{n-1,q,s}
            \cup\mathcal C_{i,\mathrm{neq}}^{n,q,s}\right),& n\ge1.
   \end{cases}
 \label{eq:source-candidates}
\end{equation}

\emph{(iii) Nonequilibrium update.}
For $B\in\mathcal P_{G,i}^{n,q}$, Eq.~\eqref{eq:explicit-candidate} becomes
\begin{equation}
 \mathbf A_{i,B}^{\ast}
 =\begin{cases}
   \lambda_i\bigl[\mathbf A_{i,B}^{n}
       -\Delta t\,(\hat{\bm{\mathcal D}}_{i,\mathrm{neq}}^n)_B\bigr]
       +\mathbf G_{i,B}^{n},
       &B\in\mathcal C_{i,\mathrm{neq}}^{n,q,s},\\[4pt]
   \mathbf G_{i,B}^{n},
       &B\in\mathcal P_{G,i}^{n,q}\setminus\mathcal C_{i,\mathrm{neq}}^{n,q,s}.
  \end{cases}
 \label{eq:candidate-block-update}
\end{equation}
\subsection{Block truncation}\label{sec:adaptive-truncation}\label{sec:adaptive-analysis}
A block discarded by the nonequilibrium truncation may still be needed to represent the equilibrium source, which may contribute to the nonequilibrium evolution through Eq.~\eqref{eq:candidate-block-update}. Therefore, the nonequilibrium blocks and equilibrium source blocks are truncated separately on $\mathcal P_{G,i}^{n,q}$, yielding $\mathcal B_i^{n+1,q,s}$ and $\mathcal B_{G,i}^{n,q,s}$, respectively. The latter is used only to construct $\mathcal P_{G,i}^{n+1,q}$ through Eq.~\eqref{eq:source-candidates}.

\emph{(i) Block energies and truncation budget.}
For a coefficient vector $\mathbf z$, with $\mathbf z^h$ and $\mathbf z^b$ collecting the four quadrants of each reduced distribution, define
\[
 \|\mathbf z\|_{\theta}^2
 =\|\mathbf z^h\|_2^2+\theta^{-2}\|\mathbf z^b\|_2^2,
 \qquad w_h(\theta)=1,\quad w_b(\theta)=\theta^{-2}.
\]
Here $\|\mathbf z\|_2^2=\sum_j|z_j|^2$ denotes the squared Euclidean norm.
The energy of the updated coefficients on block $B$ is
\begin{equation}
 E_{i,B}^{n,q,s}
 =w_s(\theta_i^{n+1})\|\mathbf A_{i,B}^{\ast}\|_2^2,
 \qquad B\in\mathcal P_{G,i}^{n,q}.
 \label{eq:block-energy}
\end{equation}
The components $h$ and $b$ use the same relative tolerance, $\varepsilon^{\mathrm{ref}}=\varepsilon_h^{\mathrm{ref}}=\varepsilon_b^{\mathrm{ref}}$. The squared truncation budget is
\begin{equation}
 \delta_i^2
   =(\varepsilon^{\mathrm{ref}})^2
      \|\hat{\M}_i^{n+1}\|_{\theta_i^{n+1}}^2.
 \label{eq:state-truncation-budget}
\end{equation}

\emph{(ii) Nonequilibrium block truncation.}
Retain $B_{00}^q$ for both components. Discard the other updated blocks in increasing order of energy across all quadrants and both components, stopping before the cumulative discarded energy exceeds $\delta_i^2$. The remaining blocks form $\mathcal B_i^{n+1,q,s}$ and satisfy
\begin{equation}
 \sum_{q,s}\sum_{B\in\mathcal P_{G,i}^{n,q}\setminus\mathcal B_i^{n+1,q,s}}
     E_{i,B}^{n,q,s}\le\delta_i^2.
 \label{eq:block-truncation}
\end{equation}
The truncated vector $\avec_i^{\mathrm{tr}}$ has block coefficients
\begin{equation}
 \mathbf A_{i,B}^{\mathrm{tr},q,s}
 =\begin{cases}
   \mathbf A_{i,B}^{\ast},&B\in\mathcal B_i^{n+1,q,s},\\
   \mathbf0,&\text{otherwise}.
  \end{cases}
 \label{eq:stored-block-update}
\end{equation}
Retaining $B_{00}^q$ preserves the conservative moments by Proposition~\ref{prop:conservative-moment-representation}:
\[
 \Cmat\avec_i^{\mathrm{tr}}=\Cmat\avec_i^{\ast}.
\]

\emph{(iii) Equilibrium source block truncation.}
Apply the rule in (ii) independently to the candidate blocks $B\in\mathcal P_{G,i}^{n,q}$, using the source energies $w_s(\theta_i^{n+1})\|\mathbf G_{i,B}^n\|_2^2$ and the budget $\delta_i^2$. The retained source block sets $\mathcal B_{G,i}^{n,q,s}$ satisfy
\[
 \sum_{q,s}\sum_{B\in\mathcal P_{G,i}^{n,q}\setminus\mathcal B_{G,i}^{n,q,s}}
    w_s(\theta_i^{n+1})\|\mathbf G_{i,B}^n\|_2^2
     \le\delta_i^2.
\]

\subsection{Conservative correction}\label{sec:conservative-correction}
Truncation preserves the existing conservative moments. We next require the nonequilibrium DOF to contribute zero to these moments. This condition, $\Cmat\avec_i=\bm0$, is enforced by correcting only the constant coefficients $a_{i,(0,0)}^{q,s}$, $q=1,\ldots,4$, $s\in\{h,b\}$. Their moment columns are
\[
 \bm c_h^q=
 \begin{pmatrix}
  \ell_{(0,0)}^q(1)\\
  \ell_{(0,0)}^q(u)\\
  \ell_{(0,0)}^q(v)\\
  \tfrac12\ell_{(0,0)}^q(u^2+v^2)
 \end{pmatrix},
 \qquad
 \bm c_b^q=
 \begin{pmatrix}0\\0\\0\\\tfrac12\ell_{(0,0)}^q(1)\end{pmatrix}.
\]
With moment residual $\bm r_i^{\mathrm c}=\Cmat\avec_i^{\mathrm{tr}}$ and temperature scale $\vartheta_i=\theta_i^{n+1}$, the increments solve the constrained least squares problem
\begin{equation}
 \begin{aligned}
 \underset{\{\delta a_h^q,\delta a_b^q\}_{q=1}^{4}}{\operatorname{minimize}}
 &\quad\frac12\sum_{q=1}^{4}\left[
     (\delta a_h^q)^2+(\vartheta_i)^{-2}(\delta a_b^q)^2
     \right],\\
 \text{subject to}
 &\quad\sum_{q=1}^{4}
    \left(\bm c_h^q\delta a_h^q+\bm c_b^q\delta a_b^q\right)
    =-\bm r_i^{\mathrm c}.
 \end{aligned}
 \label{eq:conservative-correction}
\end{equation}
The updated coefficients on each retained block are
\[
 \mathbf A_{i,B}^{n+1,q,s}
 =\mathbf A_{i,B}^{\mathrm{tr},q,s}
  +\delta\mathbf A_{i,B}^{q,s},
 \qquad B\in\mathcal B_i^{n+1,q,s}.
\]
Here $\delta\mathbf A_{i,B}^{q,s}$ is zero except for $\delta a_s^q$, the coefficient increment of the constant basis vector in $B_{00}^q$.
Section~\ref{sec:implicit-update} gives the corresponding correction for the implicit iteration.

Algorithm~\ref{alg:adaptive-ugks} summarizes the explicit evolution.

\begin{algorithm}[H]
\caption{Nonequilibrium capturing UGKS}\label{alg:adaptive-ugks}
\begin{algorithmic}[1]
\Require Physical and velocity grids, block partition, boundary conditions, $\W_i^0$, $\fvec_i^{\mathrm{v},0}$, $\Delta t$, $\varepsilon^{\mathrm{ref}}$, and $t_{\mathrm{end}}$.
\State Construct the fixed discrete basis. Set $n=0$ and $t^0=0$.
\State Project $\fvec_i^{\mathrm{v},0}-\M^{\mathrm v}(\W_i^0)$ to obtain $\avec_i^0$.
\State Apply nonequilibrium block truncation on $\mathfrak B^q$ as in Section~\ref{sec:adaptive-truncation}(ii), using $\avec_i^0$, $\theta_i^0$, and $\hat{\M}_i^0$; store the retained coefficients in $\avec_i^0$ and the retained blocks in $\mathcal B_i^{0,q,s}$.
\State Correct $\avec_i^0$ by Eq.~\eqref{eq:conservative-correction}, with $\bm r_i^{\mathrm c}=\Cmat\avec_i^0$ and $\vartheta_i=\theta_i^0$.
\While{$t^n<t_{\mathrm{end}}$}
  \State Set $\Delta t\gets\min\{\Delta t,t_{\mathrm{end}}-t^n\}$.
  \State Reconstruct the face states using Eq.~\eqref{eq:reconstructed-split}, with unretained nonequilibrium coefficients set to zero.
  \State Construct the face and cell nonequilibrium candidates $\mathcal C_{ij,\mathrm{neq}}^{n,q,s}$ and $\mathcal C_{i,\mathrm{neq}}^{n,q,s}$ as in Section~\ref{sec:candidates}.
  \State Evaluate the nonequilibrium face fluxes on $\mathcal C_{ij,\mathrm{neq}}^{n,q,s}$ using Eq.~\eqref{eq:nonequilibrium-modal-flux} for interior faces and the boundary conditions elsewhere, including Eqs.~\eqref{eq:wall}--\eqref{eq:diffuse-wall-modal-flux} for diffuse walls.
  \State Evaluate the equilibrium flux coefficients in $B_{00}^q$ and compute $\mFlux_{ij}^n$ from the total flux coefficients by Eq.~\eqref{eq:modal-macroscopic-flux}. Update $\W_i^{n+1}$ by Eq.~\eqref{eq:macro-update}.
  \State Evaluate $\M_i^{n+1}$ and $\tau_i^{n+1}$ from $\W_i^{n+1}$, and $\lambda_i$ from Eq.~\eqref{eq:explicit-candidate}.
  \State Construct $\mathcal P_{G,i}^{n,q}$ by Eq.~\eqref{eq:source-candidates}, with $\mathcal P_{G,i}^{0,q}=\mathfrak B^q$.
  \State For each $B\in\mathcal P_{G,i}^{n,q}$, compute the flux divergences by Eqs.~\eqref{eq:candidate-flux-divergence}--\eqref{eq:source-equilibrium-divergence}, the source coefficients $\mathbf G_{i,B}^n$ by Eq.~\eqref{eq:explicit-update-source}, and the updated coefficients $\mathbf A_{i,B}^{\ast}$ by Eq.~\eqref{eq:candidate-block-update}.
  \State Truncate the nonequilibrium blocks in $\mathcal P_{G,i}^{n,q}$ as in Section~\ref{sec:adaptive-truncation}(ii), using the energies~\eqref{eq:block-energy} and budget~\eqref{eq:state-truncation-budget}; obtain $\avec_i^{\mathrm{tr}}$ and $\mathcal B_i^{n+1,q,s}$.
  \State Independently truncate the equilibrium source blocks in $\mathcal P_{G,i}^{n,q}$ as in Section~\ref{sec:adaptive-truncation}(iii), with the same budget $\delta_i^2$; retain $\mathcal B_{G,i}^{n,q,s}$ for constructing $\mathcal P_{G,i}^{n+1,q}$.
  \State Solve Eq.~\eqref{eq:conservative-correction} and set $\mathbf A_{i,B}^{n+1,q,s}=\mathbf A_{i,B}^{\mathrm{tr},q,s}+\delta\mathbf A_{i,B}^{q,s}$ for $B\in\mathcal B_i^{n+1,q,s}$.
  \State $t^{n+1}\gets t^n+\Delta t$; $n\gets n+1$.
\EndWhile
\Ensure Final time values $\W_i^N$, $\mathbf A_{i,B}^{N,q,s}$, and $\mathcal B_i^{N,q,s}$, where $t^N=t_{\mathrm{end}}$.
\end{algorithmic}
\end{algorithm}
\FloatBarrier

\section{Implicit iteration for steady flow}\label{sec:implicit-extension}
\begingroup\emergencystretch=1em
For steady flow, the nonequilibrium capturing implicit UGKS (NCIUGKS) couples macroscopic prediction with microscopic correction~\cite{zhu2016implicit}. The block and coefficient notation of Section~\ref{sec:properties} is used with outer iteration index $k$ in place of time index $n$. Each outer iteration follows
\[
 (\W^k,\avec^k)
 \longrightarrow (\W^{\mathrm p},\M^{\mathrm p},\tau^{\mathrm p})
 \longrightarrow \dvec
 \longrightarrow (\W^{k+1},\avec^{k+1}).
\]
Here $\dvec$ contains the basis coefficients of the distribution function correction. The predicted $\M^{\mathrm p}$ and $\tau^{\mathrm p}$ are fixed during the microscopic solve; the final update includes block truncation and conservative correction.
\par\endgroup

At outer iteration $k$, the macroscopic and microscopic residuals are
\begin{equation}
 \begin{aligned}
 \Rop_{W,i}^k
 &=-\frac{1}{|\Omega_i|}
       \sum_{j\in\mathcal J_i}A_{ij}\mFlux_{ij}^k,\\
 \Rop_{f,i}^k&=\frac{\M(\W_i^k)-\fvec_i^k}{\tau_i^k}
                                  -\bm{\mathcal D}_i^k.
 \end{aligned}
 \label{eq:steady-transport-residual}
\end{equation}
The time step used to construct the UGKS flux is set by the explicit CFL number,
\begin{equation}
 \Delta t
 =\mathrm{CFL}_{\mathrm{exp}}
  \left(\frac{u_{\max}}{\Delta x_{\min}}
       +\frac{v_{\max}}{\Delta y_{\min}}\right)^{-1}.
 \label{eq:explicit-cfl-step}
\end{equation}
Here $u_{\max}$ and $v_{\max}$ are the upper bounds of $|u|$ and $|v|$ on the velocity domain, $\Delta x_{\min}$ and $\Delta y_{\min}$ are the minimum characteristic cell lengths, and $\mathrm{CFL}_{\mathrm{exp}}$ is the explicit CFL number. The implicit iteration uses the numerical time step
\[
 \Delta t_i^{\mathrm{num}}=\mathrm{CFL}_{\mathrm{imp}}\Delta t,
\]
where $\mathrm{CFL}_{\mathrm{imp}}$ is the implicit CFL number.

\subsection{Macroscopic prediction}\label{sec:macro-prediction}
We predict the macroscopic variables for the microscopic correction as
\[
 \W_i^{\mathrm p}=\W_i^k+\Delta\W_i^{\mathrm p}.
\]
The corresponding delta form implicit system is~\cite{zhu2016implicit}
\begin{equation}
 \frac{\Delta\W_i^{\mathrm p}}{\Delta t_i^{\mathrm{num}}}
 +\frac{1}{|\Omega_i|}\sum_{j\in\mathcal J_i}
                  A_{ij}\,\delta\mFlux_{ij}^{\mathrm p}
 =\Rop_{W,i}^k.
 \label{eq:macro-predictor-balance}
\end{equation}
The face flux increment is approximated using the Euler flux:
\begin{equation}
 \begin{aligned}
 \delta\mFlux_{E,i}^{\mathrm p}
 &=\mFlux_E(\W_i^k+\Delta\W_i^{\mathrm p})
   -\mFlux_E(\W_i^k),\\
 \delta\mFlux_{ij}^{\mathrm p}
 &=\frac12\left[
   \delta\mFlux_{E,i}^{\mathrm p}
  +\delta\mFlux_{E,j}^{\mathrm p}\right]
  +\frac12\Lambda_{ij}
      (\Delta\W_i^{\mathrm p}-\Delta\W_j^{\mathrm p}),
 \end{aligned}
 \label{eq:macro-face-increment}
\end{equation}
where $\mFlux_E$ is the Euler flux directed outward from cell $i$, and $\Lambda_{ij}$ is the spectral radius:
\begin{equation}
 \Lambda_{ij}
 =|\overline{\bm U}_{ij}\cdot\bm n_{ij}|
       +\sqrt{\gamma\bar p_{ij}/\bar\rho_{ij}}
       +\frac{2\overline{\nu}_{{\mathrm{kin}},ij}}{d_{ij}},
 \label{eq:spectral-speed}
\end{equation}
where $\overline{\nu}_{{\mathrm{kin}},ij}=\overline{(\mu/\rho)}_{ij}$, bars denote face averages, and $d_{ij}$ is the distance between adjacent cell centers.

The outward face normals of a closed cell satisfy
\[
 \sum_{j\in\mathcal J_i}A_{ij}\bm n_{ij}=\bm0.
\]
Equation~\eqref{eq:macro-predictor-balance} then reduces to
\begin{equation}
 D_i^{\mathrm p}\Delta\W_i^{\mathrm p}
 =|\Omega_i|\Rop_{W,i}^k
 -\frac12\sum_{j\in\mathcal J_i}A_{ij}
  \left[\delta\mFlux_{E,j}^{\mathrm p}
        -\Lambda_{ij}\Delta\W_j^{\mathrm p}\right],
 \label{eq:macro-prediction}
\end{equation}
with
\begin{equation}
 D_i^{\mathrm p}
 =\frac{|\Omega_i|}{\Delta t_i^{\mathrm{num}}}
  +\frac12\sum_{j\in\mathcal J_i}A_{ij}\Lambda_{ij}.
 \label{eq:predictor-diagonal}
\end{equation}
Equation~\eqref{eq:macro-prediction} is solved by lower upper symmetric Gauss--Seidel (LU-SGS) iterations. The resulting prediction determines the equilibrium and relaxation time:
\begin{equation}
 \M_i^{\mathrm p}=\M(\W_i^{\mathrm p}),
 \qquad
 \tau_i^{\mathrm p}=\tau(\W_i^{\mathrm p}).
 \label{eq:predicted-equilibrium}
\end{equation}
These predicted quantities enter the microscopic equation; the updated macroscopic state $\W_i^{\dagger}$ follows from the conservative moments of $\dvec_i$ in Section~\ref{sec:implicit-update}.

\subsection{Microscopic correction equation}\label{sec:implicit}
Let $\Delta\fvec_i$ be the correction to the two reduced distribution functions. A first order upwind operator approximates the transport increment:
\begin{equation}
 (\mathcal T^{\mathrm{up}}\Delta\fvec)_i
 =\frac{1}{|\Omega_i|}\sum_{j\in\mathcal J_i}A_{ij}
       \left(u_{ij}^{+}\Delta\fvec_i
                    +u_{ij}^{-}\Delta\fvec_j\right),
 \label{eq:upwind-increment-operator}
\end{equation}
with the signed normal velocities $u_{ij}^{\pm}$ defined in Section~\ref{sec:boundaries}. With $\M_i^{\mathrm p}$ and $\tau_i^{\mathrm p}$ fixed, the implicit microscopic equation is
\begin{equation}
 \frac{\Delta\fvec_i}{\Delta t_i^{\mathrm{num}}}
 +\bm{\mathcal D}_i^k
 +(\mathcal T^{\mathrm{up}}\Delta\fvec)_i
 =\frac{\M_i^{\mathrm p}-\fvec_i^k-\Delta\fvec_i}
        {\tau_i^{\mathrm p}}.
 \label{eq:frozen-implicit-balance}
\end{equation}
Rearranging gives
\begin{equation}
 \kappa_i\Delta\fvec_i
 +(\mathcal T^{\mathrm{up}}\Delta\fvec)_i
 =\bm S_i^k,
 \qquad
 \kappa_i=\frac{1}{\Delta t_i^{\mathrm{num}}}
          +\frac{1}{\tau_i^{\mathrm p}},
 \label{eq:implicit}
\end{equation}
where
\begin{equation}
 \bm S_i^k
 =\frac{\M_i^{\mathrm p}-\fvec_i^k}{\tau_i^{\mathrm p}}
  -\bm{\mathcal D}_i^k.
 \label{eq:implicit-source}
\end{equation}

To solve for the correction in the discrete basis, define
\begin{equation}
 \dvec_i
 =\mathbf P^{\mathsf T}\mathbf\Omega\Delta\fvec_i^{\mathrm v}
 =\begin{pmatrix}\dvec_i^h\\\dvec_i^b\end{pmatrix}
 \in\mathbb R^{2N_{\mathrm{vel}}}.
 \label{eq:increment-notation}
\end{equation}
The outer index $k$ is suppressed on the correction and the local vectors, matrices, and ordering of the retained basis indices used in the microscopic solve. Using Section~\ref{sec:boundaries} and $\Delta\fvec_i^{\mathrm v}=\mathbf P\dvec_i$, the projected velocity product is
\begin{equation}
 \mathbf P^{\mathsf T}\mathbf\Omega
  \bigl(\mathbf D_{ij}^{\pm}\Delta\fvec_i^{\mathrm v}\bigr)
 =\mathbf J_{ij}^{\pm}\dvec_i.
 \label{eq:signed-modal-operators}
\end{equation}
Projecting Eq.~\eqref{eq:implicit} gives
\begin{equation}
 \kappa_i\dvec_i
 +\frac{1}{|\Omega_i|}\sum_{j\in\mathcal J_i}A_{ij}
        \left(\mathbf J_{ij}^{+}\dvec_i
                       +\mathbf J_{ij}^{-}\dvec_j\right)
 =\hat{\bm S}_i^k.
 \label{eq:local-implicit}
\end{equation}
Separating the current distribution function as $\fvec_i^k=\M_i^k+\rvec_i^k$ decomposes the implicit source into nonequilibrium terms and an equilibrium source contribution,
\begin{equation}
 \hat{\bm S}_i^k
 =-\frac{\avec_i^k}{\tau_i^{\mathrm p}}
   -\hat{\bm{\mathcal D}}_{i,\mathrm{neq}}^k
   +\frac{\hat{\M}_i^{\mathrm p}-\hat{\M}_i^k}{\tau_i^{\mathrm p}}
   -\hat{\bm{\mathcal D}}_{i,\mathrm{eq}}^k.
 \label{eq:modal-source-decomposition}
\end{equation}
The divergences are assembled by Eqs.~\eqref{eq:candidate-flux-divergence}--\eqref{eq:source-equilibrium-divergence} at iteration $k$. The Maxwellian coefficients at $\W_i^{\mathrm p}$ and $\W_i^k$ follow from Eq.~\eqref{eq:separable-coefficients}.

At a diffuse wall with fixed velocity and temperature, the outgoing correction $\dvec^{\mathrm{out}}$ determines the incoming correction and wall flux increment through Eq.~\eqref{eq:wall}:
\begin{equation}
 \begin{gathered}
 \dvec^{\mathrm{in}}
 =-\frac{\displaystyle\sum_{q=1}^{4}\sum_{(m_1,m_2)\in\mathcal V_q}
          d_{(m_1,m_2)}^{q,h,\mathrm{out}}\,\ell_{(m_1,m_2)}^q(u_{ij}^{+})}
         {\displaystyle\sum_{q=1}^{4}\sum_{(m_1,m_2)\in\mathcal V_q}
          \hat M_{w,(m_1,m_2)}^{q,h}\,\ell_{(m_1,m_2)}^q(u_{ij}^{-})}\hat{\M}_w,\\
 \delta\hat{\Flux}_{ij}
 =\mathbf J_{ij}^{+}\dvec^{\mathrm{out}}
  +\mathbf J_{ij}^{-}\dvec^{\mathrm{in}}.
 \end{gathered}
 \label{eq:implicit-wall-increment}
\end{equation}
Here $\dvec^{\mathrm{in}}$ is the correction to the wall Maxwellian on incoming velocities.

\subsection{Implicit source block truncation and microscopic correction}\label{sec:implicit-restriction}
At each outer iteration, assemble the implicit source coefficients, truncate the source blocks, then solve for the microscopic correction on the union of the blocks retained for $h$ and $b$.

\emph{(i) Source block assembly.}
Let $\mathcal K_i^{k-1,q}$ denote the blocks used for the microscopic correction at the preceding outer iteration. Together with the current nonequilibrium transport candidates, they give the source block candidates
\[
 \mathcal P_{S,i}^{k,q}
 =\begin{cases}
    \mathfrak B^q,&k=0,\\[3pt]
    \displaystyle\mathcal K_i^{k-1,q}
       \cup\bigcup_{s\in\{h,b\}}\mathcal C_{i,\mathrm{neq}}^{k,q,s},
       &k\ge1.
   \end{cases}
\]
Assemble $(\hat{\bm S}_i^{k,q,s})_B$ by Eq.~\eqref{eq:modal-source-decomposition} for every $B\in\mathcal P_{S,i}^{k,q}$ and $s\in\{h,b\}$.

\emph{(ii) Source block truncation.}
The energy of each source block is
\[
 E_{S,i,B}^{k,q,s}
 =w_s(\theta_i^{\mathrm p})
       \|(\hat{\bm S}_i^{k,q,s})_B\|_2^2,
 \qquad B\in\mathcal P_{S,i}^{k,q},
\]
with the weights defined in Section~\ref{sec:adaptive-truncation}(i). The squared source block truncation budget is
\begin{equation}
 \delta_{S,i}^2
 =\left(\kappa_i\varepsilon_S^{\mathrm{ref}}
      \|\hat{\M}_i^{\mathrm p}\|_{\theta_i^{\mathrm p}}\right)^2.
 \label{eq:source-tolerance}
\end{equation}
Here $\varepsilon_S^{\mathrm{ref}}=\varepsilon_{S,h}^{\mathrm{ref}}=\varepsilon_{S,b}^{\mathrm{ref}}$.
Apply the block truncation rule in Section~\ref{sec:adaptive-truncation}(ii) to $E_{S,i,B}^{k,q,s}$. For $s\in\{h,b\}$, the retained source block sets satisfy
\[
 B_{00}^q\in\mathcal K_i^{k,q,s}\subseteq\mathcal P_{S,i}^{k,q},
 \qquad
 \sum_{q,s}\sum_{B\in\mathcal P_{S,i}^{k,q}\setminus\mathcal K_i^{k,q,s}}
    E_{S,i,B}^{k,q,s}\le\delta_{S,i}^2.
\]
The truncated source coefficients are
\[
 (\hat{\bm S}_i^{\mathrm{tr},q,s})_B
 =\begin{cases}
    (\hat{\bm S}_i^{k,q,s})_B,&B\in\mathcal K_i^{k,q,s},\\
    \bm0,&\text{otherwise}.
   \end{cases}
\]

\emph{(iii) Microscopic correction.}
For the reduced distributions $h$ and $b$, use
\[
 \mathcal K_i^{k,q}=\mathcal K_i^{k,q,h}\cup\mathcal K_i^{k,q,b},
 \qquad
 (\dvec_i^{q,s})_B=\bm0\quad\text{for }B\notin\mathcal K_i^{k,q},
 \quad s\in\{h,b\}.
\]
For each quadrant $q$, list the retained degree pairs in the ordering of Section~\ref{sec:modal-projection}:
\begin{equation}
 \{(m_{1,i,a},m_{2,i,a}):a=1,\ldots,n_i^q\}
 =\bigcup_{B\in\mathcal K_i^{k,q}}B,
 \label{eq:retained-dof-ordering}
\end{equation}
Here $m_{1,i,a}$ and $m_{2,i,a}$ are the $u$ and $v$ degrees of the $a$th retained basis vector, and $n_i^q$ is the number of retained degree pairs. The restricted vectors have components
\[
 (\dvec_{i,\mathcal K}^{q,s})_a=d_{i,(m_{1,i,a},m_{2,i,a})}^{q,s},
 \qquad
 (\hat{\bm S}_{i,\mathcal K}^{q,s})_a
 =\hat S_{i,(m_{1,i,a},m_{2,i,a})}^{\mathrm{tr},q,s},
 \qquad a=1,\ldots,n_i^q.
\]
For a Cartesian interior cell, restricting Eq.~\eqref{eq:local-implicit} to these DOF gives
\begin{equation}
 \mathbf H_{ii}^q\dvec_{i,\mathcal K}^{q,s}
 +\sum_{j\in\mathcal J_i}\mathbf H_{ij}^q\dvec_{j,\mathcal K}^{q,s}
 =\hat{\bm S}_{i,\mathcal K}^{q,s},
 \label{eq:restricted-system}
\end{equation}
where
\begin{equation}
 \begin{aligned}
 (\mathbf H_{ii}^q)_{ab}
 &=\kappa_i\delta_{ab}
  +\frac{1}{|\Omega_i|}\sum_{j\in\mathcal J_i}A_{ij}
       (\mathbf J_{ij}^{+,q})_{(m_{1,i,a},m_{2,i,a}),(m_{1,i,b},m_{2,i,b})},
 &\quad 1\le a,b\le n_i^q,\\
 (\mathbf H_{ij}^q)_{ab}
 &=\frac{A_{ij}}{|\Omega_i|}
       (\mathbf J_{ij}^{-,q})_{(m_{1,i,a},m_{2,i,a}),(m_{1,j,b},m_{2,j,b})},
 &\quad 1\le a\le n_i^q,\quad 1\le b\le n_j^q.
 \end{aligned}
 \label{eq:implicit-cell-operators}
\end{equation}
Here $\delta_{ab}$ is the Kronecker delta, and $\mathbf J_{ij}^{\pm,q}\in\mathbb R^{N_q\times N_q}$ is the quadrant operator defined in Section~\ref{sec:boundaries}, identical for $h$ and $b$. Its entries follow from Eq.~\eqref{eq:modal-products} and the signed split.

We solve Eq.~\eqref{eq:restricted-system} by Gauss--Seidel (GS) sweeps over the physical cells. Let $\mathcal L_i^q$ and $\mathcal U_i^q$ denote the neighbors preceding and following cell $i$ in the sweep order. Sweep $\ell$ gives
\begin{equation}
 \begin{aligned}
 \mathbf H_{ii}^q\dvec_{i,\mathcal K}^{q,s,(\ell+1)}
 ={}&\hat{\bm S}_{i,\mathcal K}^{q,s}\\
 &-\sum_{j\in\mathcal L_i^q}\mathbf H_{ij}^q
          \dvec_{j,\mathcal K}^{q,s,(\ell+1)}
  -\sum_{j\in\mathcal U_i^q}\mathbf H_{ij}^q
          \dvec_{j,\mathcal K}^{q,s,(\ell)}.
 \end{aligned}
 \label{eq:micro-spatial-gs}
\end{equation}
For a Cartesian sweep aligned with the velocities in $Q_q$, the interior upwind couplings satisfy
\[
 \mathbf H_{ij}^q=\mathbf0,
 \quad j\in\mathcal U_i^q.
\]

\begin{samepage}
Within each cell, LU-SGS~\cite{zhu2016implicit} solves the system in Eq.~\eqref{eq:micro-spatial-gs}. Decompose
\[
 \mathbf H_{ii}^q=\mathbf D_i^q+\mathbf L_i^q+\mathbf U_i^q
\]
into its diagonal, strictly lower triangular, and strictly upper triangular parts. For fixed sweep $\ell$, let $\bm b_i^{q,s,(\ell)}$ denote the right hand side of Eq.~\eqref{eq:micro-spatial-gs}. Local iteration $r$ consists of the forward and backward sweeps
\begin{equation}
 (\mathbf D_i^q+\mathbf L_i^q)\dvec_{i,\mathcal K}^{q,s,(r+1/2)}
 =\bm b_i^{q,s,(\ell)}
  -\mathbf U_i^q\dvec_{i,\mathcal K}^{q,s,(r)}.
 \label{eq:micro-local-sgs-forward}
\end{equation}
\begin{equation}
 (\mathbf D_i^q+\mathbf U_i^q)\dvec_{i,\mathcal K}^{q,s,(r+1)}
 =\mathbf D_i^q\dvec_{i,\mathcal K}^{q,s,(r+1/2)}
  +\mathbf U_i^q\dvec_{i,\mathcal K}^{q,s,(r)}.
 \label{eq:micro-local-sgs-backward}
\end{equation}
\end{samepage}
Each spatial GS sweep uses two local LU-SGS iterations per cell, and two spatial sweeps are performed in each quadrant.

\subsection{Update of nonequilibrium DOF}\label{sec:implicit-update}
\begin{samepage}
\emph{(i) Nonequilibrium update.}
The computed correction $\dvec_i$ updates the macroscopic variables through
\begin{equation}
 \W_i^{\dagger}=\W_i^k+\Cmat\dvec_i.
 \label{eq:implicit-macro}
\end{equation}
Define $\M_i^{\dagger}=\M(\W_i^{\dagger})$. The nonequilibrium DOF relative to this equilibrium are
\begin{equation}
 \avec_i^{\ast}
 =\avec_i^k+\dvec_i+\hat{\M}_i^k-\hat{\M}_i^{\dagger}.
 \label{eq:implicit-recenter}
\end{equation}
\end{samepage}
For the adaptive update, evaluate Eq.~\eqref{eq:implicit-recenter} on the union of the stored nonequilibrium blocks and the correction blocks:
\[
 \mathbf A_{i,B}^{\ast}
 =\mathbf A_{i,B}^{k}+(\dvec_i^{q,s})_B
  +\widehat{\mathbf M}_{i,B}^{k}-\widehat{\mathbf M}_{i,B}^{\dagger},
 \qquad B\in\mathcal B_i^{k,q,s}\cup\mathcal K_i^{k,q}.
\]
Here $\mathcal B_i^{k,q,s}$ is the stored nonequilibrium block set at the start of iteration $k$.

\emph{(ii) Nonequilibrium block truncation.}
The weighted block energy is
\[
 E_{i,B}^{k,q,s}
 =w_s(\theta_i^{\dagger})\|\mathbf A_{i,B}^{\ast}\|_2^2.
\]
The squared truncation budget is
\[
 \delta_i^2=(\varepsilon^{\mathrm{ref}})^2
      \|\hat{\M}_i^{\dagger}\|_{\theta_i^{\dagger}}^2.
\]
Apply the block truncation rule in Section~\ref{sec:adaptive-truncation}(ii) to $E_{i,B}^{k,q,s}$. The retained block sets satisfy
\[
 B_{00}^q\in\mathcal B_i^{k+1,q,s}
 \subseteq\mathcal B_i^{k,q,s}\cup\mathcal K_i^{k,q},
 \qquad s\in\{h,b\}.
\]
The discarded energy satisfies
\[
 \sum_{q,s}
 \sum_{B\in(\mathcal B_i^{k,q,s}\cup\mathcal K_i^{k,q})
                  \setminus\mathcal B_i^{k+1,q,s}}
     E_{i,B}^{k,q,s}\le\delta_i^2.
\]
The truncated coefficients are
\[
 \mathbf A_{i,B}^{\mathrm{tr},q,s}
 =\begin{cases}
    \mathbf A_{i,B}^{\ast},&B\in\mathcal B_i^{k+1,q,s},\\
    \bm0,&\text{otherwise}.
   \end{cases}
\]
By Eq.~\eqref{eq:local-candidate}, the retained nonequilibrium blocks enter the next source candidates through
\[
 \mathcal B_i^{k+1,q,s}
 \subseteq\mathcal C_{i,\mathrm{neq}}^{k+1,q,s}
 \subseteq\mathcal P_{S,i}^{k+1,q}.
\]

\emph{(iii) Conservative correction.}
Compute the moment residual $\bm r_i^{\mathrm c}=\Cmat\avec_i^{\mathrm{tr}}$ and solve Eq.~\eqref{eq:conservative-correction} with $\vartheta_i=\theta_i^{\dagger}$ for the constant-coefficient increments $\delta a_h^q$ and $\delta a_b^q$. Form the correction vector $\delta\avec_i$ by
\[
 \delta a_{i,(m_1,m_2)}^{q,s}
 =\begin{cases}
    \delta a_s^q,&(m_1,m_2)=(0,0),\\
    0,&\text{otherwise},
   \end{cases}
 \qquad
 \Cmat\delta\avec_i=-\bm r_i^{\mathrm c}.
\]
These corrections lie in the retained blocks $B_{00}^q$, so the block sets $\mathcal B_i^{k+1,q,s}$ are unchanged. The next iterate is
\[
 \W_i^{k+1}=\W_i^{\dagger},\qquad
 \avec_i^{k+1}=\avec_i^{\mathrm{tr}}+\delta\avec_i,\qquad
 \Cmat\avec_i^{k+1}=\bm0.
\]

Algorithm~\ref{alg:implicit-ugks} summarizes the iteration and includes optional multi step macro prediction~\cite{liu2026waveparticleII} and conservative Anderson acceleration.

\begin{algorithm}[H]
\caption{Nonequilibrium capturing implicit UGKS (NCIUGKS)}\label{alg:implicit-ugks}
\begin{algorithmic}[1]
\Require Grids and boundary conditions; initial $\W_i^0$, $\avec_i^0$, and block sets $\mathcal B_i^{0,q,s}$ containing $B_{00}^q$; $\Delta t$, the implicit CFL schedule, $\varepsilon_S^{\mathrm{ref}}$, $\varepsilon^{\mathrm{ref}}$, and convergence parameters.
\State Enforce $\Cmat\avec_i^0=\bm0$ by Eq.~\eqref{eq:conservative-correction}, with $\vartheta_i=\theta_i^0$.
\For{$k=0,1,\ldots$}
  \State Construct $\mathcal C_{ij,\mathrm{neq}}^{k,q,s}$ and $\mathcal C_{i,\mathrm{neq}}^{k,q,s}$ from $\{\mathcal B_\ell^{k,q,s}\}$ as in Section~\ref{sec:candidates}.
  \State Evaluate the microscopic UGKS fluxes $\hat{\Flux}_{ij}^k$ with $\Delta t$ using Eq.~\eqref{eq:full-modal-flux} and the boundary conditions.
  \State Compute the macroscopic fluxes $\mFlux_{ij}^k=\Cmat\hat{\Flux}_{ij}^k$, the divergence $\bm{\mathcal D}_i^k$, and the residuals~\eqref{eq:steady-transport-residual}.
  \If{$k>0$ and the prescribed convergence criterion is satisfied}
    \State \textbf{break}
  \EndIf
  \State Determine $\Delta t_i^{\mathrm{num}}$ and obtain $\W_i^{\mathrm p}$ by LU-SGS iterations of Eq.~\eqref{eq:macro-prediction}.
  \State Evaluate $\M_i^{\mathrm p}$, $\tau_i^{\mathrm p}$, and $\kappa_i$ from Eqs.~\eqref{eq:predicted-equilibrium} and~\eqref{eq:implicit}.
  \State Construct $\mathcal P_{S,i}^{k,q}$ as in Section~\ref{sec:implicit-restriction}(i), with $\mathcal P_{S,i}^{0,q}=\mathfrak B^q$. Assemble $(\hat{\bm S}_i^{k,q,s})_B$ by Eq.~\eqref{eq:modal-source-decomposition} for $B\in\mathcal P_{S,i}^{k,q}$ and $s\in\{h,b\}$.
  \State Truncate the source blocks in $\mathcal P_{S,i}^{k,q}$ with budget~\eqref{eq:source-tolerance}, as in Section~\ref{sec:implicit-restriction}(ii), to obtain $\mathcal K_i^{k,q,s}$ and $\hat{\bm S}_i^{\mathrm{tr}}$.
  \State Set $\mathcal K_i^{k,q}=\mathcal K_i^{k,q,h}\cup\mathcal K_i^{k,q,b}$ for the microscopic correction and the next source candidate construction.
  \State Solve Eq.~\eqref{eq:restricted-system} on $\mathcal K_i^{k,q}$ for $\dvec_i^h$ and $\dvec_i^b$ by GS with local LU-SGS, using $\hat{\bm S}_i^{\mathrm{tr}}$, imposing the boundary corrections, and setting correction coefficients outside $\mathcal K_i^{k,q}$ to zero.
  \State Update $\W_i^{\dagger}$ by Eq.~\eqref{eq:implicit-macro} and evaluate $\M_i^{\dagger}=\M(\W_i^{\dagger})$.
  \State Compute $\mathbf A_{i,B}^{\ast}$ from Eq.~\eqref{eq:implicit-recenter} for $B\in\mathcal B_i^{k,q,s}\cup\mathcal K_i^{k,q}$.
  \State Truncate these nonequilibrium blocks as in Section~\ref{sec:implicit-update}(ii), using $E_{i,B}^{k,q,s}$ and $\delta_i^2$ at $\W_i^{\dagger}$ and retaining $B_{00}^q$; obtain $\avec_i^{\mathrm{tr}}$ and $\mathcal B_i^{k+1,q,s}$.
  \State Set $\bm r_i^{\mathrm c}=\Cmat\avec_i^{\mathrm{tr}}$ and solve Eq.~\eqref{eq:conservative-correction} with $\vartheta_i=\theta_i^{\dagger}$. Set $\avec_i^{k+1}=\avec_i^{\mathrm{tr}}+\delta\avec_i$ and $\W_i^{k+1}=\W_i^{\dagger}$.
  \State \textit{Optional:} Apply multi step macro prediction~\cite{liu2026waveparticleII} to $(\W^{k+1},\avec^{k+1})$.
  \State \textit{Optional:} Apply conservative Anderson acceleration (\ref{app:implicit}) and update the block sets.
\EndFor
\Ensure Converged $\W_i^k$, $\avec_i^k$, and retained block sets $\mathcal B_i^{k,q,s}$.
\end{algorithmic}
\end{algorithm}
\FloatBarrier

\section{Numerical results}\label{sec:results}
The numerical examples cover low speed and hypersonic flows across continuum and rarefied regimes. The explicit formulation is examined through shock tube and shock structure calculations, followed by steady cavity flow and flow past a square cylinder using the implicit formulation. Together, these examples illustrate how the scheme captures nonequilibrium with fewer kinetic DOF while maintaining accuracy.

Flow velocity components are denoted by $(U_x,U_y)=(U,V)$. The two dimensional calculations use a single CPU core and a $100\times100$ midpoint quadrature in velocity space. Cost comparisons use full nodal UGKS and, for steady flows, its implicit formulation (IUGKS), on the same physical and velocity grids. Retained DOF are expressed as a percentage of the full nodal count for both reduced distribution functions. For steady flows, $R_{\Delta\rho}$ is the root mean square of successive density differences over fluid cells.

The Prandtl number is $\Pr=1$ throughout, and the specific heat ratio is $\gamma=5/3$ unless otherwise stated. The viscosity follows the temperature power law in Section~\ref{sec:distribution-notation}, with $\omega=0.72$ for the argon shock and $0.81$ for the other tests. For the argon shock, rarefied cavity, and square cylinder flow, the reference viscosity follows the variable hard sphere (VHS) relation~\cite{jiang2019parallel}
\[
 \mu_{\rm ref}=
 \frac{15\rho_{\rm ref}\lambda_{\rm ref}\sqrt{2\pi RT_{\rm ref}}}
 {2(5-2\omega_{\rm ref})(7-2\omega_{\rm ref})}.
\]
Here $\rho_{\rm ref}$ and $\lambda_{\rm ref}$ are the density and mean free path at the reference state, and $\omega_{\rm ref}=0.81$ for the square cylinder and $0.5$ for the argon shock and rarefied cavity. The square cylinder uses the free stream reference state, with $RT_{\rm ref}=1$. The argon shock and cavity use the upstream state and the initial equilibrium state at $T_w$, respectively, with $RT_{\rm ref}=0.5$.

For the cavity at $\mathrm{Re}=1000$, $\mu_{\rm ref}=\rho_0U_{\rm lid}L/\mathrm{Re}$, where $\rho_0$, $U_{\rm lid}$, and $L$ are the initial density, lid speed, and cavity side length.

All steady calculations use an explicit CFL number of $\mathrm{CFL}_{\mathrm{exp}}=0.5$. The implicit CFL is $\min((k+1)^{1.2},100)$ on coarse meshes and $\min((k+6)^{1.2},100)$ on the final fine mesh, with $k=0,1,\ldots$ on each mesh.

\subsection{Sod shock tube}\label{sec:sod}
The Sod shock tube illustrates DOF adaptation as the flow changes from hydrodynamic waves to strongly rarefied flow. The Knudsen numbers are $\mathrm{Kn}=10^{-4},10^{-3},10^{-2},0.1,1$, and $10$. The specific heat ratio is $\gamma=7/5$, and the initial variables are
\begin{equation}
 (\rho,U_x,p)(x,0)=
 \begin{cases}
 (1,0,1),&x<1/2,\\
 (0.125,0,0.1),&x>1/2,
 \end{cases}
 \label{eq:sod-initial}
\end{equation}
with zero transverse velocity. The reference viscosity is set by
\[
 \mu_{\rm ref}=
 \frac{15\sqrt{\pi}}{2(5-2\omega_{\rm ref})(7-2\omega_{\rm ref})}\,\mathrm{Kn},
 \qquad \omega_{\rm ref}=0.81,
\]
with $RT_{\rm ref}=1$ at the initial left state.

A uniform $100\times2$ mesh covers $[0,1]\times[0,1]$, with transmissive boundaries in $x$ and periodic boundaries in $y$. The velocity space domain is $[-7,7]^2$, and the truncation tolerances are $\varepsilon_h^{\mathrm{ref}}=\varepsilon_b^{\mathrm{ref}}=3\times10^{-5}$. Both methods use a CFL number of $0.8$. Results are compared at $t=0.12$, with temperature plotted as $T=2\theta$.

Figures~\ref{fig:sod-1em4}--\ref{fig:sod-10} illustrate the change in wave structure over the full range of $\mathrm{Kn}$. At small $\mathrm{Kn}$, the rarefaction wave, contact discontinuity, and shock remain distinct. As $\mathrm{Kn}$ increases, these features broaden into smooth kinetic profiles. Density, velocity, and temperature agree well with UGKS in all six cases.

\begin{figure}[p]
 \centering
 \begin{minipage}[t]{0.325\textwidth}
  \centering
  \includegraphics[width=\linewidth]{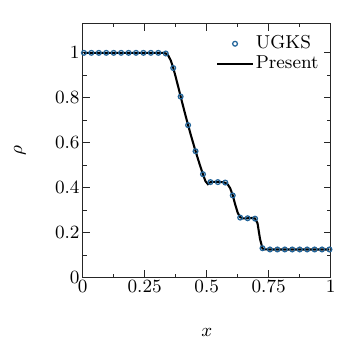}
  \panelcaption{(a) Density}
 \end{minipage}\hfill
 \begin{minipage}[t]{0.325\textwidth}
  \centering
  \includegraphics[width=\linewidth]{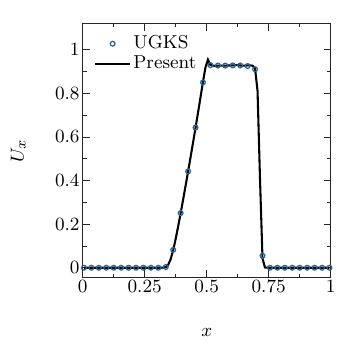}
  \panelcaption{(b) Velocity $U_x$}
 \end{minipage}\hfill
 \begin{minipage}[t]{0.325\textwidth}
  \centering
  \includegraphics[width=\linewidth]{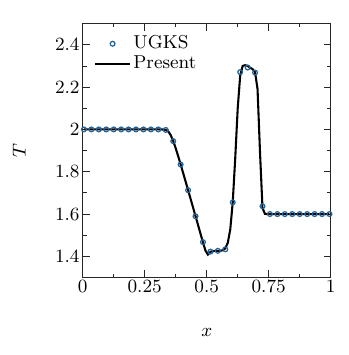}
  \panelcaption{(c) Temperature}
 \end{minipage}
 \caption{Sod shock tube at $\mathrm{Kn}=10^{-4}$.}
 \label{fig:sod-1em4}
\end{figure}

\begin{figure}[p]
 \centering
 \begin{minipage}[t]{0.325\textwidth}
  \centering
  \includegraphics[width=\linewidth]{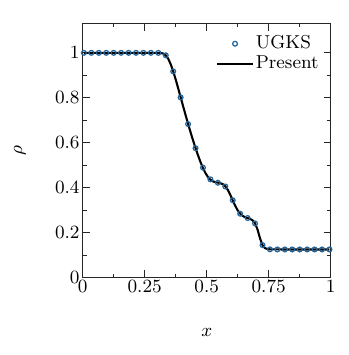}
  \panelcaption{(a) Density}
 \end{minipage}\hfill
 \begin{minipage}[t]{0.325\textwidth}
  \centering
  \includegraphics[width=\linewidth]{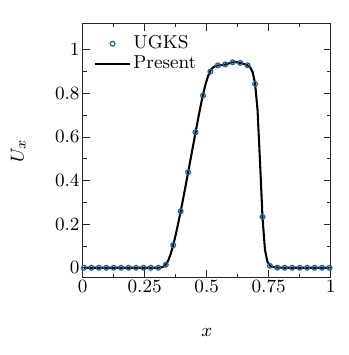}
  \panelcaption{(b) Velocity $U_x$}
 \end{minipage}\hfill
 \begin{minipage}[t]{0.325\textwidth}
  \centering
  \includegraphics[width=\linewidth]{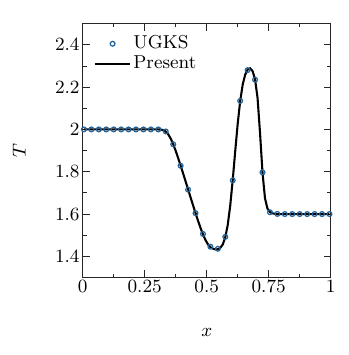}
  \panelcaption{(c) Temperature}
 \end{minipage}
 \caption{Sod shock tube at $\mathrm{Kn}=10^{-3}$.}
 \label{fig:sod-1em3}
\end{figure}

\begin{figure}[p]
 \centering
 \begin{minipage}[t]{0.325\textwidth}
  \centering
  \includegraphics[width=\linewidth]{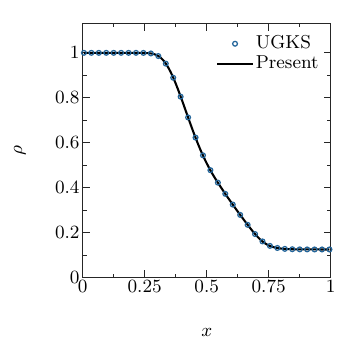}
  \panelcaption{(a) Density}
 \end{minipage}\hfill
 \begin{minipage}[t]{0.325\textwidth}
  \centering
  \includegraphics[width=\linewidth]{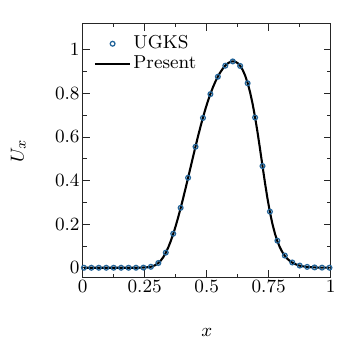}
  \panelcaption{(b) Velocity $U_x$}
 \end{minipage}\hfill
 \begin{minipage}[t]{0.325\textwidth}
  \centering
  \includegraphics[width=\linewidth]{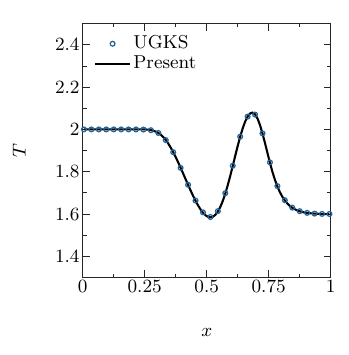}
  \panelcaption{(c) Temperature}
 \end{minipage}
 \caption{Sod shock tube at $\mathrm{Kn}=10^{-2}$.}
 \label{fig:sod-1em2}
\end{figure}

\begin{figure}[p]
 \centering
 \begin{minipage}[t]{0.325\textwidth}
  \centering
  \includegraphics[width=\linewidth]{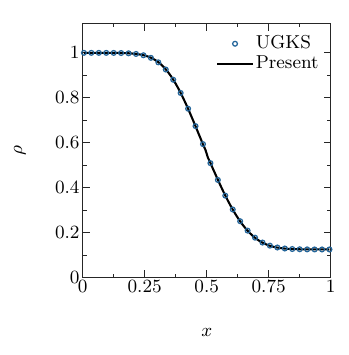}
  \panelcaption{(a) Density}
 \end{minipage}\hfill
 \begin{minipage}[t]{0.325\textwidth}
  \centering
  \includegraphics[width=\linewidth]{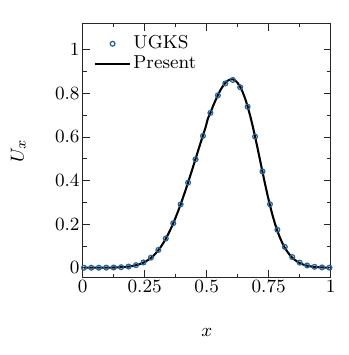}
  \panelcaption{(b) Velocity $U_x$}
 \end{minipage}\hfill
 \begin{minipage}[t]{0.325\textwidth}
  \centering
  \includegraphics[width=\linewidth]{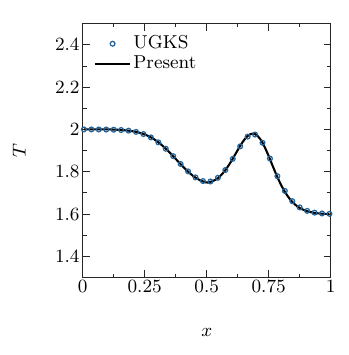}
  \panelcaption{(c) Temperature}
 \end{minipage}
 \caption{Sod shock tube at $\mathrm{Kn}=0.1$.}
 \label{fig:sod-1em1}
\end{figure}

\begin{figure}[p]
 \centering
 \begin{minipage}[t]{0.325\textwidth}
  \centering
  \includegraphics[width=\linewidth]{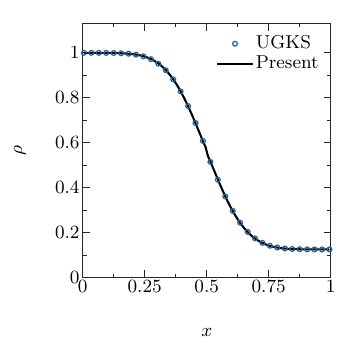}
  \panelcaption{(a) Density}
 \end{minipage}\hfill
 \begin{minipage}[t]{0.325\textwidth}
  \centering
  \includegraphics[width=\linewidth]{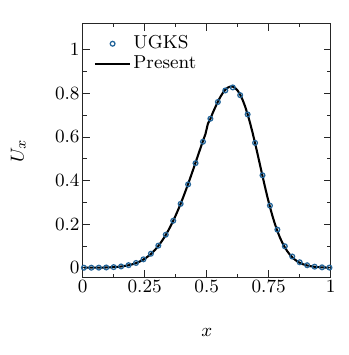}
  \panelcaption{(b) Velocity $U_x$}
 \end{minipage}\hfill
 \begin{minipage}[t]{0.325\textwidth}
  \centering
  \includegraphics[width=\linewidth]{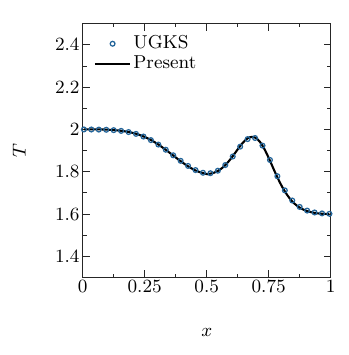}
  \panelcaption{(c) Temperature}
 \end{minipage}
 \caption{Sod shock tube at $\mathrm{Kn}=1$.}
 \label{fig:sod-1}
\end{figure}

\begin{figure}[p]
 \centering
 \begin{minipage}[t]{0.325\textwidth}
  \centering
  \includegraphics[width=\linewidth]{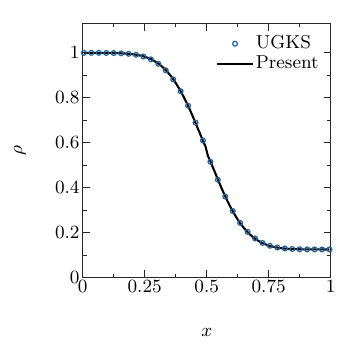}
  \panelcaption{(a) Density}
 \end{minipage}\hfill
 \begin{minipage}[t]{0.325\textwidth}
  \centering
  \includegraphics[width=\linewidth]{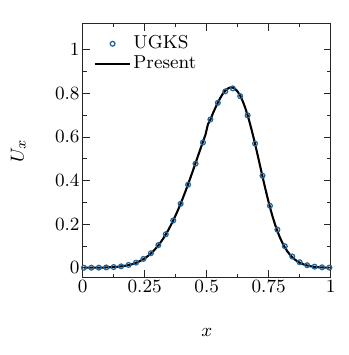}
  \panelcaption{(b) Velocity $U_x$}
 \end{minipage}\hfill
 \begin{minipage}[t]{0.325\textwidth}
  \centering
  \includegraphics[width=\linewidth]{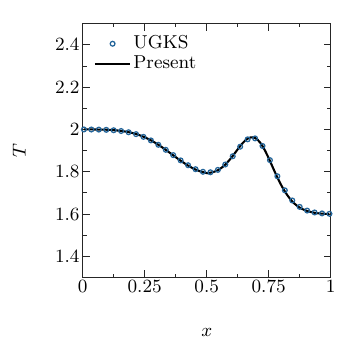}
  \panelcaption{(c) Temperature}
 \end{minipage}
 \caption{Sod shock tube at $\mathrm{Kn}=10$.}
 \label{fig:sod-10}
\end{figure}

Table~\ref{tab:sod-cost} summarizes the retained DOF and computational cost: this agreement is obtained with only $1.07\%$--$2.97\%$ of the full kinetic DOF. The retained fraction increases with $\mathrm{Kn}$ but remains below $3\%$ throughout the range. Using fewer DOF gives speedups of $26.63$--$58.50$ over UGKS. The largest speedup occurs at $\mathrm{Kn}=10^{-4}$, where the small nonequilibrium contribution requires the fewest DOF.

\begin{table}[!htbp]
 \centering
 \caption{Retained DOF and computational cost for the Sod shock tube at $t=0.12$. Time integration costs are medians of three calculations.}
 \label{tab:sod-cost}
 \begin{tabular}{ccrrr}
  \toprule
  $\mathrm{Kn}$ & Retained DOF (\%) & UGKS (s) & Present (s) & Speedup \\
  \midrule
  $10^{-4}$ & 1.07 & 56.16 & 0.96 & 58.50 \\
  $10^{-3}$ & 1.74 & 60.79 & 1.17 & 51.96 \\
  $10^{-2}$ & 2.54 & 55.63 & 1.66 & 33.51 \\
  $0.1$ & 2.90 & 55.90 & 2.03 & 27.54 \\
  $1$ & 2.97 & 56.19 & 2.11 & 26.63 \\
  $10$ & 2.97 & 57.39 & 2.12 & 27.07 \\
  \bottomrule
 \end{tabular}
\end{table}
\FloatBarrier

\subsection{Argon shock structure}\label{sec:shock}
The argon shock at $\mathrm{Ma}=8$ provides a more demanding test of nonequilibrium capturing. The upstream state, indexed by $1$, is $(\rho_1,U_{x,1},\theta_1)=(1,8\sqrt{5/6},0.5)$. The downstream state follows from the Rankine--Hugoniot relations. The $y$ velocity is zero. Lengths are normalized by the upstream hard sphere reference mean free path. The initial discontinuity lies at $x=0$. Upstream and downstream equilibria are prescribed at the left and right boundaries.

The domain $[-25,25]\times[0,50]$ uses a uniform $100\times2$ mesh with periodic boundaries in $y$. The velocity space domain is $[-15,15]^2$, the truncation tolerances are $\varepsilon_h^{\mathrm{ref}}=\varepsilon_b^{\mathrm{ref}}=10^{-4}$, and the CFL number is $0.95$. Results are compared at $t=250$.

The profiles are normalized as $Q_n=(Q-Q_1)/(Q_2-Q_1)$ for $Q\in\{\rho,U_x,T\}$, with subscripts $1$ and $2$ denoting the upstream and downstream states. The comparison in Fig.~\ref{fig:argon-shock} shows close agreement with UGKS through the shock transition for all three profiles.

\begin{figure}[!htbp]
 \centering
 \begin{minipage}[t]{0.325\textwidth}
  \centering
  \includegraphics[width=\linewidth]{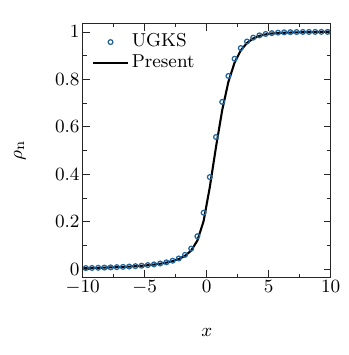}
  \panelcaption{(a) Density}
 \end{minipage}\hfill
 \begin{minipage}[t]{0.325\textwidth}
  \centering
  \includegraphics[width=\linewidth]{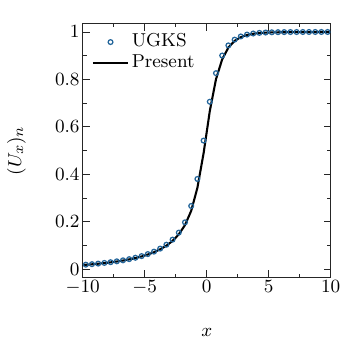}
  \panelcaption{(b) Velocity}
 \end{minipage}\hfill
 \begin{minipage}[t]{0.325\textwidth}
  \centering
  \includegraphics[width=\linewidth]{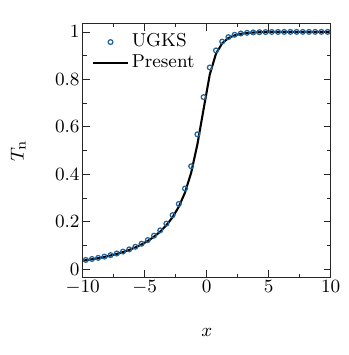}
  \panelcaption{(c) Temperature}
 \end{minipage}
 \caption{Normalized density, velocity, and temperature profiles across the $\mathrm{Ma}=8$ argon shock. Solid lines denote the present method; open circles denote UGKS.}
 \label{fig:argon-shock}
\end{figure}

The shock requires more DOF than the Sod tests, yet the retained fraction is only $7.47\%$ of the full kinetic DOF. Reaching $t=250$ takes $139$~s, giving a speedup of $12.64$ over full nodal UGKS.

An additional one dimensional calculation examines how successive partial sums of the discrete basis expansion approximate the bimodal distribution function within the shock. The setup and results are given in \ref{app:shock-distribution}.

\FloatBarrier

\subsection{Cavity flow driven by a moving lid}\label{sec:cavity}
Lid driven cavity flow is computed with the implicit formulation at $\mathrm{Re}=1000$ in the continuum regime and at $\mathrm{Kn}=0.1$ and $1$ in the rarefied regime. These cases illustrate how local DOF adaptation and steady convergence change with the level of nonequilibrium. The gas occupies a square of side length $L=1$ and is initially at rest with density $\rho_0=1$ and temperature $T_w$. All walls are held at $T_w$ and obey diffuse reflection, with the upper wall moving to the right at $U_{\rm lid}=0.148305$. Velocities are normalized by $\sqrt{2RT_w}$, giving $\theta_w=1/2$. For the rarefied cases, $\mathrm{Kn}=\lambda_{\rm ref}/L$.

The rarefied cases use uniform $64\times64$ fine meshes. At $\mathrm{Re}=1000$, a $96\times96$ mesh is refined near the walls, with a minimum cell length of $0.004L$. Each coarse mesh has half as many cells in each direction. The velocity domain is $[-6,6]^2$. The state and implicit source block truncation tolerances are $\varepsilon_s^{\mathrm{ref}}=\varepsilon_{S,s}^{\mathrm{ref}}=10^{-7}$ at $\mathrm{Re}=1000$ and $10^{-5}$ at $\mathrm{Kn}=0.1$ and $1$, for $s=h,b$.

The accelerated calculations begin with a coarse mesh warm start, as described in \ref{app:warm-start}. Once $R_{\Delta\rho}$ falls below $10^{-5}$ on the coarse mesh, the solution is transferred to the fine mesh. Multi step macro prediction and conservative Anderson acceleration are used on both meshes. The fine mesh stopping tolerance for $R_{\Delta\rho}$ is $10^{-7}$ at $\mathrm{Re}=1000$ and $10^{-6}$ at $\mathrm{Kn}=0.1$ and $1$.

Figure~\ref{fig:cavity-centerlines} compares the centerline velocity components with the benchmark of Ghia et al.~\cite{ghia1982high} at $\mathrm{Re}=1000$ and with UGKS in the rarefied cases. Both components agree well with their respective references.

\begin{figure}[!htbp]
 \centering
 \begin{minipage}[t]{0.32\textwidth}
  \centering
  \includegraphics[width=\linewidth]{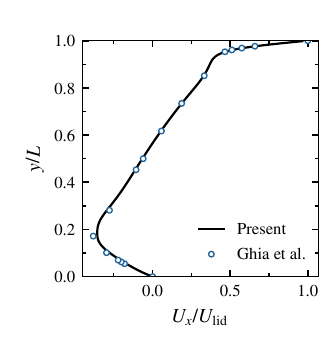}
  \panelcaption{(a) $\mathrm{Re}=1000$}
 \end{minipage}\hfill
 \begin{minipage}[t]{0.32\textwidth}
  \centering
  \includegraphics[width=\linewidth]{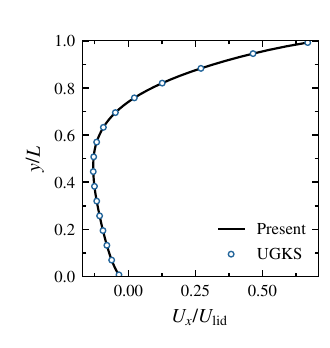}
  \panelcaption{(b) $\mathrm{Kn}=0.1$}
 \end{minipage}\hfill
 \begin{minipage}[t]{0.32\textwidth}
  \centering
  \includegraphics[width=\linewidth]{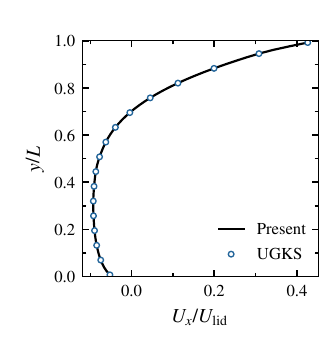}
  \panelcaption{(c) $\mathrm{Kn}=1$}
 \end{minipage}
\par\vspace{0.4em}
 \begin{minipage}[t]{0.32\textwidth}
  \centering
  \includegraphics[width=\linewidth]{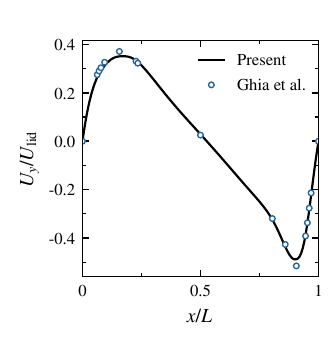}
  \panelcaption{(d) $\mathrm{Re}=1000$}
 \end{minipage}\hfill
 \begin{minipage}[t]{0.32\textwidth}
  \centering
  \includegraphics[width=\linewidth]{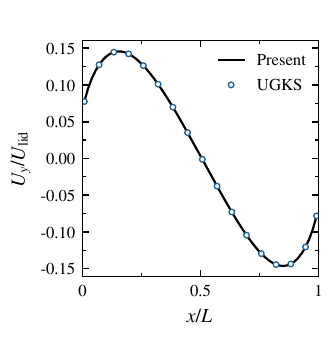}
  \panelcaption{(e) $\mathrm{Kn}=0.1$}
 \end{minipage}\hfill
 \begin{minipage}[t]{0.32\textwidth}
  \centering
  \includegraphics[width=\linewidth]{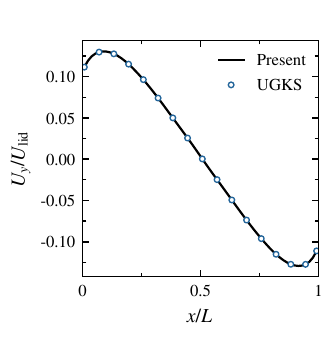}
  \panelcaption{(f) $\mathrm{Kn}=1$}
 \end{minipage}
 \caption{Cavity centerline velocities: (a)--(c) $U_x$ along $x/L=0.5$ and (d)--(f) $U_y$ along $y/L=0.5$. Black solid lines denote the present method, and blue open circles denote Ghia et al.~\cite{ghia1982high} at $\mathrm{Re}=1000$ and UGKS in the rarefied cases.}
 \label{fig:cavity-centerlines}
\end{figure}
\FloatBarrier

At $\mathrm{Re}=1000$, the streamlines in Fig.~\ref{fig:cavity-streamlines} reveal a dominant central vortex and two weaker vortices in the lower corners.

\begin{figure}[!htbp]
 \centering
 \includegraphics[width=0.70\textwidth]{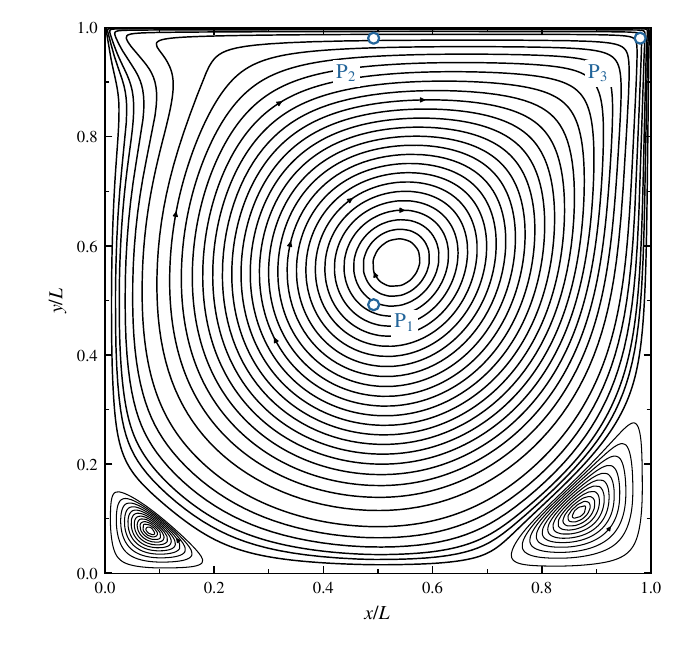}
 \caption{Streamlines of the steady mass flux in the cavity at $\mathrm{Re}=1000$. All three cavity cases use the cell centers nearest the sampling targets $(x/L,y/L)$: $\mathrm{P}_1=(0.50,0.50)$, $\mathrm{P}_2=(0.50,0.98)$, and $\mathrm{P}_3=(0.98,0.98)$.}
 \label{fig:cavity-streamlines}
\end{figure}
\FloatBarrier

\clearpage
The DOF map in Fig.~\ref{fig:cavity-re1000}(c) shows that few kinetic DOF are needed in the continuum cavity interior, with additional DOF concentrated near the two upper corners. The entire solution retains only $2.13\%$ of the full kinetic DOF.

\begin{figure}[!htbp]
 \centering
 \begin{minipage}[t]{0.32\textwidth}
  \centering
  \includegraphics[width=\linewidth]{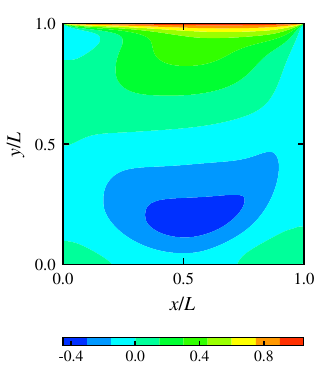}
  \panelcaption{(a) $U_x/U_{\rm lid}$}
 \end{minipage}\hfill
 \begin{minipage}[t]{0.32\textwidth}
  \centering
  \includegraphics[width=\linewidth]{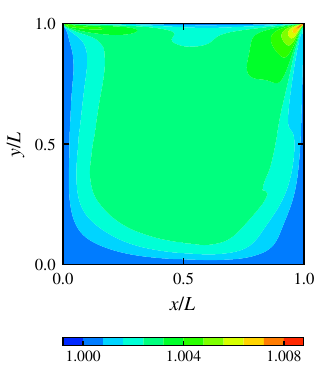}
  \panelcaption{(b) $T/T_w$}
 \end{minipage}\hfill
 \begin{minipage}[t]{0.32\textwidth}
  \centering
  \includegraphics[width=\linewidth]{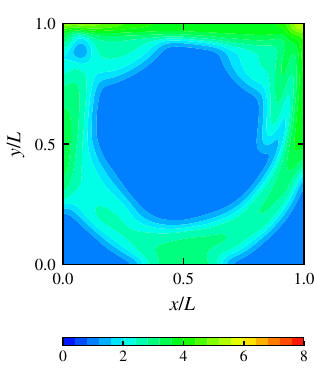}
  \panelcaption{(c) Retained DOF (\%)}
 \end{minipage}
 \par\medskip
 \begin{minipage}[t]{0.34\textwidth}
  \centering
  \includegraphics[width=\linewidth]{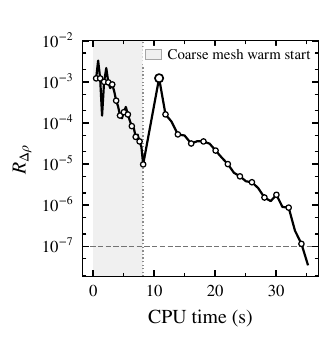}
  \panelcaption{(d) $R_{\Delta\rho}$}
 \end{minipage}\hspace{0.03\textwidth}
 \begin{minipage}[t]{0.34\textwidth}
  \centering
  \includegraphics[width=\linewidth]{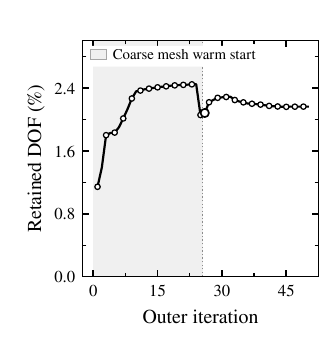}
  \panelcaption{(e) DOF adaptation}
 \end{minipage}
 \par\smallskip
 \caption{Velocity, temperature, retained DOF, and iteration histories for the $\mathrm{Re}=1000$ cavity. Shading marks the coarse mesh warm start; dotted vertical lines mark the transfer to the fine mesh. Larger circles identify the first fine mesh iteration, and the dashed horizontal line indicates the stopping tolerance.}
 \label{fig:cavity-re1000}
\end{figure}
\FloatBarrier

The distributions plotted in Fig.~\ref{fig:cavity-distributions} remain close to their local Maxwellians at all three sampling locations. The maximum nodal values of $|h-M_h|$ increase from $1.06\times10^{-6}$ at $\mathrm{P}_1$ to $6.70\times10^{-5}$ at $\mathrm{P}_2$ and $1.95\times10^{-4}$ at $\mathrm{P}_3$. This increase toward the moving lid and upper right corner is consistent with the greater local kinetic resolution in Fig.~\ref{fig:cavity-re1000}(c).

\begin{figure}[!htbp]
 \centering
 \begin{minipage}[t]{0.325\textwidth}
  \centering
  \includegraphics[width=\linewidth]{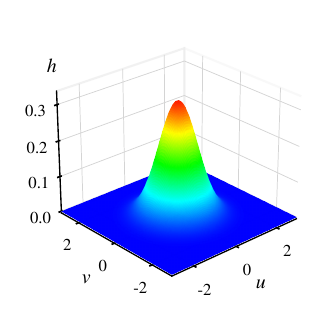}
  \panelcaption{(a) $\mathrm{P}_1$}
 \end{minipage}\hfill
 \begin{minipage}[t]{0.325\textwidth}
  \centering
  \includegraphics[width=\linewidth]{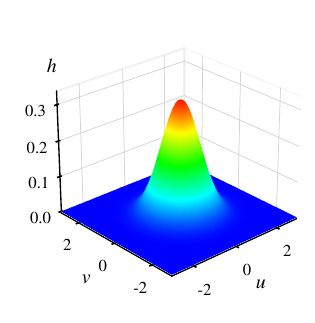}
  \panelcaption{(b) $\mathrm{P}_2$}
 \end{minipage}\hfill
 \begin{minipage}[t]{0.325\textwidth}
  \centering
  \includegraphics[width=\linewidth]{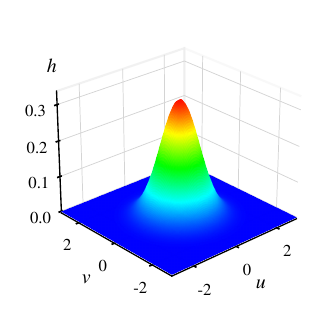}
  \panelcaption{(c) $\mathrm{P}_3$}
 \end{minipage}
 \par\medskip
 \begin{minipage}[t]{0.325\textwidth}
  \centering
  \includegraphics[width=\linewidth]{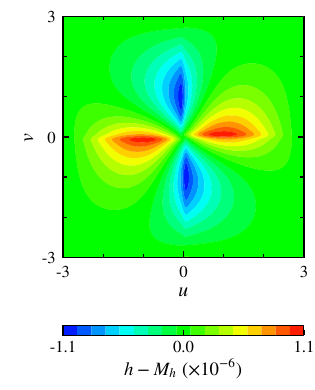}
  \panelcaption{(d) $\mathrm{P}_1$}
 \end{minipage}\hfill
 \begin{minipage}[t]{0.325\textwidth}
  \centering
  \includegraphics[width=\linewidth]{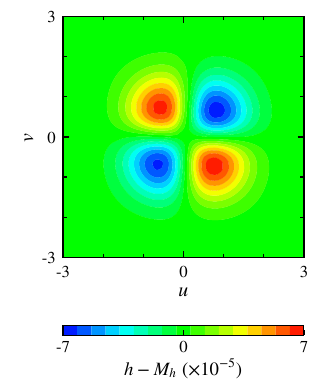}
  \panelcaption{(e) $\mathrm{P}_2$}
 \end{minipage}\hfill
 \begin{minipage}[t]{0.325\textwidth}
  \centering
  \includegraphics[width=\linewidth]{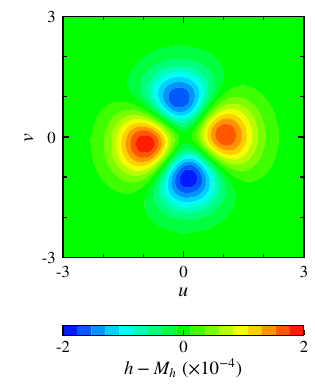}
  \panelcaption{(f) $\mathrm{P}_3$}
 \end{minipage}
 \caption{$h(u,v)$ (top) and $h-M_h$ (bottom) at the three sampling locations in the $\mathrm{Re}=1000$ cavity.}
 \label{fig:cavity-distributions}
\end{figure}
\FloatBarrier

The two rarefied cases use the same sampling targets marked in Fig.~\ref{fig:cavity-streamlines}. The DOF maps in Figs.~\ref{fig:cavity-kn01}(c) and~\ref{fig:cavity-kn1}(c) reflect the wider extent of nonequilibrium: more DOF are retained throughout the cavity as rarefaction increases. The retained fraction rises to $10.29\%$ at $\mathrm{Kn}=0.1$, yet remains at only $27.20\%$ even at $\mathrm{Kn}=1$.

\begin{figure}[!htbp]
 \centering
 \begin{minipage}[t]{0.32\textwidth}
  \centering
  \includegraphics[width=\linewidth]{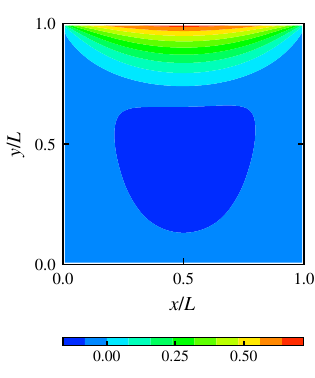}
  \panelcaption{(a) $U_x/U_{\rm lid}$}
 \end{minipage}\hfill
 \begin{minipage}[t]{0.32\textwidth}
  \centering
  \includegraphics[width=\linewidth]{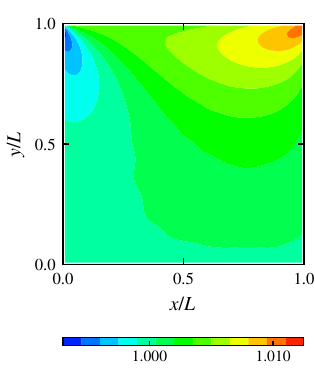}
  \panelcaption{(b) $T/T_w$}
 \end{minipage}\hfill
 \begin{minipage}[t]{0.32\textwidth}
  \centering
  \includegraphics[width=\linewidth]{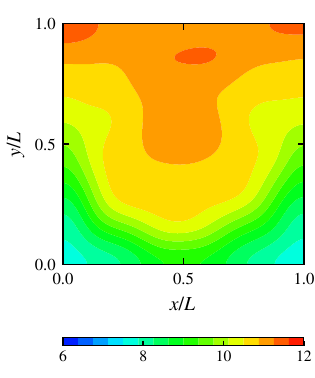}
  \panelcaption{(c) Retained DOF (\%)}
 \end{minipage}
 \par\medskip
 \begin{minipage}[t]{0.34\textwidth}
  \centering
  \includegraphics[width=\linewidth]{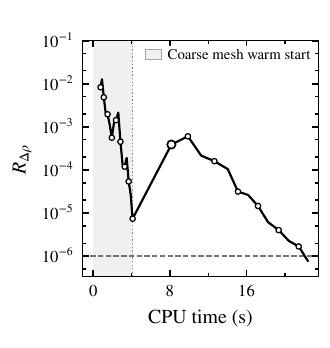}
  \panelcaption{(d) $R_{\Delta\rho}$}
 \end{minipage}\hspace{0.03\textwidth}
 \begin{minipage}[t]{0.34\textwidth}
  \centering
  \includegraphics[width=\linewidth]{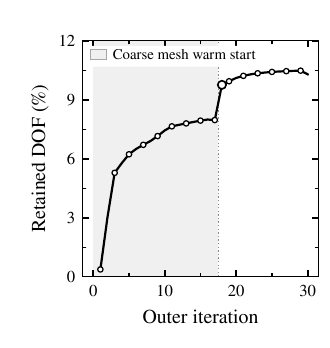}
  \panelcaption{(e) DOF adaptation}
 \end{minipage}
 \par\smallskip
 \caption{Velocity, temperature, retained DOF, and iteration histories for cavity flow at $\mathrm{Kn}=0.1$.}
 \label{fig:cavity-kn01}
\end{figure}
\FloatBarrier

At $\mathrm{Kn}=0.1$, the distributions depart visibly from local equilibrium, as seen in Fig.~\ref{fig:cavity-kn01-distributions}. The maximum nodal values of $|h-M_h|$ are $1.48\times10^{-3}$, $9.16\times10^{-3}$, and $1.68\times10^{-2}$ at $\mathrm{P}_1$, $\mathrm{P}_2$, and $\mathrm{P}_3$, respectively, with stronger departures near the moving lid and upper right corner.

\begin{figure}[!htbp]
 \centering
 \begin{minipage}[t]{0.325\textwidth}
  \centering
  \includegraphics[width=\linewidth]{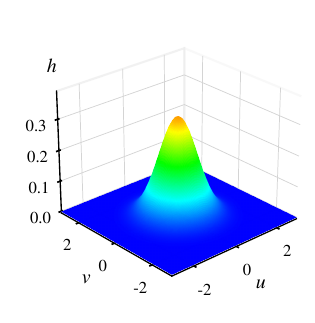}
  \panelcaption{(a) $\mathrm{P}_1$}
 \end{minipage}\hfill
 \begin{minipage}[t]{0.325\textwidth}
  \centering
  \includegraphics[width=\linewidth]{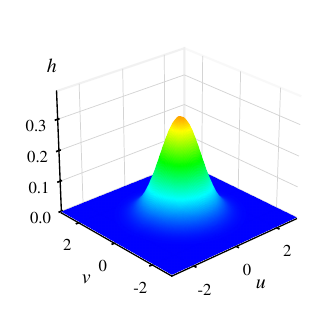}
  \panelcaption{(b) $\mathrm{P}_2$}
 \end{minipage}\hfill
 \begin{minipage}[t]{0.325\textwidth}
  \centering
  \includegraphics[width=\linewidth]{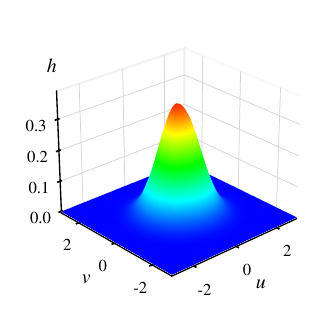}
  \panelcaption{(c) $\mathrm{P}_3$}
 \end{minipage}
 \par\medskip
 \begin{minipage}[t]{0.325\textwidth}
  \centering
  \includegraphics[width=\linewidth]{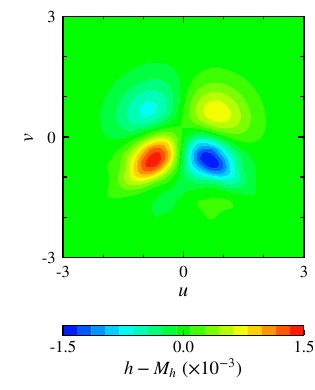}
  \panelcaption{(d) $\mathrm{P}_1$}
 \end{minipage}\hfill
 \begin{minipage}[t]{0.325\textwidth}
  \centering
  \includegraphics[width=\linewidth]{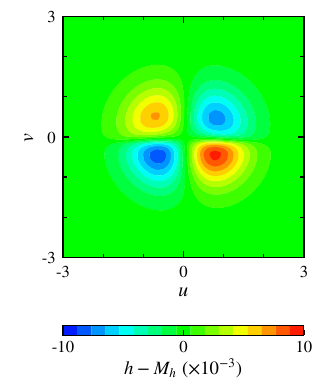}
  \panelcaption{(e) $\mathrm{P}_2$}
 \end{minipage}\hfill
 \begin{minipage}[t]{0.325\textwidth}
  \centering
  \includegraphics[width=\linewidth]{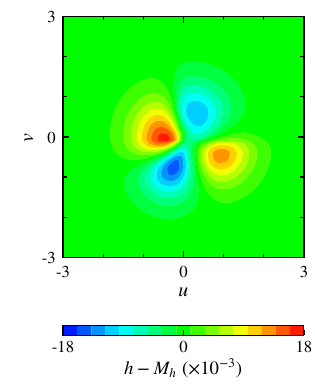}
  \panelcaption{(f) $\mathrm{P}_3$}
 \end{minipage}
 \caption{$h(u,v)$ (top) and $h-M_h$ (bottom) at the three sampling locations in the cavity at $\mathrm{Kn}=0.1$.}
 \label{fig:cavity-kn01-distributions}
\end{figure}
\FloatBarrier

\clearpage
The histories in Figs.~\ref{fig:cavity-kn01}(d,e) and~\ref{fig:cavity-kn1}(d,e) track convergence and DOF adaptation in the rarefied cases. After transfer to the fine mesh, the retained fraction increases and then levels off as $R_{\Delta\rho}$ approaches its stopping tolerance.

\begin{figure}[!htbp]
 \centering
 \begin{minipage}[t]{0.32\textwidth}
  \centering
  \includegraphics[width=\linewidth]{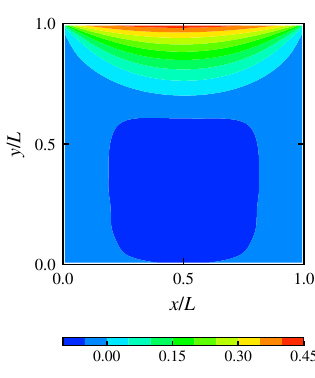}
  \panelcaption{(a) $U_x/U_{\rm lid}$}
 \end{minipage}\hfill
 \begin{minipage}[t]{0.32\textwidth}
  \centering
  \includegraphics[width=\linewidth]{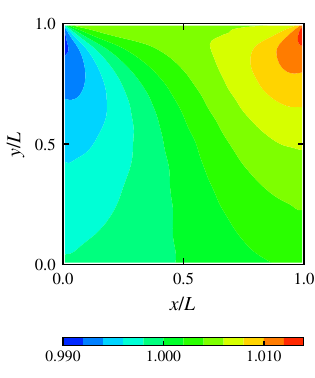}
  \panelcaption{(b) $T/T_w$}
 \end{minipage}\hfill
 \begin{minipage}[t]{0.32\textwidth}
  \centering
  \includegraphics[width=\linewidth]{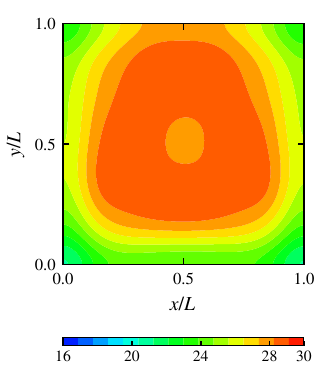}
  \panelcaption{(c) Retained DOF (\%)}
 \end{minipage}
 \par\medskip
 \begin{minipage}[t]{0.34\textwidth}
  \centering
  \includegraphics[width=\linewidth]{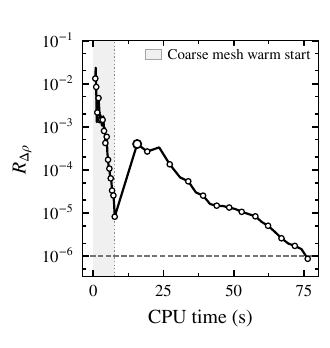}
  \panelcaption{(d) $R_{\Delta\rho}$}
 \end{minipage}\hspace{0.03\textwidth}
 \begin{minipage}[t]{0.34\textwidth}
  \centering
  \includegraphics[width=\linewidth]{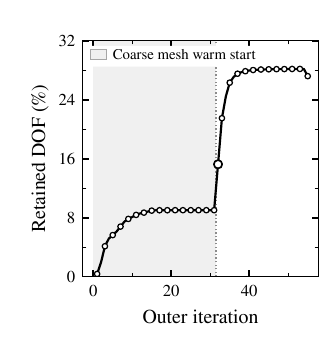}
  \panelcaption{(e) DOF adaptation}
 \end{minipage}
 \par\smallskip
 \caption{Velocity, temperature, retained DOF, and iteration histories for cavity flow at $\mathrm{Kn}=1$.}
 \label{fig:cavity-kn1}
\end{figure}
\FloatBarrier

Comparing the distributions in Figs.~\ref{fig:cavity-kn01-distributions} and~\ref{fig:cavity-kn1-distributions} reveals stronger departures from equilibrium at all three locations as $\mathrm{Kn}$ increases from $0.1$ to $1$. The maximum nodal values of $|h-M_h|$ reach $1.36\times10^{-2}$, $2.77\times10^{-2}$, and $2.31\times10^{-2}$ at $\mathrm{P}_1$, $\mathrm{P}_2$, and $\mathrm{P}_3$, respectively; the increase at $\mathrm{P}_1$ indicates stronger nonequilibrium near the cavity center.

\begin{figure}[!htbp]
 \centering
 \begin{minipage}[t]{0.325\textwidth}
  \centering
  \includegraphics[width=\linewidth]{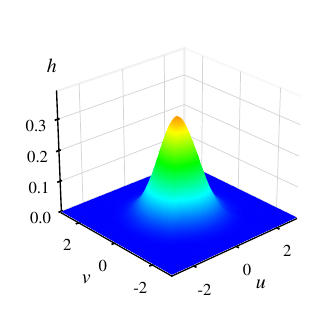}
  \panelcaption{(a) $\mathrm{P}_1$}
 \end{minipage}\hfill
 \begin{minipage}[t]{0.325\textwidth}
  \centering
  \includegraphics[width=\linewidth]{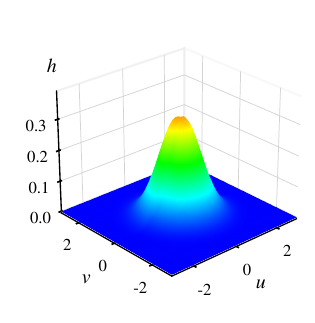}
  \panelcaption{(b) $\mathrm{P}_2$}
 \end{minipage}\hfill
 \begin{minipage}[t]{0.325\textwidth}
  \centering
  \includegraphics[width=\linewidth]{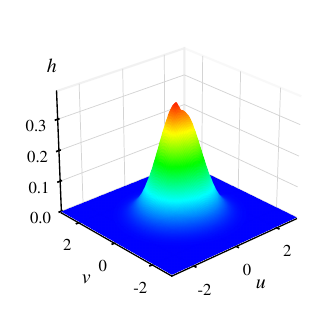}
  \panelcaption{(c) $\mathrm{P}_3$}
 \end{minipage}
 \par\medskip
 \begin{minipage}[t]{0.325\textwidth}
  \centering
  \includegraphics[width=\linewidth]{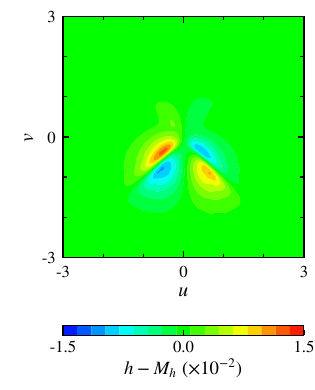}
  \panelcaption{(d) $\mathrm{P}_1$}
 \end{minipage}\hfill
 \begin{minipage}[t]{0.325\textwidth}
  \centering
  \includegraphics[width=\linewidth]{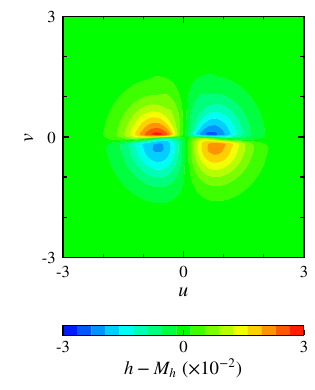}
  \panelcaption{(e) $\mathrm{P}_2$}
 \end{minipage}\hfill
 \begin{minipage}[t]{0.325\textwidth}
  \centering
  \includegraphics[width=\linewidth]{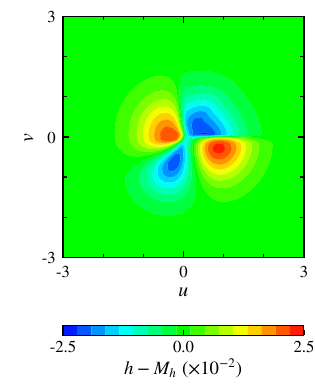}
  \panelcaption{(f) $\mathrm{P}_3$}
 \end{minipage}
 \caption{$h(u,v)$ (top) and $h-M_h$ (bottom) at the three sampling locations in the cavity at $\mathrm{Kn}=1$.}
 \label{fig:cavity-kn1-distributions}
\end{figure}
\FloatBarrier

Table~\ref{tab:cavity-cost} compares full nodal IUGKS, NCIUGKS, and accelerated NCIUGKS. Accelerated NCIUGKS refers to the method that uses multi step macro prediction based on separating the NS part from the high order residual in the macroscopic equations, conservative Anderson acceleration, and a coarse mesh warm start. Stepwise speedup compares total times in adjacent method columns; overall speedup is relative to full nodal IUGKS. Iteration costs refer to the fine mesh. In the ``Iterations'' rows of Table~\ref{tab:cavity-cost}, entries $a+b$ in the ``Accelerated NCIUGKS'' column give the coarse and fine mesh iteration counts, respectively.

At $\mathrm{Re}=1000$, DOF adaptation reduces the cost per iteration by a factor of $33.87$ and the total solution time from $22.91$~h to $40.58$~min. The accelerated method provides a further speedup of $69.21$, bringing the solution time to $35.18$~s and the overall speedup to $2344.27$.

The savings from DOF adaptation become smaller as nonequilibrium grows: at $\mathrm{Kn}=1$, the cost per iteration is reduced by a factor of $3.84$. Steady acceleration provides a further speedup of about four in both rarefied cases, giving overall speedups of $47.32$ at $\mathrm{Kn}=0.1$ and $13.71$ at $\mathrm{Kn}=1$. The largest benefit therefore occurs near the continuum limit, where fewer DOF combine with rapid steady convergence.

\begin{table}[!htbp]
 \centering
 \caption{Computational cost and speedup for the cavity tests.}
 \label{tab:cavity-cost}
 \small
 \renewcommand{\arraystretch}{1.15}
 \setlength{\tabcolsep}{3pt}
 \begin{tabular*}{\textwidth}{@{\extracolsep{\fill}}llccc@{}}
 \toprule
 Case & & Full nodal IUGKS & NCIUGKS & Accelerated NCIUGKS\\
 \midrule
$\mathrm{Re}=1000$ & Total time & 22.91 h & 40.58 min & \textbf{35.18 s}\\[-2pt]
 & & {\footnotesize($82\,467.00$ s)} & {\footnotesize($2\,434.74$ s)} & \\[2pt]
 & Stepwise speedup & --- & $33.87\times$ & $69.21\times$\\
 & Overall speedup & $1.00\times$ & $33.87\times$ & $\boldsymbol{2\,344.27\times}$\\
 \cmidrule(l){2-5}
 & Time/iteration (s) & 21.00 & 0.62 & 1.07\\
 & Iterations & 3927 & 3927 & $25+25$\\
 \midrule
$\mathrm{Kn}=0.1$ & Total time & 17.65 min & 1.61 min & \textbf{22.38 s}\\[-2pt]
 & & {\footnotesize($1\,059.10$ s)} & {\footnotesize($96.76$ s)} & \\[2pt]
 & Stepwise speedup & --- & $10.95\times$ & $4.32\times$\\
 & Overall speedup & $1.00\times$ & $10.95\times$ & $\boldsymbol{47.32\times}$\\
 \cmidrule(l){2-5}
 & Time/iteration (s) & 8.90 & 0.82 & 1.22\\
 & Iterations & 119 & 118 & $17+13$\\
 \midrule
$\mathrm{Kn}=1$ & Total time & 17.44 min & 4.90 min & \textbf{76.34 s}\\[-2pt]
 & & {\footnotesize($1\,046.50$ s)} & {\footnotesize($293.88$ s)} & \\[2pt]
 & Stepwise speedup & --- & $3.56\times$ & $3.85\times$\\
 & Overall speedup & $1.00\times$ & $3.56\times$ & $\boldsymbol{13.71\times}$\\
 \cmidrule(l){2-5}
 & Time/iteration (s) & 9.10 & 2.37 & 2.74\\
 & Iterations & 115 & 124 & $31+24$\\
 \bottomrule
 \end{tabular*}
\end{table}

\FloatBarrier

\subsection{Flow past a square cylinder}\label{sec:square}
Hypersonic flow past a square cylinder brings a detached bow shock and an expanding wake into the same flow field. The flow is computed at $\mathrm{Ma}=5$ and $\mathrm{Kn}=0.1$ with the implicit formulation. For a cylinder of side length $D=1$, symmetry reduces the domain to $[-6,8]\times[0,6]$, with the solid occupying $[-0.5,0.5]\times[0,0.5]$. The free stream has $\rho_\infty=\theta_\infty=1$, $U_\infty=5\sqrt{5/3}$, and $V_\infty=0$. The Knudsen number is based on the VHS mean free path, $\mathrm{Kn}=\lambda_\infty/D$. The stationary cylinder is isothermal at $\theta_w=1$ and obeys diffuse reflection. The lower boundary outside the cylinder is a symmetry boundary; incoming molecules at the remaining boundaries follow the free stream distribution function.

A nonuniform $154\times84$ Cartesian mesh contains $12{,}096$ fluid cells, with a minimum spacing of $0.01D$. The velocity domain is $[-15,15]^2$. The state and implicit source block truncation tolerances are $\varepsilon_s^{\mathrm{ref}}=\varepsilon_{S,s}^{\mathrm{ref}}=10^{-4}$ for $s=h,b$. Coarse mesh warm start on $34\times17$ and $77\times42$ meshes is combined with multi step macro prediction and conservative Anderson acceleration.

Figure~\ref{fig:square-fields} displays the bow shock and wake expansion.

\begin{figure}[!htbp]
 \centering
 \begin{minipage}[t]{0.325\textwidth}
  \centering
  \includegraphics[width=\linewidth]{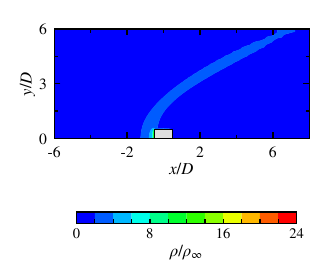}
  \panelcaption{(a) Density}
 \end{minipage}\hfill
 \begin{minipage}[t]{0.325\textwidth}
  \centering
  \includegraphics[width=\linewidth]{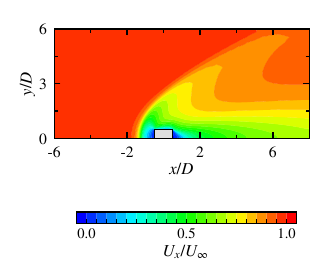}
  \panelcaption{(b) Velocity $U_x$}
 \end{minipage}\hfill
 \begin{minipage}[t]{0.325\textwidth}
  \centering
  \includegraphics[width=\linewidth]{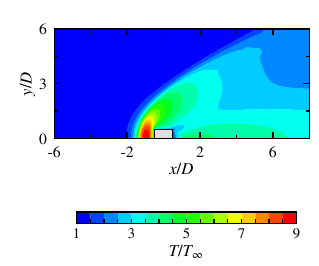}
  \panelcaption{(c) Temperature}
 \end{minipage}
 \caption{Normalized density, streamwise velocity, and temperature around the square cylinder at $\mathrm{Ma}=5$ and $\mathrm{Kn}=0.1$.}
 \label{fig:square-fields}
\end{figure}
\FloatBarrier

Along the stagnation line, the density, velocity, and temperature profiles in Fig.~\ref{fig:square-profiles} closely follow UGKS through the shock and toward the cylinder surface.

\begin{figure}[!htbp]
 \centering
 \begin{minipage}[t]{0.325\textwidth}
  \centering
  \includegraphics[width=\linewidth]{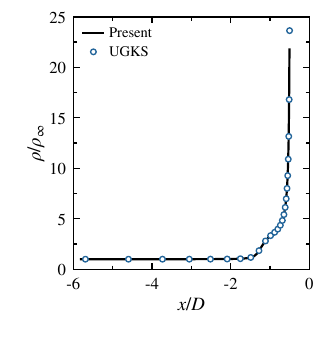}
  \panelcaption{(a) Density}
 \end{minipage}\hfill
 \begin{minipage}[t]{0.325\textwidth}
  \centering
  \includegraphics[width=\linewidth]{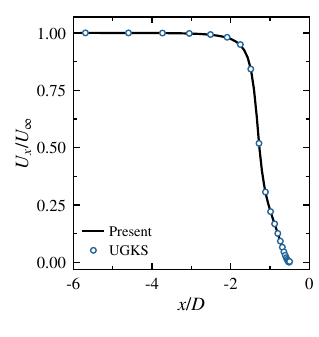}
  \panelcaption{(b) Velocity $U_x$}
 \end{minipage}\hfill
 \begin{minipage}[t]{0.325\textwidth}
  \centering
  \includegraphics[width=\linewidth]{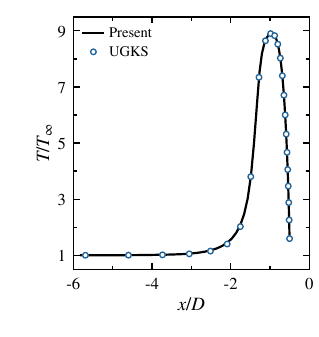}
  \panelcaption{(c) Temperature}
 \end{minipage}
 \caption{Stagnation line profiles for the square cylinder at $\mathrm{Ma}=5$ and $\mathrm{Kn}=0.1$. Solid lines denote the present method; circles denote UGKS.}
 \label{fig:square-profiles}
\end{figure}
\FloatBarrier

In Fig.~\ref{fig:square-adaptation}(a), the local DOF count follows the spatial variation of nonequilibrium. More DOF are needed across the shock, near the cylinder, and in the rarefaction wave, while few are needed in other regions, such as the free stream. Even in regions of strong nonequilibrium, the maximum local retained fraction is only about $12\%$. Figure~\ref{fig:square-adaptation}(b) records a retained fraction of $7.03\%$--$7.50\%$ during convergence on the fine mesh.

\begin{figure}[!htbp]
 \centering
 \begin{minipage}[t]{0.485\textwidth}
  \centering
  \includegraphics[width=\linewidth]{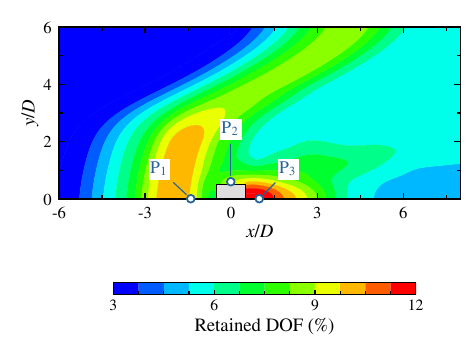}
  \panelcaption{(a) Spatial distribution}
 \end{minipage}\hfill
 \begin{minipage}[t]{0.485\textwidth}
  \centering
  \includegraphics[width=\linewidth]{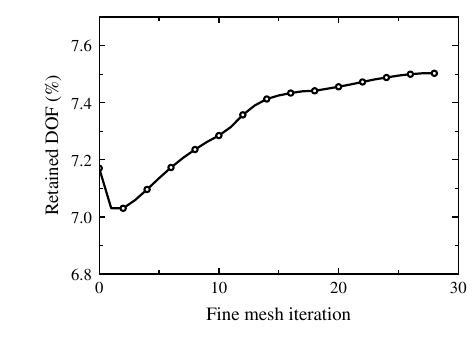}
  \panelcaption{(b) Iteration history}
 \end{minipage}
 \caption{Retained DOF for the square cylinder at $\mathrm{Ma}=5$ and $\mathrm{Kn}=0.1$. The history in (b) covers the fine mesh iteration. The sampling locations $(x/D,y/D)$ in (a) are $\mathrm{P}_1=(-1.4072,0.0119)$, $\mathrm{P}_2=(-0.0125,0.6056)$, and $\mathrm{P}_3=(0.9814,0.0119)$.}
 \label{fig:square-adaptation}
\end{figure}
\FloatBarrier
\clearpage

The three locations marked in Fig.~\ref{fig:square-adaptation}(a) sample the bow shock, the flow above the cylinder, and the near wake. Their distributions in Fig.~\ref{fig:square-distributions} exhibit pronounced departures from local equilibrium, with maximum nodal values of $|h-M_h|$ of $5.20\times10^{-2}$, $4.77\times10^{-2}$, and $1.12\times10^{-2}$, respectively. The distribution at $\mathrm{P}_1$ is dominated by a narrow peak, while that at $\mathrm{P}_3$ has two distinct peaks.

\begin{figure}[!htbp]
 \centering
 \begin{minipage}[t]{0.325\textwidth}
  \centering
  \includegraphics[width=\linewidth]{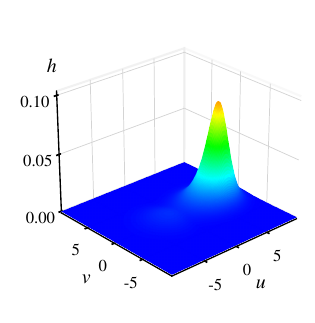}
  \panelcaption{(a) $\mathrm{P}_1$}
 \end{minipage}\hfill
 \begin{minipage}[t]{0.325\textwidth}
  \centering
  \includegraphics[width=\linewidth]{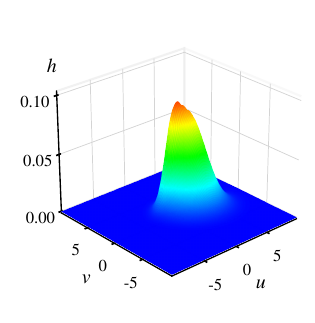}
  \panelcaption{(b) $\mathrm{P}_2$}
 \end{minipage}\hfill
 \begin{minipage}[t]{0.325\textwidth}
  \centering
  \includegraphics[width=\linewidth]{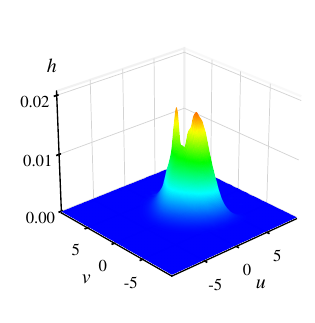}
  \panelcaption{(c) $\mathrm{P}_3$}
 \end{minipage}
 \par\medskip
 \begin{minipage}[t]{0.325\textwidth}
  \centering
  \includegraphics[width=\linewidth]{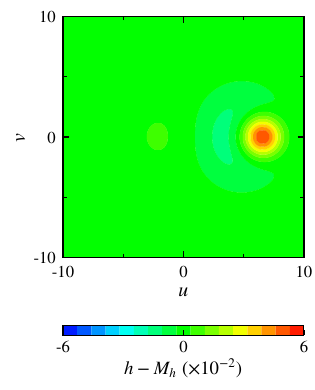}
  \panelcaption{(d) $\mathrm{P}_1$}
 \end{minipage}\hfill
 \begin{minipage}[t]{0.325\textwidth}
  \centering
  \includegraphics[width=\linewidth]{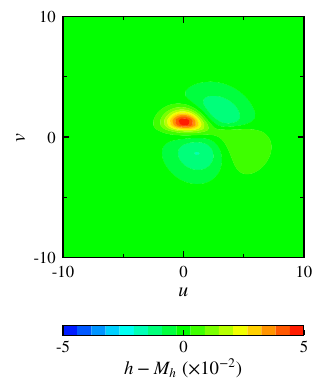}
  \panelcaption{(e) $\mathrm{P}_2$}
 \end{minipage}\hfill
 \begin{minipage}[t]{0.325\textwidth}
  \centering
  \includegraphics[width=\linewidth]{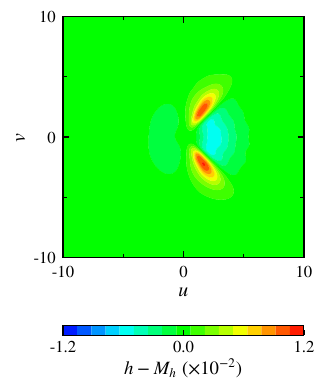}
  \panelcaption{(f) $\mathrm{P}_3$}
 \end{minipage}
 \caption{$h(u,v)$ (top) and $h-M_h$ (bottom) at the three locations marked in Fig.~\ref{fig:square-adaptation}(a) for the square cylinder.}
 \label{fig:square-distributions}
\end{figure}
\FloatBarrier

\begin{samepage}
The convergence curves in Fig.~\ref{fig:square-convergence} show that $R_{\Delta\rho}<10^{-4}$ is reached after $28$ iterations on the fine mesh.\par
\end{samepage}

\begin{figure}[!htbp]
 \centering
 \begin{minipage}[t]{0.485\textwidth}
  \centering
  \includegraphics[width=\linewidth]{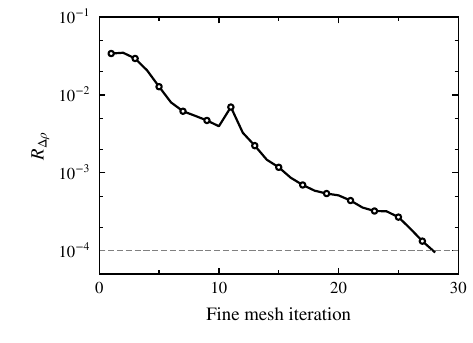}
  \panelcaption{(a) Iteration history}
 \end{minipage}\hfill
 \begin{minipage}[t]{0.485\textwidth}
  \centering
  \includegraphics[width=\linewidth]{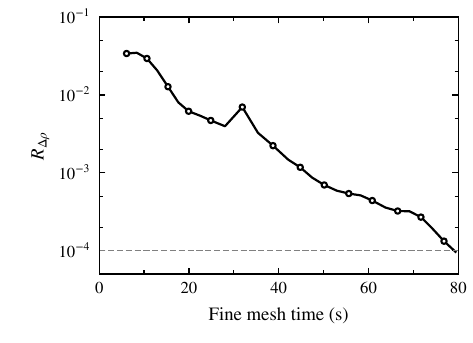}
  \panelcaption{(b) Time history}
 \end{minipage}
 \caption{$R_{\Delta\rho}$ during the fine mesh iteration for the square cylinder at $\mathrm{Ma}=5$, $\mathrm{Kn}=0.1$. Time in (b) is measured from the start of the fine mesh stage. The dashed line indicates $10^{-4}$.}
 \label{fig:square-convergence}
\end{figure}

\FloatBarrier
Table~\ref{tab:cost} uses the same definitions of stepwise and overall speedup as Table~\ref{tab:cavity-cost}. Accelerated NCIUGKS refers to the method that uses multi step macro prediction based on separating the NS part from the high order residual in the macroscopic equations, conservative Anderson acceleration, and a coarse mesh warm start. Iteration costs refer to the fine mesh. In the ``Iterations'' row and ``Accelerated NCIUGKS'' column of Table~\ref{tab:cost}, the entry $100+40+28$ gives the iteration counts on the $34\times17$, $77\times42$, and $154\times84$ meshes, respectively. Each coarse stage uses the prescribed iteration count and reduces $R_{\Delta\rho}$ by more than three orders of magnitude before transfer. The accelerated total includes initialization and all three mesh stages.

Each iteration costs $23.00$~s for full nodal IUGKS and $1.70$~s with adaptive kinetic DOF. The same $257$ iterations then take $436.9$~s ($7.28$~min) instead of $5\,911.0$~s ($1.64$~h), giving a $13.53$-fold speedup. The accelerated method provides a further $3.74$-fold speedup, reducing the total solution time to $116.8$~s and bringing the overall speedup to $50.61$.

\begin{table}[!htbp]
 \centering
 \caption{Computational cost and speedup for flow past the square cylinder at $\mathrm{Ma}=5$ and $\mathrm{Kn}=0.1$.}
 \label{tab:cost}
 \small
 \renewcommand{\arraystretch}{1.18}
 \setlength{\tabcolsep}{4pt}
 \begin{tabular*}{\textwidth}{@{\extracolsep{\fill}}lccc@{}}
 \toprule
 & Full nodal IUGKS & NCIUGKS & Accelerated NCIUGKS\\
 \midrule
 Total time
 & 1.64 h
 & 7.28 min
 & \textbf{116.8 s}\\[-2pt]
 & {\footnotesize($5\,911.0$ s)} & {\footnotesize($436.9$ s)} & \\[2pt]
 Stepwise speedup & --- & $13.53\times$ & $3.74\times$\\
 Overall speedup & $1.00\times$ & $13.53\times$ & $\boldsymbol{50.61\times}$\\
 \addlinespace[2pt]
 \midrule
 Time/iteration (s) & 23.00 & 1.70 & 2.80\\
 Iterations & 257 & 257 & $100+40+28$\\
 \bottomrule
 \end{tabular*}
\end{table}
\FloatBarrier
\clearpage

\section{Conclusions}\label{sec:conclusions}


A nonequilibrium capturing formulation for UGKS has been developed to mitigate the high computational cost associated with velocity space discretization in multiscale gas flows. In this approach, the local Maxwellian is evaluated analytically, while the nonequilibrium component is represented by an adaptive expansion in a discrete orthogonal basis. By evolving the retained DOF directly, the formulation reduces computational overhead while preserving the essential UGKS coupling of transport and collision processes. Furthermore, the number of DOFs can vary dynamically from cell to cell without requiring modifications to the underlying velocity grid. Both explicit and implicit formulations are provided to support unsteady and steady flow simulations.

When the full set of basis vectors is retained, the transport operators in the discrete basis recover their nodal counterparts for the same reconstructed distribution function. Local truncation of the basis preserves the conservative moments, and the implementation of a shared interface flux ensures the discrete balance of mass, momentum, and energy.

Numerical tests, including shock tube flow, shock structure, lid-driven cavity flow, and hypersonic flow, demonstrate close agreement with reference solutions. The method adaptively allocates kinetic DOFs. Minimal DOFs are required near equilibrium, whereas a larger number are retained as nonequilibrium effects become stronger or more widespread. The fraction of retained kinetic DOFs ranges from $1.07\%$ to $7.47\%$ in explicit calculations and from $2.13\%$ to $27.20\%$ in steady state calculations.

The explicit calculations achieve speedups ranging from $12.64$ to $58.50$ over the original UGKS. In implicit calculations, the reduction of nonequilibrium DOF alone decreases the computational cost per iteration by factors of $3.84$ to $33.87$. When combined with a coarse mesh warm start, multi step macro prediction, and conservative Anderson acceleration, the overall speedups in steady tests range from $13.7$ to approximately $2344$ relative to the original implicit UGKS. The most significant speedup is observed near the continuum limit, where a representation with fewer nonequilibrium DOF is effectively combined with rapid steady state convergence.

Adapting kinetic DOFs to local nonequilibrium thus provides an efficient and conservative framework for multiscale gas flow simulations within the UGKS architecture.

\section*{Acknowledgement}

The current research is supported by National Key R\&D Program of China (Grant Nos. 2022YFA1004500), National Science Foundation of China (92371107), and Hong Kong research grant council (16208324).

\clearpage
\appendix
\section{Conservative Anderson acceleration}\label{app:implicit}
The map $\mathcal G$ consists of one NCIUGKS update in Algorithm~\ref{alg:implicit-ugks}, including truncation, conservative correction~\eqref{eq:conservative-correction}, and optional multi step macro prediction. For $\bm z^j=(\W^j,\avec^j)$, the state before conservative Anderson acceleration~\cite{anderson1965} and its increment are
\begin{equation}
 \widetilde{\bm z}^{\,j}=\mathcal G(\bm z^j),\qquad
 \bm e^j=\widetilde{\bm z}^{\,j}-\bm z^j.
 \label{eq:anderson-map}
\end{equation}
Let $\mathcal H$ contain the recent iteration indices, including the current index $k$. The coefficients $\alpha_j$ are based on the weighted least squares problem
\begin{equation}
 \min_{\sum_{j\in\mathcal H}\alpha_j=1}
 \left\|\sum_{j\in\mathcal H}\alpha_j\bm e^j\right\|_*^2.
 \label{eq:anderson-regularized-fit}
\end{equation}
Here $\|\cdot\|_*^2$ weights the increments of $(\rho,\rho U,\rho V,\rho E)$ by $(1,U_{\mathrm{ref}}^{-2},U_{\mathrm{ref}}^{-2},E_{\mathrm{ref}}^{-2})$, where $U_{\mathrm{ref}}$ is the reference flow speed and $E_{\mathrm{ref}}=U_{\mathrm{ref}}^2/2+3\theta_{\mathrm{ref}}/2$, with $\theta_{\mathrm{ref}}=RT_{\mathrm{ref}}$. The $h$ and $b$ coefficients have relative weights $1$ and $\theta_{\mathrm{ref}}^{-2}$.
With a relaxation factor $\beta$, the accelerated state is
\begin{equation}
 \begin{aligned}
 \W^{\mathrm{AA}}
 &=\sum_{j\in\mathcal H}\alpha_j
   \bigl[(1-\beta)\W^j+\beta\widetilde{\W}^{\,j}\bigr],\\
 \avec^{\mathrm{AA}}
 &=\sum_{j\in\mathcal H}\alpha_j
   \bigl[(1-\beta)\avec^j+\beta\widetilde{\avec}^{\,j}\bigr].
 \end{aligned}
 \label{eq:anderson-state-combination}
\end{equation}
Here the superscript $\mathrm{AA}$ denotes the accelerated state.

At each iteration, unretained nonequilibrium DOF are set to zero. The coefficients $\avec_i^j$ are those at the end of the preceding iteration, after conservative correction, and satisfy $\Cmat\avec_i^j=\bm0$. The current conservative correction~\eqref{eq:conservative-correction} enforces $\Cmat\widetilde{\avec}_i^{\,j}=\bm0$. Hence,
\begin{equation}
 \Cmat\avec_i^{\mathrm{AA}}
 =\sum_{j\in\mathcal H}\alpha_j
   \bigl[(1-\beta)\Cmat\avec_i^j
         +\beta\Cmat\widetilde{\avec}_i^{\,j}\bigr]=\bm0.
 \label{eq:anderson-zero-moments}
\end{equation}
We call this conservative Anderson acceleration because the nonequilibrium part retains zero conservative moments. The Maxwellian is recomputed from $\W_i^{\mathrm{AA}}$.

The iteration history contains the current iterate and up to six previous iterates, with $\beta=1$.

\FloatBarrier

\section{Coarse mesh warm start}\label{app:warm-start}
The solution is interpolated from coarse cells $J$ ($\mathrm c$) to fine cells $i$ ($\mathrm f$):
\begin{equation}
 \W_i^{\mathrm f}=\sum_J\omega_{iJ}\W_J^{\mathrm c},
 \qquad
 \avec_i^{\mathrm f}=\sum_J\omega_{iJ}\avec_J^{\mathrm c}.
 \label{eq:warm-start-interpolation}
\end{equation}
The weights $\omega_{iJ}$ follow the interpolation in multigrid UGKS~\cite{zhu2017multigrid}, with linear reconstruction on the stretched mesh.
Since $\Cmat\avec_J^{\mathrm c}=\bm0$, interpolation preserves the zero conservative moments:
\begin{equation}
 \Cmat\avec_i^{\mathrm f}
 =\sum_J\omega_{iJ}\Cmat\avec_J^{\mathrm c}=\bm0.
 \label{eq:warm-start-zero-moments}
\end{equation}
The fine mesh Maxwellian is then evaluated from $\W_i^{\mathrm f}$.

\FloatBarrier

\section{Discrete basis expansion of the distribution function}\label{app:orthogonal-basis}
\setcounter{figure}{0}
\renewcommand{\theHfigure}{C.\arabic{figure}}
The tensor basis~\eqref{eq:tensor-basis} is formed from one dimensional orthonormal basis vectors. This appendix constructs these vectors by weighted Gram--Schmidt orthogonalization on each velocity quadrant.

\subsection{Discrete velocity space}\label{app:basis-discrete-space}
Let $N_u$ and $N_v$ be the numbers of nodes on each coordinate half range. The midpoint grid on $[-L_u,L_u]\times[-L_v,L_v]$ is
\begin{equation}
 \begin{aligned}
 u_k&=-L_u+(k+\tfrac12)\Delta u,
 &\Delta u&=\frac{L_u}{N_u},
 &0\le k<2N_u,\\
 v_l&=-L_v+(l+\tfrac12)\Delta v,
 &\Delta v&=\frac{L_v}{N_v},
 &0\le l<2N_v.
 \end{aligned}
 \label{eq:app-full-midpoint-grid}
\end{equation}
Its nodes and weights are $\uvec_\nu=(u_k,v_l)^{\mathsf T}$ and $w_\nu=\Delta u\Delta v$, with $N_{\mathrm{vel}}=4N_uN_v$. The quadrant index $q\in\{1,2,3,4\}$ follows the counterclockwise order:
\[
 \begin{array}{c|cccc}
 q & 1 & 2 & 3 & 4\\ \hline
  \operatorname{sign}(u) & + & - & - & +\\
 \operatorname{sign}(v) & + & + & - & -
 \end{array}
\]
Within $Q_q$, the coordinate nodes $u_k^q$ and $v_l^q$ are ordered increasingly, with $0\le k<N_u^q=N_u$ and $0\le l<N_v^q=N_v$. Their weights are $w_{u,k}^q=\Delta u$ and $w_{v,l}^q=\Delta v$. For either coordinate $\xi\in\{u,v\}$, the weighted inner product and norm are
\begin{equation}
 \langle\mathbf z,\mathbf y\rangle_{w_\xi^q}
 =\sum_{k=0}^{N_\xi^q-1}w_{\xi,k}^q z_k y_k,
 \qquad
 \|\mathbf z\|_{w_\xi^q}^2
 =\langle\mathbf z,\mathbf z\rangle_{w_\xi^q}.
 \label{eq:app-vector-inner-product}
\end{equation}

Figure~\ref{fig:velocity-grid-indices} shows the four quadrants and the nodes in $Q_1$.
\begin{figure}[H]
 \centering
 \begin{minipage}[t]{0.48\textwidth}
  \centering
  \includegraphics[width=0.80\linewidth]{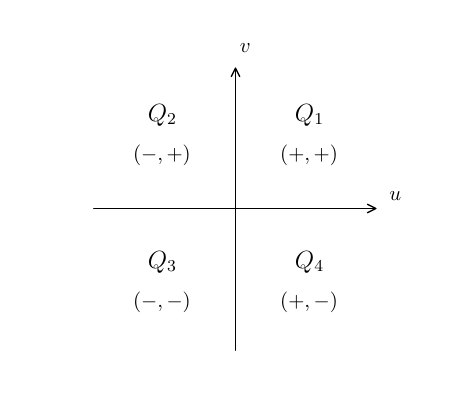}
  \panelcaption{(a) Velocity quadrants}
 \end{minipage}\hfill
 \begin{minipage}[t]{0.48\textwidth}
  \centering
  \includegraphics[width=0.80\linewidth]{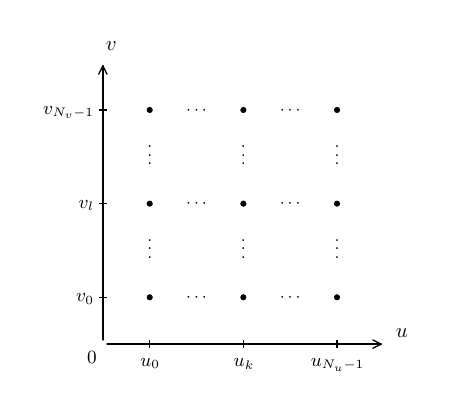}
  \panelcaption{(b) Nodes in $Q_1$}
 \end{minipage}
 \caption{Velocity space discretization: (a) quadrants $Q_1$--$Q_4$ and the signs of $(u,v)$; (b) nodes $u_k,v_l$ in $Q_1$, with the quadrant superscript omitted. Ellipses indicate omitted nodes.}
 \label{fig:velocity-grid-indices}
\end{figure}

\clearpage

\subsection{Basis construction}\label{app:basis-one-dimensional}\label{app:basis-two-dimensional}
Fix a quadrant $q$ and consider the $u$ direction. With $N=N_u^q$ and $\mathbf p_n=\mathbf p_{u,n}^q$, the velocity nodes are mapped to $[-1,1]$ by
\begin{equation}
 \widetilde u_k^q=\frac{u_k^q-c_u^q}{d_u^q},
 \qquad
 c_u^q=\frac{u_0^q+u_{N-1}^q}{2},
 \qquad
 d_u^q=\frac{u_{N-1}^q-u_0^q}{2}.
 \label{eq:app-normalized-velocity}
\end{equation}
The monomial vectors are defined by
\begin{equation}
 \mathbf e_n=\bigl((\widetilde u_0^q)^n,\ldots,(\widetilde u_{N-1}^q)^n\bigr)^{\mathsf T},
 \qquad 0\le n<N.
 \label{eq:app-monomial-definition}
\end{equation}
The first four are
\begin{equation}
 \mathbf e_0=\begin{pmatrix}1\\1\\\vdots\\1\end{pmatrix},\quad
 \mathbf e_1=\begin{pmatrix}\widetilde u_0^q\\\widetilde u_1^q\\\vdots\\\widetilde u_{N-1}^q\end{pmatrix},\quad
 \mathbf e_2=\begin{pmatrix}(\widetilde u_0^q)^2\\(\widetilde u_1^q)^2\\\vdots\\(\widetilde u_{N-1}^q)^2\end{pmatrix},\quad
 \mathbf e_3=\begin{pmatrix}(\widetilde u_0^q)^3\\(\widetilde u_1^q)^3\\\vdots\\(\widetilde u_{N-1}^q)^3\end{pmatrix}.
 \label{eq:app-monomial-vectors}
\end{equation}
The monomial vectors of degrees below $N$ are linearly independent, but they are generally not orthogonal. Gram--Schmidt subtracts projections onto the preceding basis directions and normalizes the orthogonal part at each step.

The constant vector is normalized directly:
\begin{equation}
 \mathbf p_0=\frac{\mathbf e_0}{\|\mathbf e_0\|_{w_u^q}}.
 \label{eq:first-basis-vectors}
\end{equation}
For degree one, remove the projection onto $\mathbf p_0$:
\begin{equation}
 \mathbf q_1=\mathbf e_1-\langle\mathbf e_1,\mathbf p_0\rangle_{w_u^q}\mathbf p_0,
 \qquad
 \mathbf p_1=\frac{\mathbf q_1}{\|\mathbf q_1\|_{w_u^q}}.
 \label{eq:app-linear-basis}
\end{equation}
For degrees two and three, remove the projections onto all preceding directions:
\begin{equation}
 \mathbf q_2=\mathbf e_2
 -\langle\mathbf e_2,\mathbf p_0\rangle_{w_u^q}\mathbf p_0
 -\langle\mathbf e_2,\mathbf p_1\rangle_{w_u^q}\mathbf p_1,
 \qquad
 \mathbf p_2=\frac{\mathbf q_2}{\|\mathbf q_2\|_{w_u^q}},
 \label{eq:app-quadratic-basis}
\end{equation}
\begin{equation}
 \mathbf q_3=\mathbf e_3
 -\langle\mathbf e_3,\mathbf p_0\rangle_{w_u^q}\mathbf p_0
 -\langle\mathbf e_3,\mathbf p_1\rangle_{w_u^q}\mathbf p_1
 -\langle\mathbf e_3,\mathbf p_2\rangle_{w_u^q}\mathbf p_2,
 \qquad
 \mathbf p_3=\frac{\mathbf q_3}{\|\mathbf q_3\|_{w_u^q}}.
 \label{eq:app-cubic-basis}
\end{equation}
In general, for $0\le n<N$,
\begin{equation}
 \mathbf q_n=\mathbf e_n-\sum_{j=0}^{n-1}
 \langle\mathbf e_n,\mathbf p_j\rangle_{w_u^q}\mathbf p_j,
 \qquad
 \mathbf p_n=\frac{\mathbf q_n}{\|\mathbf q_n\|_{w_u^q}}.
 \label{eq:app-gram-schmidt}
\end{equation}
The resulting vectors satisfy $\langle\mathbf p_m,\mathbf p_n\rangle_{w_u^q}=\delta_{mn}$. The $v$ basis follows the same construction with its own normalized nodes and weights.

\clearpage
\FloatBarrier

\section{Distribution function in a one dimensional shock}\label{app:shock-distribution}
\setcounter{figure}{0}
\renewcommand{\theHfigure}{D.\arabic{figure}}
The one dimensional calculation uses the streamwise grid and boundary states of Section~\ref{sec:shock}, $100$ midpoint velocity nodes on $[-15,15]$, and $\varepsilon^{\mathrm{ref}}=10^{-5}$ for both reduced components. At $t=250$, we examine the cell average distribution function in the cell centered at $x_i=0.75$.

In this one dimensional reduction, $h(u)$ and $b(u)$ are the integrals of $f$ and $(v^2+\xi^2)f$, respectively, over both transverse velocities $(v,\xi)\in\mathbb R^2$. Their cell equilibria are
\[
 M_h(u)=\frac{\rho}{\sqrt{2\pi\theta}}
        \exp\!\left[-\frac{(u-U_x)^2}{2\theta}\right],
 \qquad M_b(u)=2\theta M_h(u).
\]
The curves in Fig.~\ref{fig:shock-distribution} are reconstructed from successive partial sums of the same nonequilibrium expansion, with $M_h(u)$ held fixed. The pair $(\mathbf p_{u,m}^{+},\mathbf p_{u,n}^{-})$ denotes the highest-degree basis vectors retained on the positive and negative velocity halves, respectively; their degree pair is $(m,n)$. The final retained expansion has $(m,n)=(34,9)$.

\begin{figure}[H]
 \centering
 \includegraphics[width=0.635\textwidth]{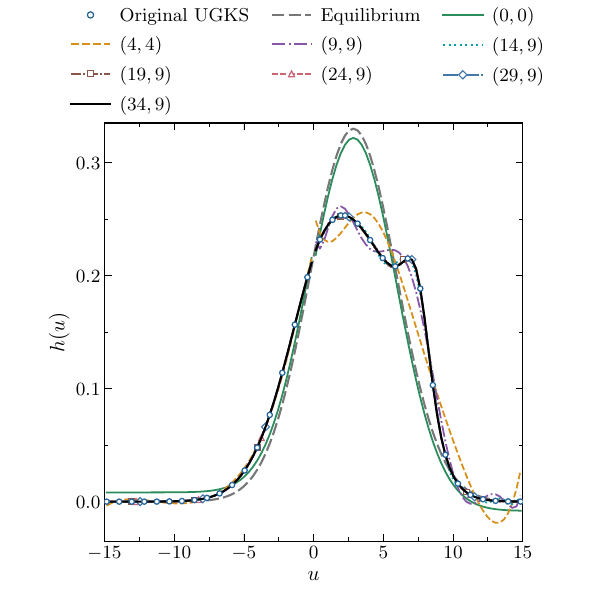}
 \caption{Discrete basis approximations of $h(u)$ in the one dimensional $\mathrm{Ma}=8$ shock. Circles denote original UGKS; the dashed gray curve is $M_h(u)$.}
 \label{fig:shock-distribution}
\end{figure}

The $(0,0)$ and $(4,4)$ sums miss the bimodal structure. The $(29,9)$ expansion recovers the grid locations of both peaks and the intervening minimum, with a maximum absolute nodal error of $6.44\times10^{-5}$ relative to the original UGKS solution.

\clearpage
\FloatBarrier

\begingroup
\hyphenpenalty=200
\interlinepenalty=10000
\bibliographystyle{elsarticle-num}
\bibliography{references}
\endgroup
\end{document}